\documentclass[11pt,a4paper]{article}
\usepackage[utf8]{inputenc}
\usepackage[T1]{fontenc}
\usepackage{lmodern}
\usepackage[margin=1in]{geometry}
\usepackage{amsmath,amssymb,amsthm,bm}
\usepackage{graphicx,float,subcaption}
\usepackage{microtype}
\usepackage[numbers,sort&compress]{natbib}
\usepackage{xurl}
\usepackage[hidelinks,unicode]{hyperref}
\usepackage{bookmark}
\newcommand{\bx}{\boldsymbol{x}}

\newtheorem{remark}{Remark}

\newtheorem{proposition}{Proposition}[section]

\date{}
\title{Multiscale modeling of host–pathogen interactions and mucociliary clearance during non-tuberculous mycobacterial pulmonary infection}
\author{\parbox{0.97\textwidth}{\centering
\normalsize Jindong Wang$^{1,\dagger}$,
Kali Konstantinopoulos$^{2,3,\dagger}$,
Po-Chun Kuo$^{4}$,
Mingchao Cai$^{5}$,
Ning Wei$^{4}$,
Elsje Pienaar$^{2,6,7}$ and
Wenrui Hao$^{8}$\\[8pt]
{\footnotesize
$^{1}$Mathematical Institute, University of Oxford, Oxford, UK\\
$^{2}$Weldon School of Biomedical Engineering, Purdue University, West Lafayette, IN, USA \\
$^{3}$Indiana University School of Medicine, Indianapolis, IN, USA \\
$^{4}$Department of Mathematics, Purdue University,
West Lafayette, IN, USA\\
$^{5}$Department of Mathematics, Morgan State University,
Baltimore, MD, USA\\
$^{6}$Regenstrief Center for Healthcare Engineering, Purdue University, West Lafayette, IN, USA\\
$^{7}$Stellenbosch Institute for Advanced Study (STIAS), Stellenbosch, South Africa\\
$^{8}$Department of Mathematics, The Pennsylvania State University,
University Park, PA, USA\\[2pt]
$^{\dagger}$These authors contributed equally to this work.\\[5pt]
Corresponding author: Wenrui Hao (\href{mailto:wxh64@psu.edu}{\nolinkurl{wxh64@psu.edu}}).
}}}
\hypersetup{pdftitle={Multiscale modeling of host–pathogen interactions and mucociliary clearance during non-tuberculous mycobacterial pulmonary infection},pdfauthor={Jindong Wang; Kali Konstantinopoulos; Po-Chun Kuo; Mingchao Cai; Ning Wei; Elsje Pienaar; Wenrui Hao}}
\begin{document}
\maketitle
\begin{abstract}
Non-tuberculous mycobacterial (NTM) infections are a clinical challenge in cystic fibrosis (CF), where impaired mucociliary clearance and altered mucus rheology promote bacterial colonization despite host immune responses. Understanding how bacterial growth, immune cell dynamics, and mucus transport regulate infection progression is difficult because these processes interact across spatial and temporal scales. We develop a computational framework bridging a mechanistic agent-based model (ABM) of NTM infection with a spatially resolved partial differential equation (PDE) model. The PDE model couples bacterial proliferation, macrophage chemotaxis, immune-mediated clearance, mucus degradation, and viscoelastic transport in a two-compartment geometry representing mucus and lung tissue. Parameters are calibrated using data from the established ABM, yielding an efficient continuum representation while preserving cellular mechanisms.

The PDE model reproduces bacterial and macrophage dynamics and enables analyses of mucus-related mechanisms and therapies. Sensitivity analysis identifies mucus viscosity, bacterial diffusivity, and macrophage mobility as key regulators of bacterial persistence through mucociliary clearance and tissue colonization. Simulations reveal nonlinear effects of mucolytic therapies: enhanced clearance reduces bacterial burden in mucus, whereas excessive viscosity reduction may promote migration into lung tissue, supporting combination with antibacterial treatment. This framework provides a quantitative platform for studying pulmonary infections, evaluating therapies, and developing patient-specific digital twins.

\end{abstract}

\noindent\textbf{Keywords:} non-tuberculous mycobacteria, cystic fibrosis, multiscale modelling, agent-based model, partial differential equations, mucociliary clearance, host--pathogen interactions, computational biology

\section{Introduction}

Non-tuberculous mycobacterial (NTM) infections are an increasingly important cause of morbidity in individuals with cystic fibrosis (CF), where chronic airway disease and impaired host defenses promote persistent bacterial colonization and infection \cite{Olivier2003-xw}. Although advances in antimicrobial therapies have improved patient outcomes, pulmonary NTM infections remain difficult to eradicate because bacterial persistence is influenced not only by microbial replication but also by the complex interactions between immune responses and the airway microenvironment. In particular, effective host defense depends on both biological mechanisms, including innate and adaptive immune responses, and physical mechanisms such as mucociliary clearance, which continuously removes pathogens from the airway surface. Understanding how immune responses and mucus clearance interact to determine infection outcomes is therefore essential for developing more effective therapeutic strategies.

A defining characteristic of pulmonary NTM infection is its multiscale nature. Bacterial replication and immune signaling occur at the molecular and cellular levels, host--pathogen interactions evolve within lung tissue, and mucus transport and mucociliary clearance regulate pathogen removal across the entire airway. Cystic fibrosis disrupts host defense at each of these scales by promoting bacterial adaptation \cite{Planet2022adaptation}, impairing macrophage antibacterial function \cite{Jaganathan2022emerging, LEVEQUE2017impact, Keren2020macrophage}, and reducing mucociliary clearance due to abnormal mucus properties \cite{Laube2020-hv}. Consequently, disease progression emerges from the interplay of bacterial, immune, and transport processes that span multiple spatial and temporal scales. Disentangling the individual contributions of these mechanisms through experimental or clinical studies alone remains challenging, motivating the development of computational models capable of integrating these interacting processes.

Mathematical and computational modeling is an important tool for investigating pulmonary host-pathogen interactions. Previous studies have developed models incorporating airway structure \cite{Joslyn2022-tx, Pitcher2020-to}, whole-host immune responses \cite{Flores-Garza2022-pb, Zhang2022-ng}, and spatial infection dynamics. Among these approaches, agent-based models (ABMs) provide detailed mechanistic descriptions of stochastic interactions between bacteria and immune cells \cite{weathered2022role, Weathered2023-gm, Hult2021-in, Millar2020-oi}, while partial differential equation (PDE) models efficiently describe spatial transport, bacterial spread, and mucus dynamics over tissue scales \cite{hao2016modeling, hao2014mathematical, Bartlett2023-jt, Xu2025-va}. These complementary modeling paradigms, however, are typically developed independently. ABMs capture rich cellular-scale biology but are computationally demanding for large-scale parameter exploration and treatment optimization, whereas PDE models are computationally efficient but often rely on phenomenological parameterizations that are only loosely connected to the underlying cellular mechanisms. Bridging these two approaches would enable predictive models that retain mechanistic fidelity while remaining computationally tractable.

In this work, we address this challenge by developing a multiscale continuum framework for early pulmonary NTM infection in cystic fibrosis that is systematically calibrated using data generated from previously established agent-based models \cite{weathered2022role, Weathered2023-ie, Weathered2023-gm}. The proposed PDE model couples bacterial proliferation, macrophage recruitment and immune-mediated clearance, mucus transport, and bacterial exchange between the mucus layer and underlying lung tissue within a unified reaction-diffusion-advection framework. By calibrating the continuum model to mechanistic ABM simulations, we establish a direct link between stochastic cellular interactions and tissue-scale transport dynamics, yielding a computationally efficient surrogate that preserves the essential biological mechanisms governing infection progression but adds the impact of mucus clearance.

This study makes three principal contributions. First, we develop a spatially resolved PDE model that integrates bacterial dynamics, macrophage-mediated immunity, mucus transport, and compartmental exchange in cystic fibrosis airways. Second, we establish a systematic calibration strategy that bridges mechanistic agent-based simulations with continuum modeling, providing an efficient multi-fidelity framework for studying pulmonary infections. Third, using the calibrated model, we identify the dominant biological and transport mechanisms controlling bacterial persistence through global sensitivity analysis and investigate the effects of mucus-targeted and antibacterial therapies on infection outcomes. Collectively, these results demonstrate how the interaction between mucociliary clearance and host immunity shapes pulmonary NTM infection and provide a quantitative framework for evaluating therapeutic strategies and developing future patient-specific computational models.

\section{Methods}
To describe the coupled interactions among bacterial infection, immune response, and mucus transport, we formulate a multiphase PDE system on a two-region airway geometry. The model incorporates bacterial proliferation, macrophage chemotaxis, mucus degradation, and mucus-driven transport through a coupled advection--diffusion--reaction framework.
\subsection{PDE Model}

Let $\Omega=\Omega_1\cup\Omega_2\subset\mathbb{R}^2$ denote the computational domain shown in Figure~\ref{fig:area}, where $\Omega_1$ represents the mucus region and $\Omega_2$ represents the lung parenchyma. The boundary $\Gamma_0$ denotes the upper air-mucus interface, $\Gamma_2$ denotes the lower tissue boundary, and $\Gamma_1$ represents the interface separating the mucus and tissue regions.

To distinguish the two subdomains, we introduce the characteristic function
\begin{equation}
\chi_{\Omega_2}(\bx)=
\begin{cases}
0, & \bx\in\Omega_1,\\
1, & \bx\in\Omega_2.
\end{cases}
\end{equation}

\begin{figure}[H]
    \centering
    \includegraphics[width=0.8\linewidth,height=0.76\textheight,keepaspectratio]{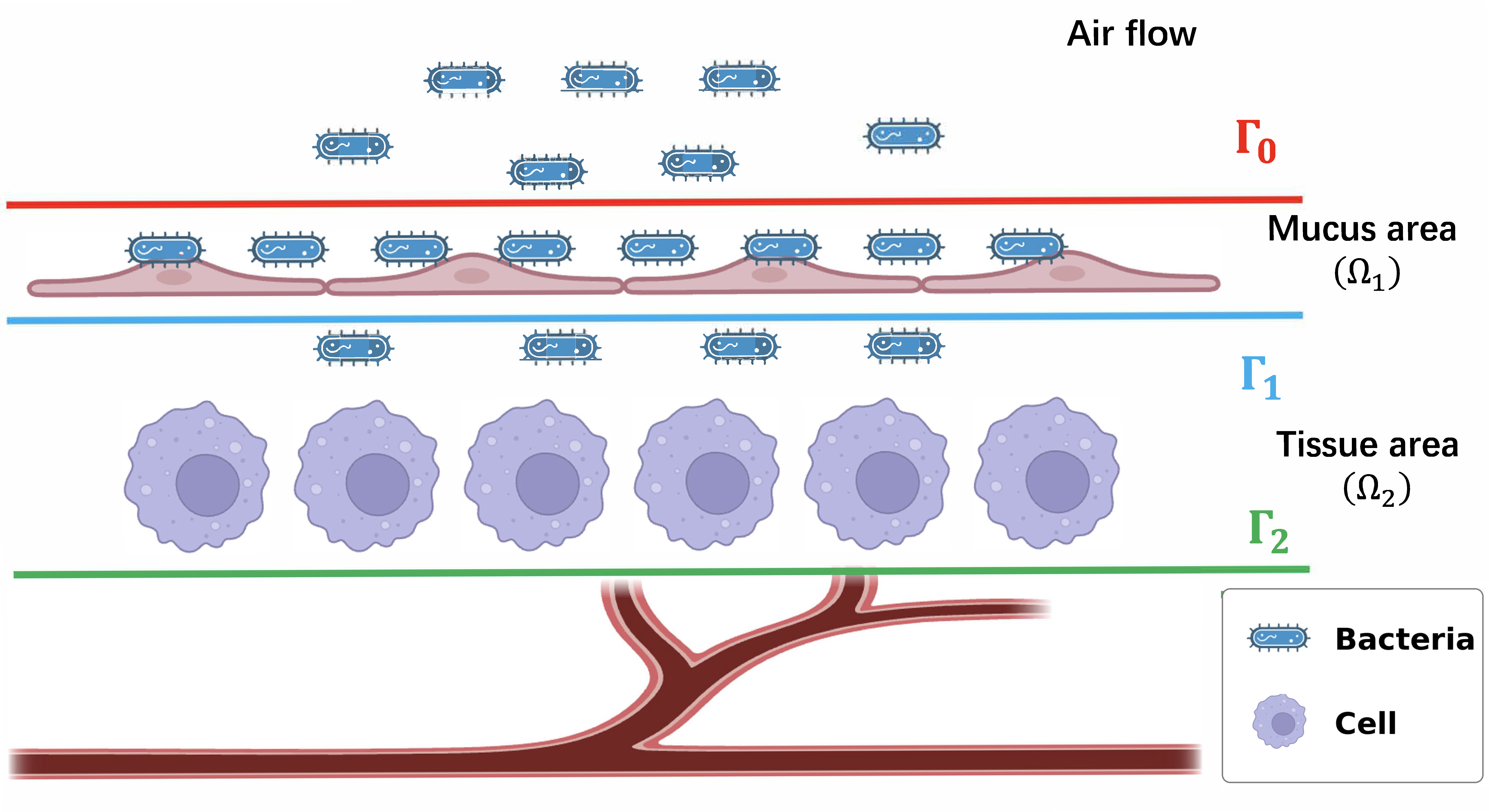}
    \caption{Schematic representation of a localized pulmonary bacterial infection in cystic fibrosis airways. The computational domain $\Omega=\Omega_1\cup\Omega_2$ consists of the mucus area $\Omega_1$, and the lung tissue region $\Omega_2$. The upper boundary $\Gamma_0$ represents the mucus–air interface exposed to airflow, while $\Gamma_1$ denotes the interface between the mucus and tissue regions. The lower boundary $\Gamma_2$ corresponds to the basal tissue boundary. Bacteria and macrophages are present in both regions and can transition across the interface $\Gamma_1$. This spatial configuration illustrates the two-region geometry used in the PDE model.}
    \label{fig:area}
\end{figure}

The primary state variables in the model are the bacterial density $B(\bx,t)$, macrophage density $M(\bx,t)$, and mucin density $m(\bx,t)$. Mucins are glycoproteins that contribute to mucus viscosity, and we therefore represent mucus properties by tracking mucin concentration. The units of these variables and the model parameters are provided in the Supplementary Material.

Based on this geometric configuration, we next introduce the governing equations for each biological component in the model.
\subsubsection{Bacteria dynamics}
The bacterial population serves as the primary infectious component and is influenced by transport, proliferation, and immune-mediated clearance. Bacteria are transported by the mucus velocity field and diffuse heterogeneously in $\Omega$:
\begin{equation}\label{eq:pde-B}
\begin{aligned}
\displaystyle
\partial_t B+\nabla\cdot(\bm uB)-\nabla\cdot\!\big(D_B\nabla B\big)
=\\\lambda_B B\Big(1-\frac{B}{K_B}\Big)
+d_{Mb}\frac{B}{B+K_{Bi}}\,M
-d_B\,MB\,\chi_{\Omega_2}
\end{aligned}
\end{equation}
where the diffusion coefficient is piecewise constant:
\begin{equation}\label{eq:DB}
D_B(\bx)=
\begin{cases}
D_1 (\text{or } D_{B_1}), & \bx\in\Omega_1,\\
D_2 (\text{or } D_{B_2}), & \bx\in\Omega_2,
\end{cases}
\qquad \text{with } D_2<D_1.
\end{equation}
Here $\bm u$ denotes the velocity field, $\lambda_B$ is the bacterial growth rate, $K_B$ is the bacterial carrying capacity, $d_{Mb}$ is the maximum macrophage burst rate, $K_{Bi}$ is the half-saturation constant for intracellular bacteria driving macrophage bursting, and $d_B$ is the clearance rate of bacteria due to macrophage phagocytosis.

The bacterial dynamics therefore couple microbial growth with spatial transport and macrophage-mediated interactions across the heterogeneous airway environment.
\subsubsection{Macrophage dynamics}
To capture the innate immune response, macrophages are modeled as migrating immune cells that respond to bacterial gradients and interact directly with bacteria. Macrophages are advected by $\bm u$, diffuse, and migrate up bacterial gradients via chemotaxis:
\begin{equation}\label{eq:pde-M}
\begin{aligned}
\displaystyle
\partial_t M+\nabla\cdot(\bm uM)-\nabla\cdot(D_M\nabla M)
+\delta\,\nabla\cdot\!\big(M\nabla B\big)
=\\-d_M M
-d_{Mb}\frac{B}{B+K_{Bi}}\,M.
\end{aligned}
\end{equation}
Here $D_M$ denotes the macrophage diffusion coefficient, $d_M$ is the macrophage death rate, and $\delta$ is the chemotaxis sensitivity.

This formulation allows macrophages to both migrate toward infection sites and regulate bacterial populations through localized immune activity.
\subsubsection{Mucus viscosity dynamics}
In addition to cellular dynamics, mucus transport and degradation play a central role in regulating bacterial clearance within the airway. Mucus viscosity evolves only in the mucus region $\Omega_1$ and is represented by the concentration of mucin $(m)$:
\begin{equation}\label{eq:pde-m}
\begin{aligned}
\displaystyle
\partial_t m+\nabla\cdot(\bm um)-\nabla\cdot(D_m\nabla m)=-d_m m.
\end{aligned}
\end{equation}
Here $d_m$ denotes the mucin degradation rate constant and $D_m$ denotes the mucin diffusion coefficient.

The mucin equation therefore informs the spatial medium through which both cellular transport and immune interactions occur.
\subsubsection{Mucus velocity}
To characterize mucus motion within the airway, we further introduce a viscoelastic fluid description for the mucus layer. We model the mucus as a viscoelastic fluid by adopting a Maxwell constitutive law coupled with the Stokes equations. The Maxwell model relates the extra-stress tensor $\boldsymbol{\tau}$ to the velocity gradient $\nabla \bm u$. Under the assumptions of incompressibility and negligible inertia, the governing equations read
\begin{equation}
\left\{
\begin{aligned}
\boldsymbol{\tau} + \lambda \frac{{\rm D}\boldsymbol{\tau}}{{\rm D}t} &= 2\eta_1 \boldsymbol{\varepsilon}(\bm u),\\
\nabla p &= \nabla \cdot \boldsymbol{\tau},\\
\nabla \cdot \bm u &= 0,
\end{aligned}
\right.
\label{Maxwell-Stokes}
\end{equation}
where $\lambda>0$ is the relaxation time, $\eta_1>0$ is the elastic viscosity, and $\boldsymbol{\varepsilon}(\bm u)={1\over2}(\nabla \bm u+\nabla \bm u^T)$ represents the symmetric part of the velocity gradient.

In two dimensions, we write the extra-stress tensor and the velocity field in component form as
\[
    \bm{\tau}=
    \begin{bmatrix}
        \tau_{11} & \tau_{12}\\
        \tau_{12} & \tau_{22}
    \end{bmatrix},
    \qquad
   \bm u=
    \begin{bmatrix}
        u_{1}\\
        u_{2}
    \end{bmatrix}.
\]
Under the system \eqref{Maxwell-Stokes}, the following structural property holds.

Under simplifying geometric assumptions, the Maxwell--Stokes system admits a reduced velocity structure that can be incorporated into the transport equations.

\begin{proposition}\label{clm:quadratic-u}
Assume that $\bm{\tau}=\bm{\tau}(y,t)$ and $\bm u=\bm u(y,t)$ are independent of $x$ and satisfy the Maxwell--Stokes system \eqref{Maxwell-Stokes}. If, in addition, $u_2(y,t)\equiv 0$, then $u_1(y,t)$ is a quadratic polynomial in $y$, with coefficients depending on time.
\end{proposition}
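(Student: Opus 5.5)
The plan is to write the Maxwell--Stokes system \eqref{Maxwell-Stokes} in components under the ansatz $\bm u=(u_1(y,t),0)$, $\bm\tau=\bm\tau(y,t)$, extract from the momentum balance that the shear stress $\tau_{12}$ is affine in $y$, and then use the $(1,2)$ component of the constitutive law to express $\partial_y u_1$ through $\tau_{12}$. One integration in $y$ then gives the quadratic form of $u_1$.

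First, incompressibility is automatic: $\nabla\cdot\bm u=\partial_x u_1+\partial_y u_2=0$, since $u_1$ does not depend on $x$ and $u_2\equiv0$. The only nonzero velocity gradient entry is $\partial_y u_1$. Hence $\boldsymbol\varepsilon(\bm u)$ has $\varepsilon_{11}=\varepsilon_{22}=0$ and $\varepsilon_{12}=\tfrac12\partial_y u_1$.

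Second, I would treat the momentum equation $\nabla p=\nabla\cdot\bm\tau$. With $\partial_x\tau_{ij}=0$ it reads
\[
\partial_x p=\partial_y\tau_{12}(y,t),\qquad \partial_y p=\partial_y\tau_{22}(y,t).
\]
The right-hand side of the second equation is independent of $x$, so $\partial_x\partial_y p=0$. Equating mixed partials (assuming $p$ is $C^2$) gives $\partial_y^2\tau_{12}=0$. Therefore
\[
\tau_{12}(y,t)=a(t)+b(t)\,y .
\]
Here $b(t)=\partial_x p$ is the imposed streamwise pressure gradient, which is uniform in space.

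Third, I would use the constitutive law. Because $\bm u\cdot\nabla\bm\tau=u_1\partial_x\bm\tau=0$, the material derivative reduces to $\partial_t\bm\tau$. The $(1,2)$ component is then
\[
\tau_{12}+\lambda\partial_t\tau_{12}=\eta_1\,\partial_y u_1 .
\]
This gives
\[
\partial_y u_1=\eta_1^{-1}\big[(a+\lambda a')+(b+\lambda b')\,y\big],
\]
which is affine in $y$. Integrating in $y$ yields
\[
u_1(y,t)=c_0(t)+c_1(t)\,y+c_2(t)\,y^2,
\]
with $c_2=(b+\lambda b')/(2\eta_1)$ and $c_1=(a+\lambda a')/\eta_1$, which proves the claim.

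The main obstacle is interpreting $\mathrm{D}/\mathrm{D}t$. If it is meant as an upper-convected derivative rather than the plain material derivative, the $(1,2)$ equation acquires the extra term $-\partial_y u_1\,\tau_{22}$. I would handle this with the $(2,2)$ component. There $(\nabla\bm u)_{2j}=0$ and $\varepsilon_{22}=0$, so it reduces to $\tau_{22}+\lambda\partial_t\tau_{22}=0$, which gives $\tau_{22}(y,t)=\tau_{22}(y,0)e^{-t/\lambda}$. Under a stress-free (or $\tau_{22}\equiv0$) initial state, $\tau_{22}$ vanishes for all time and the argument above goes through unchanged. Otherwise I would state this as the additional assumption needed for the upper-convected case.
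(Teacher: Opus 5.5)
Your proof is correct and follows essentially the same route as the paper. The paper gets $\partial_{yy}\tau_{12}=0$ from $\nabla\times(\nabla p)=0$, which is equivalent to your equality of mixed partials; it reduces $\mathrm{D}\bm{\tau}/\mathrm{D}t$ to $\partial_t\bm{\tau}$ using $u_2=0$ and $\partial_x\bm{\tau}=0$, and integrates the $(1,2)$ component of the constitutive law once in $y$. Your upper-convected remark is a correct extension the paper does not need, since it takes $\mathrm{D}/\mathrm{D}t$ to be the plain material derivative.
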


\begin{proof}
Since $\nabla p=\nabla\cdot\bm{\tau}$ and $p$ is a scalar function, we have
\[
\nabla\times(\nabla\cdot\bm{\tau})=\nabla\times(\nabla p)=0.
\]
Because $\bm{\tau}$ is independent of $x$,
\[
\nabla\cdot\bm{\tau}
=
\begin{bmatrix}
\partial_x\tau_{11}+\partial_y\tau_{12}\\
\partial_x\tau_{12}+\partial_y\tau_{22}
\end{bmatrix}
=
\begin{bmatrix}
\partial_y\tau_{12}\\
\partial_y\tau_{22}
\end{bmatrix}.
\]
Therefore,
\[
\nabla\times(\nabla\cdot\bm{\tau})
=
\partial_x(\partial_y\tau_{22})
-
\partial_y(\partial_y\tau_{12})
=
-\partial_{yy}\tau_{12}.
\]
It follows that $\partial_{yy}\tau_{12}=0$, and hence
\[
\tau_{12}(y,t)=a(t)y+b(t)
\]
for some functions $a(t)$ and $b(t)$.

Next, since $u_2=0$ and $\partial_x\bm{\tau}=0$, we obtain
\[
\bm u\cdot\nabla\bm{\tau}
=
u_1\partial_x\bm{\tau}
+
u_2\partial_y\bm{\tau}
=0.
\]
Thus the material derivative reduces to
\[
\frac{{\rm D}\bm{\tau}}{{\rm D}t}
=
\partial_t\bm{\tau}
+
\bm u\cdot\nabla\bm{\tau}
=
\partial_t\bm{\tau}.
\]
The constitutive equation in \eqref{Maxwell-Stokes} therefore becomes
\[
(1+\lambda\partial_t)\bm{\tau}
=
2\eta_1\boldsymbol{\varepsilon}(\bm u)
=
\eta_1\big(\nabla\bm u+(\nabla\bm u)^T\big).
\]
Since $\bm u=(u_1(y,t),0)^T$, we have
\[
\nabla\bm u+(\nabla\bm u)^T
=
\begin{bmatrix}
0 & \partial_y u_1\\
\partial_y u_1 & 0
\end{bmatrix}.
\]
Comparing the $(1,2)$ components gives
\[
\tau_{12}+\lambda\partial_t\tau_{12}
=
\eta_1\partial_y u_1.
\]
Using $\tau_{12}(y,t)=a(t)y+b(t)$, we obtain
\[
\eta_1\partial_y u_1
=
\big(a(t)+\lambda a'(t)\big)y
+
\big(b(t)+\lambda b'(t)\big).
\]
Integrating with respect to $y$ yields
\[
u_1(y,t)
=
\frac{1}{\eta_1}
\left[
\frac{1}{2}\big(a(t)+\lambda a'(t)\big)y^2
+
\big(b(t)+\lambda b'(t)\big)y
\right]
+
C(t),
\]
where $C(t)$ is an integration function depending only on time. Hence $u_1(y,t)$ is a quadratic polynomial in $y$.
\end{proof}

This observation motivates the reduced velocity approximation adopted in the remainder of the model. Therefore, we approximate the velocity field by a quadratic function and impose a zero-velocity condition on the interface,
\begin{equation}
  \bm  u \equiv\bm 0 \qquad \text{on } \Gamma_1.
\end{equation}
\subsubsection{Initial, boundary and interface conditions}
To complete the PDE formulation, we prescribe biologically motivated initial conditions together with boundary and interface transmission conditions.
\paragraph{Initial conditions}
We prescribe the initial distributions of bacteria, macrophages, and mucin. These initial quantities are assumed to be nonzero in the mucus region $\Omega_1$ and zero in the tissue region $\Omega_2$. Specifically,
\begin{equation}
B(\bx,0)=
\begin{cases}
B_0(\bx), & \bx\in\Omega_1,\\
0, & \bx\in\Omega_2,
\end{cases}
\end{equation}
\begin{equation}
M(\bx,0)=
\begin{cases}
M_0(\bx), & \bx\in\Omega_1,\\
0, & \bx\in\Omega_2,
\end{cases}
\end{equation}
\begin{equation}
m(\bx,0)=m_0(\bx), \qquad \bx\in\Omega_1,
\end{equation}

\paragraph{Boundary conditions}
We impose a Neumann-type condition for the bacteria that accounts for both diffusive flux and advective inflow/outflow induced by the velocity field:
\begin{equation}
    (\bm u\cdot \bm n )B-D_B{\partial B\over \partial \bm{n}}=\gamma_BB\quad {\rm on}~ \partial\Omega
\end{equation}
where $\gamma_B$ is nonzero on the outflow boundary, i.e. where $\bm u\cdot\bm n>0$ on $\partial\Omega\cap\partial\Omega_1$, and vanishes on the remaining parts of the boundary.

For the macrophages, due to the distinct roles of advection and chemotaxis, we impose the following boundary conditions:
\begin{equation}
  (\bm u\cdot \bm n )M+\delta {\partial B\over \partial \bm{n}}M -D_M		{\partial M\over \partial \bm{n}}=  \left\{
    \begin{aligned}
       & \alpha(B)(M-M_{\rm ref}) ~\quad   {\rm on}~\Gamma_2\\
    &\gamma_M M ~\qquad  {\rm on}~\Gamma^+\\
     &0   ~\quad  {\rm on}~\partial \Omega\backslash(\Gamma_2\cup \Gamma^+)
    \end{aligned}
    \right.
\end{equation}
Here
\[
\alpha(B)=\alpha_0+\alpha_1 B,
\]
where $\alpha_0$ represents the homeostatic macrophage recruitment rate and $\alpha_1$ characterizes the sensitivity of macrophage recruitment to bacteria-induced chemotactic signals. $\Gamma^+\subset\partial\Omega\cap\partial\Omega_1$ is the outflow boundary (i.e. $\bm u\cdot\bm n>0$) and $M_{ref}$  is a reference macrophage concentration constant representing healthy levels.

For the mucin, since it is present only in the mucus region, we impose the following boundary conditions:
\begin{equation}
    \left\{
\begin{aligned}
    D_m{\partial m\over \partial \bm{n}}+\beta(MB)(m-m_0)&=0 \quad {\rm on}~  \Gamma_1,\\
(\bm u\cdot \bm n)m-D_m				{\partial m\over \partial \bm{n}}&=\gamma_m m \quad {\rm on}~ \partial \Omega_1\backslash\Gamma_1.\\
\end{aligned}\right.
\end{equation}
Here
\[
\beta(MB) = \beta_0+\beta_1MB
\]
where $\beta_0$ represents the baseline mucin production rate and $\beta_1$ characterizes the sensitivity of mucin production to macrophage-bacteria interactions. The coefficient $\gamma_m$ is nonzero on the outflow boundary (i.e.\ $\bm u\cdot\bm n>0$ on $\partial\Omega\cap\partial\Omega_1$) and vanishes elsewhere.

Since bacteria and macrophages may migrate between the mucus and tissue regions, transmission conditions are imposed across the interface $\Gamma_1$.
\paragraph{Interface conditions}
On the interface $\Gamma_1$, we impose the following transmission conditions for bacteria and macrophages:
\begin{equation}
    \left\{
    \begin{aligned}
        {\partial B^m\over \partial \bm{n}^m} = -{\partial B^t\over \partial \bm{n}^t}& = \tau_B(B^t-B^m) \quad  {\rm on}~ \Gamma_1,\\
{\partial M^m\over \partial \bm{n}^m} = -{\partial M^t\over \partial \bm{n}^t}& = \tau_M(M^t-M^m) \quad  {\rm on}~ \Gamma_1,
    \end{aligned}
    \right.
\end{equation}
Here the superscripts $(\cdot)^m$ and $(\cdot)^t$ denote the values of the corresponding quantities on $\Gamma_1$ approached from the mucus region $\Omega_1$ and the tissue region $\Omega_2$, respectively, and $\bm n^{m}$ and $\bm n^{t}$ are the outward unit normal vectors associated with $\Omega_1$ and $\Omega_2$. The parameters $\tau_B>0$ and $\tau_M>0$ represent the interfacial transfer coefficients for bacteria and macrophages, respectively.

\subsection{Agent-Based Model}
We calibrated the PDE model to data from a previously-published ABM of NTM airway infection and innate immune responses \cite{weathered2022role, Weathered2023-gm, Weathered2023-ie}. Briefly, the ABM simulates a 2 mm $\times$ 2 mm $\times$ 0.06 mm section ($2.4 \times 10^{-4} cm^{3}$) of the pulmonary airway. This space is discretized into a 100 $\times$ 100 $\times$ 3 grid, with grid cubes 20 $\mu m$ in length to accommodate a single macrophage. It includes 2 diffusible compounds, bacteria chemoattractant (representing pathogen-associated molecular patterns, or PAMPs) and macrophage chemoattractant (representing macrophage-released chemokines). NTM bacteria and host macrophages are the model agents. Bacteria may divide and secrete bacterial chemoattractant. Macrophages may move, secrete macrophage chemoattractant, phagocytose and eliminate bacteria, and undergo necrosis, releasing some bacteria back into the environment. Further details may be found in Weathered et al. 2022; Weathered, Wei, et al. 2023; and Weathered, Pennington, et al. 2023 \cite{weathered2022role, Weathered2023-gm, Weathered2023-ie}.

While the ABM provides spatiotemporal and mechanistic information about NTM airway infection, it does not include explicit mucus clearance mechanisms. To overcome this limitation while keeping the spatiotemporal and mechanistic information, we developed the PDE model of the airway mucus and underlying lung parenchyma. The PDE model explicitly represents mucus movement and viscosity to simulate the impact of mucus clearance on NTM infection outcomes, while incorporating insights gained from the ABM.

\subsubsection{ABM to PDE Model Parameter and Variable Translation}
Macrophage and bacteria agent counts in the ABM were converted to concentrations using  Equation \eqref{ConcConvert} and the values in Table \ref{tab:S4}. This conversion assumes a well-mixed ABM airway compartment and does not preserve spatiotemporal information on the distribution of cells in the ABM.

\begin{equation}
\text{$C_X$} = \frac{\text{$N_X$} \times \text{$\rho_X$} \times \text{$V_{CX}$}}{\text{$V_{sim}$}}
\label{ConcConvert}
\end{equation}

where $X$ represents either bacteria ($B$) or macrophages ($M$), $C_X$ represents the concentration (in $g/ml$), $N_X$ represents the total number of cells in the ABM, $\rho_X$ represents the density (in $g/cm^3$), $V_{CX}$ represents the average per cell volume of cell $X$ (in $cm^3$), and $V_{sim}$ represents the total volume of the ABM simulation (in $cm^3$).

\begin{table}[H]
	\caption{Cell metrics}\label{tab:S4}
	\begin{center}
    \scalebox{0.9}{
        \begin{tabular}{ | c | c | c | }
        \hline
        Cell type      & Density, $\rho_X$ ($g/cm^3$)& Cell Volume, $V_{CX}$ ($cm^3$)\\
        \hline
        Macrophage     & 1                     & 5 $\times 10^ {-9}$ \cite{Krombach1997-vw}  \\
        Bacterium      & 1.1 \cite{Loferer-Krossbacher1998-lg}    & 8.4 $\times 10^ {-12}$ \cite{Signore2008-ef} \\
        \hline
        \end{tabular}}
    \end{center}
\end{table}

Whenever possible, discrete ABM parameters were converted to continuous parameters to inform parameter ranges for the PDE model calibration. These parameter conversions fell into three categories: unit conversions, discrete-to-continuous parameter conversions, and differential-equation based conversions in which rate constants from ordinary differential equations were fitted to ABM data to derive a parameter value.

Unit conversions for the bacteria carrying capacity and half-saturation of intracellular bacteria involved a conversion from a scalar number of bacteria to a concentration of bacteria. Bacteria carrying capacity was defined in the ABM as the number of NTM bacteria agents that could fit in a single grid cube (Volume = $8.0 \times 10^{-9} \text{cm}^3$). Bacteria carrying capacity in the PDE model was calculated using an adapted version of Equation \eqref{ConcConvert}, with the ABM carrying capacity as the cell count and division by the grid cube volume instead of the whole simulation volume. Similarly, the half-saturation of intracellular bacteria inside of a macrophage in the PDE model was derived from the ABM’s threshold of intracellular bacteria required for an infected macrophage to burst and release NTM bacteria back into the simulation environment. By adapting Equation \eqref{ConcConvert}, we calculated the half saturation using Equation \eqref{burstThr}:

\begin{equation}
K_{Bi}=\frac{\frac{\text{$\tau_{burst}$}}{2}\times \text{$\rho_B$} \times \text{$V_{CB}$}}{\text{$V_{CM}$}}
\label{burstThr}
\end{equation}

where $\tau_{burst}$ represents the ABM threshold for macrophage bursting.

Macrophage volume is used since intracellular bacteria exist within macrophages, so it is the bacterial concentration within a macrophage that determines bursting.

Bacterial growth rates are calculated in the ABM from the bacterium doubling time, as shown in Equation \eqref{doubleTime}.

\begin{equation}
\begin{split}
\text{$\lambda_B$}= \frac{ln(2)}{\text{$t_d$ }\times{\text{$n_{step}$}}}
\end{split}
\label{doubleTime}
\end{equation}

where $t_d$ represents the bacterial doubling time (in hours), and $n_{step}$ represents the number of ABM time steps per hour.

In the stochastic ABM, the growth rate of each bacterium is randomly assigned a value within 10\% of this baseline bacterial growth rate so that each bacterium grows at a slightly different rate. To get a range of growth rates for PDE calibration, we found the minimum and maximum average bacterial growth rates (in units per 6-minute ABM time step) for bacterial agents present at the end of a simulation and converted these rates to units of per day by multiplying by a factor of 240 (24 hours $\times$ 10 time steps per hour).

The clearance rate constant of bacteria by macrophages in the PDE model was found by converting the ABM’s discrete probability of a macrophage clearing a bacterium to a continuous rate constant. Equation \eqref{bactKill} was used, adapted from Jones et al. 2017 to convert transition rates to probabilities, to yield a rate constant of bacteria killed (See Supplementary Material for a full derivation) \cite{jones2017procedure},
\begin{equation}
d_B=-240ln(1-\text{$p_{kill}$}),
\label{bactKill}
\end{equation}
where $p_{kill}$ represents the per-timestep per macrophage probability of killing an intracellular bacterium.

We confirmed the accuracy of this conversion by implementing this rate constant in an ODE model of bacterial death and comparing the results against the ABM’s bacteria concentrations over time (see the Supplementary Material for more information).

The final category is ABM data-based conversions. In the PDE model, macrophages may die due to bursting from intracellular bacteria load or from other causes at a rate $d_M$. This general death rate was calculated from the ABM’s macrophage death. The average death rate per timestep was converted to a death rate per day for use in the PDE model.

Finally, the baseline macrophage recruitment was calculated to counteract general macrophage death (not death due to bursting) under the assumption that in non-infection conditions, the body would maintain homeostasis by replenishing macrophages in the pulmonary tissue. We calculated the macrophage concentration needed on the $\Gamma_2$ boundary to maintain the initial macrophage concentration and calculated $a_0$ as shown in Equation \eqref{recruitMac} (See Supplementary Material for derivation).

\begin{equation}
a_0=3 \cdot d_M M_0
\label{recruitMac}
\end{equation}

\subsection{Numerical Method}

To numerically solve the coupled PDE system with interfacial transmission conditions, we employ a finite element framework that permits discontinuities across the interface $\Gamma_1$ while preserving the coupled transport structure of the model.

Since the interface conditions allow jumps in the bacteria and macrophage densities, standard globally continuous spaces are not sufficient for the discretization. To account for possible discontinuities across the interface $\Gamma_1$, we introduce the following broken Sobolev space for the bacteria and macrophage densities:
\begin{equation}
V(\Omega)
:=
\left\{
v\in L^2(\Omega)\; \middle| \;
v|_{\Omega_1}\in H^1(\Omega_1)
\ \text{and}\
v|_{\Omega_2}\in H^1(\Omega_2)
\right\}.
\end{equation}

To characterize the interfacial coupling terms, we introduce the following jump and average operators across $\Gamma_1$. We define the normal jump across the interface $\Gamma_1$ for a vector-valued function $\bm v$ and a scalar function $v$ by
\[
[\bm v]
:= \bm v^{m}\cdot\bm n^{m}+\bm v^{t}\cdot\bm n^{t},
\qquad
[v]
:= v^{m}\bm n^{m}+v^{t}\bm n^{t},
\]
where the superscripts $(\cdot)^m$ and $(\cdot)^t$ denote traces taken from the mucus region $\Omega_1$ and the tissue region $\Omega_2$, respectively, and $\bm n^{m}$ and $\bm n^{t}$ are the corresponding outward unit normal vectors.

The average of a scalar function $v$ across the interface is defined by
\[
\{v\}=\frac{1}{2}\left(v^{m}+v^{t}\right).
\]
These interface operators are used to weakly enforce the transmission conditions between the mucus and tissue regions.

We also denote by $(\cdot,\cdot)_D$ the $L^2$ inner product over a domain $D$, and by $\langle\cdot,\cdot\rangle_D$ the $L^2$ inner product over the boundary $\partial D$. We also denote the outflow boundary on $\partial\Omega_1\setminus\Gamma_1$ by $\Gamma^{+}$.

Using the above interface notation and applying standard integration-by-parts arguments, we obtain the following weak formulation of the coupled PDE system. The variational problem for the whole system is: Find $(B,M,m)\in \left(V(\Omega)\times V(\Omega)\times H^1(\Omega_1)\right)\times(0,T)$ such that
\begin{equation}
	\label{eq:variation}
	\left\{
	\begin{aligned}
		({\partial B\over \partial t},E)_{\Omega} -(\bm{u}B,\nabla E)_\Omega+(D_B\nabla B,\nabla E)_\Omega=\\(\lambda_B B(1-{B\over K_B})+d_{Mb}{B\over B+K_{Bi}}M,E)_\Omega-\langle \gamma_B B,E\rangle_{\Gamma^+}\\
        -(d_BMB,E)_{\Omega_2}-\langle [\tau_B  B],[D_BE]\rangle_{\Gamma_1} \\
		({\partial M\over \partial t},F)_{\Omega} -(\bm{u}M,\nabla F)_\Omega+(D_M\nabla M,\nabla F)_\Omega-\delta( M\nabla B,\nabla F)_\Omega=\\
        (-d_MM-d_{Mb}{B\over B+K_{Bi}}M ,F)_\Omega -\langle \gamma_M M,F\rangle_{\Gamma^+}\\
		- \langle [\tau_M M],[D_MF]\rangle_{\Gamma_1}-\langle \alpha(B)(M-M_{\rm ref}),F\rangle_{\Gamma_2}-\delta\langle \{\nabla B\},[MF]\rangle_{\Gamma_1}\\
		({\partial m\over \partial t},s)_{\Omega_1}-(\bm{u}m,\nabla s)_{\Omega_1}+(D_m\nabla m,\nabla s)_{\Omega_1}=\\(-d_mm,s)_{\Omega_1}-\langle\beta(MB)(m-m_0),s\rangle_{\Gamma_1}-\langle \gamma_m m,s\rangle_{\Gamma^+}
	\end{aligned}
	\right.
\end{equation}
for any $(E,F,s)\in V(\Omega)\times V(\Omega)\times H^1(\Omega_1)$.

The interface contributions naturally arise from the transmission conditions imposed on $\Gamma_1$, while the boundary integrals account for advective outflow and immune recruitment mechanisms. We consider a conforming triangular mesh $\mathcal{T}_h$ of $\Omega$ such that the interface $\Gamma_1$ coincides with the mesh skeleton. We denote by $\mathcal{T}_h^i=\mathcal{T}_h\cap\Omega_i$ the restriction of the mesh to $\Omega_i$, for $i=1,2$.

Based on this triangulation, we define a piecewise linear finite element approximation space.
\begin{equation}
W_h(\mathcal{T}_h)
:=
\left\{
v_h\in H^1(\mathcal{T}_h)
\;\middle|\;
v_h|_K\in \mathcal{P}_1(K),
\ \forall K\in\mathcal{T}_h
\right\},
\end{equation}
where $\mathcal{P}_1(K)$ denotes the space of linear polynomials on a triangular element $K$.

To allow for possible discontinuities across the interface $\Gamma_1$, we introduce the following broken finite element space for the discretization of the bacteria and macrophage densities:
\begin{equation}
V_h(\mathcal{T}_h)
:=
\left\{
v_h\in L^2(\mathcal{T}_h)
\;\middle|\;
v_h|_{\Omega_1}\in W_h(\mathcal{T}_h^1)
\ \text{and}\
v_h|_{\Omega_2}\in W_h(\mathcal{T}_h^2)
\right\}.
\end{equation}

We then derive the semi-discrete finite element approximation corresponding to the variational formulation: find $(B_h,M_h,m_h)\in \left(V_h(\mathcal{T}_h)\times V_h(\mathcal{T}_h)\times W_h(\mathcal{T}_h^1)\right)\times(0, T)$ such that
\begin{equation}
	\label{eq:semidiscrete}
	\left\{
	\begin{aligned}
		({\partial B_h\over \partial t},E_h)_{\Omega} -(\bm{u}B_h,\nabla E_h)_\Omega+(D_B\nabla B_h,\nabla E_h)_\Omega=\\(\lambda_B B_h(1-{B_h\over K_B})+d_{Mb}{B_h\over B_h+K_{Bi}}M_h,E_h)_\Omega-\langle \gamma_B B_h,E_h\rangle_{\Gamma^+}\\
        -(d_BM_hB_h,E_h)_{\Omega_2}-\langle [\tau_B  B_h],[D_BE_h]\rangle_{\Gamma_1} \\
		({\partial M_h\over \partial t},F_h)_{\Omega} -(\bm{u}M_h,\nabla F_h)_\Omega+(D_M\nabla M_h,\nabla F_h)_\Omega\\
        -\delta( M_h\nabla B_h,\nabla F_h)_\Omega=
        (-d_MM_h-d_{Mb}{B_h\over B_h+K_{Bi}}M_h ,F_h)_\Omega\\
        -\langle \gamma_M M_h,F_h\rangle_{\Gamma^+}- \langle [\tau_M M_h],[D_MF_h]\rangle_{\Gamma_1}\\
		-\langle \alpha(B_h)(M_h-M_{\rm ref}),F_h\rangle_{\Gamma_2}-\delta\langle \{\nabla B_h\},[M_hF_h]\rangle_{\Gamma_1}\\
		({\partial m_h\over \partial t},s_h)_{\Omega_1}-(\bm{u}m_h,\nabla s_h)_{\Omega_1}+(D_m\nabla m_h,\nabla s_h)_{\Omega_1}=\\(-d_mm_h,s_h)_{\Omega_1}-\langle\beta(M_hB_h)(m_h-m_0),s_h\rangle_{\Gamma_1}-\langle \gamma_m m_h,s_h\rangle_{\Gamma^+}
	\end{aligned}
	\right.
\end{equation}
    for any $(E_h,F_h,s_h)\in \left(V_h(\mathcal{T}_h)\times V_h(\mathcal{T}_h)\times W_h(\mathcal{T}_h^1)\right)\times(0,T)$.

To obtain a fully implementable numerical scheme, we discretize the time derivative using a fully implicit Euler method.
\begin{equation}
\frac{\partial A}{\partial t}
\approx
\frac{A-A^{-}}{\Delta t},
\end{equation}
where the time interval $(0,T)$ is uniformly partitioned as
\[
0=t_0<t_1<\cdots<t_N=T,
\qquad
\Delta t=\frac{T}{N}.
\]
Here $A^{-}$ denotes the value of $A$ at the previous time step.

The initial values $B_h^0$, $M_h^0$, and $m_h^0$ are obtained by interpolating the initial data into the finite element space. We then obtain discrete solutions $B_h^i$, $M_h^i$, and $m_h^i$ for $i=1,\ldots,N$.

The resulting numerical method provides a computational framework for simulating the coupled bacteria-macrophage-mucus dynamics in heterogeneous airway geometries with interfacial transport effects.

\subsection{Calibration procedure}

Let $\boldsymbol{\theta}$ denote the vector of unknown parameters to be calibrated. For a given parameter vector $\boldsymbol{\theta}$, we solve the PDE system over the computational domain and compute the model-predicted average concentrations of bacteria and macrophages in the mucus region:
\begin{equation}
\begin{aligned}
    (B_i^{m})_h(\boldsymbol{\theta})
&=\frac{1}{|\Omega_1|}\int_{\Omega_1} B_h(\bx,t_i;\boldsymbol{\theta})\,{\rm d}x,\\
(M_i^{m})_h(\boldsymbol{\theta})
&=\frac{1}{|\Omega_1|}\int_{\Omega_1} M_h(\bx,t_i;\boldsymbol{\theta})\,{\rm d}x.
\end{aligned}
\end{equation}

The calibration objective is defined by the normalized least-squares functional
\begin{equation}
J(\boldsymbol{\theta})
:=J_B(\bm\theta)+J_M(\bm\theta)
\end{equation}
\begin{equation}\label{eq:loss}
    \begin{aligned}
   J_B(\bm\theta):=     \sum_{i=1}^{N}
\frac{\left(B_i^{m}-(B_i^{m})_h(\boldsymbol{\theta})\right)^2}{(B_i^{m})^2}
,\\
J_M(\bm\theta):=  \sum_{i=1}^{N}  \frac{\left(M_i^{m}-(M_i^{m})_h(\boldsymbol{\theta})\right)^2}{(M_i^{m})^2},
    \end{aligned}
\end{equation}
where $B_i^{m}$ and $M_i^{m}$ denote the observed mucus-region average concentrations at time $t_i$. The calibrated parameter vector is obtained by
\begin{equation}
\boldsymbol{\theta}^{*}
=
\arg\min_{\boldsymbol{\theta}\in\Theta} J(\boldsymbol{\theta}),
\end{equation}
where $\Theta$ denotes the admissible parameter space.

The optimization problem is solved using the L-BFGS-B algorithm with
tolerances of $10^{-12}$ for the objective function and the projected
gradient. To improve the calibration accuracy, parameters to
which the objective function is particularly sensitive are subsequently
re-optimized in warm-started refinement stages, with each stage
initialized from the calibrated parameter values obtained in the
preceding stage. The nonlinear PDE
system is solved using Newton's method with a relative tolerance of
$10^{-6}$. The initial values of the PDE variables are obtained by
interpolating the prescribed initial data into the finite element
space.

The model is identified from the temporal evolution of the
spatially aggregated observables generated by the ABM. These
observables are precisely the macroscopic quantities described by the
proposed model. Accordingly, the calibration loss is formulated in terms of the
corresponding spatial averages of the PDE solution, since these
quantities are directly comparable with the aggregate ABM outputs and
allow the calibration to focus on the robust macroscopic dynamics. Hence, the ABM and the PDE model are compared at a consistent level of
description: the effects of the underlying spatial interactions are
incorporated through their net influence on the aggregate dynamics.
The resulting model is parsimonious, involving 15 fitted parameters
for 84 temporal ABM observations. The calibrated parameters and their
prescribed bounds are reported in Table~\ref{tab:model-fit}. The initial
parameter values are selected based on the reference values reported in
\cite{hao2016modeling,Weathered2023-gm,denneny2020mucins}. This consistency across the available ABM
data indicates that the identified terms represent robust macroscopic
dynamics.

\begin{table}[H]
\centering
\caption{
Calibration parameters and their prescribed bounds.
}
\label{tab:model-fit}
\small
\begin{tabular}{l|c}
\hline
Parameter(s) & Bounds \\
\hline
$D_1,\;D_2,\;d_B,\;D_M,\;\delta,\;d_M,\; K_B,\; K_{Bi}$
& $[10^{-9},\,1]$ \\

$d_{Mb}$
& $[10^{-8},\,10]$ \\

$\lambda_B$
& $[4.61\times10^{-4},10]$\\

$\tau_B,\;\tau_M,\;\alpha_0$
& $[0,\,100]$ \\

$M_{\rm ref}$
& $[10^{-4},\,0.75]$ \\

$\alpha_1$
& $[-100,\,100]$ \\
\hline
\end{tabular}
\end{table}

\begin{remark}
The numerical time step used to solve the PDE system may be smaller than the data sampling interval. Model outputs are then evaluated at the observation time points for comparison with the data.
\end{remark}

\subsection{Treatment formulation}

To model treatment effects, we incorporate mucin-dependent velocity, diffusion, and viscosity into the PDE system. Both the velocity field and diffusion coefficients are assumed to depend on mucin concentration through the viscosity.

\paragraph{Viscosity law.}
We adopt the exponential relation to represent mucus viscosity ($\eta$)
\begin{equation}
\eta(m)=\eta_0\exp\!\left(\frac{m}{m_0}-1\right),
\end{equation}
where $\eta_0>0$ is the reference viscosity and $m_0$ is a characteristic mucin concentration.

\paragraph{Velocity field.}
We consider a unidirectional velocity field of the form
\begin{equation}
\bm u=(u_1,0)^T,
\end{equation}
where
\begin{equation}
u_1(y,t)=\frac{1}{\eta(m)}u_0(y,t),
\end{equation}
for a prescribed baseline velocity profile $u_0$. Here $\eta=\eta(m)$ is an increasing function of the mucus density $m$.

\paragraph{Diffusion coefficients.}
Relaxing the piecewise-constant assumption \eqref{eq:DB}, we allow the
motility of bacteria and macrophages to depend on the local rheology of
the medium. By analogy with the Stokes--Einstein relation, both
diffusivities are taken to be inversely proportional to the viscosity,
\begin{equation}
  D_B(\bm{x},t)=\frac{D_0^B(\bm{x})}{\eta\bigl(m(\bm{x},t)\bigr)},
  \qquad
  D_M(\bm{x},t)=\frac{D_0^M(\bm{x})}{\eta\bigl(m(\bm{x},t)\bigr)},
  \label{eq:diff-visc}
\end{equation}
where the reference diffusivities are piecewise constant,
\begin{equation}
  D_0^{\bullet}(\bm{x})=
  \begin{cases}
    D_{0,1}^{\bullet}, & \bm{x}\in\Omega_1,\\[2pt]
    D_{0,2}^{\bullet}, & \bm{x}\in\Omega_2,
  \end{cases}
  \qquad \bullet\in\{B,M\},
  \label{eq:D0}
\end{equation}
and the viscosity is
extended to the tissue region by $\eta\equiv\eta_0$ on $\Omega_2$. At the
reference mucin concentration $m=m_0$ we have $\eta=\eta_0$, so that
$D_{0,i}^{\bullet}/\eta_0$ recovers the constant diffusivities of
\eqref{eq:DB}.

\paragraph{Antibiotic treatment.}
Antibiotic treatment is represented by an additional bacterial removal term in the bacteria equation:
\begin{equation}
\begin{aligned}
    \partial_t B
+\nabla\cdot(\bm u B)
-\nabla\cdot(D_B\nabla B)
=
\lambda_B B\Big(1-\frac{B}{K_B}\Big)\\
+d_{Mb}\frac{B}{B+K_{Bi}}M
-d_BMB\,\chi_{\Omega_2}
-d_{\rm anti} B.
\end{aligned}
\end{equation}
The treatment coefficient is scaled as
\begin{equation}
d_{\rm anti}=\zeta d_BM_0,
\end{equation}
where $\zeta>0$ is a dimensionless treatment-strength parameter.

\paragraph{Outcome measures.}
Treatment efficacy is evaluated using the normalized total bacterial and macrophage burdens:
\begin{equation}
V_B(T)=\frac{1}{B_0}\int_{\Omega}B(\bx,T)\,{\rm d}x,
\qquad
V_M(T)=\frac{1}{M_0}\int_{\Omega}M(\bx,T)\,{\rm d}x.
\end{equation}

\section{Results}

\subsection{Calibration data and results}

We calibrated the PDE model using ten representative groups of data generated from the agent-based model \cite{Weathered2023-gm}. Each dataset provides time-series measurements of the spatial average concentrations of bacteria and macrophages in the mucus region over a time horizon of $T=14$ days. The data sampling interval is $\Delta t=1/6$ day, giving observation times
\[
t_i=i\Delta t,
\qquad i=1,\ldots,84.
\]
The observed quantities of bacteria and macrophages are denoted by $B_i^{m}$ and $M_i^{m}$, respectively.

For each ABM-derived dataset, the PDE model was calibrated by minimizing the objective function defined in the Methods section. The resulting calibrated parameter sets reproduce the mucus-region bacterial and macrophage concentration trends obtained from the ABM simulations and are used as baseline parameter sets for the subsequent sensitivity analysis and treatment simulations.

Figure~\ref{fig:no756-calibration} shows representative calibration results comparing the ABM-derived data with the PDE simulations in the mucus region, together with the corresponding PDE predictions in the tissue region. It also illustrates the PDE system evolution at different time points, both with and without the mucus velocity field. The comparison indicates that the mucus velocity field mainly affects the spatial redistribution of bacteria and macrophages, producing more pronounced transport in the mucus region. See Supplementary Figures ~\ref{fig:appendix-set048-velocity-comparison}--\ref{fig:appendix-set843-velocity-comparison} for full calibration results.

To provide a direct assessment of the calibration
accuracy, we report the mean squared relative errors between the calibrated PDE
predictions and the corresponding ABM observations. The dataset-wise errors are
summarized in Table~\ref{tab:relative-error}. The  errors remain below $0.04$
for bacteria and below $7\times 10^{-3}$ for macrophages in every dataset considered. These results demonstrate that the PDE model consistently
reproduces the aggregate ABM observations across the different
calibration conditions.

The ABM outputs are generated by a stochastic model and may therefore
contain realization-dependent variability \cite{fadikar2018calibrating}. The deterministic PDE model is intended
to capture the reproducible macroscopic evolution represented by these
aggregate observations. Consequently, the remaining
PDE--ABM discrepancy may contain contributions from both the
approximation of the macroscopic dynamics and stochastic variability in
the ABM observations. Although these contributions are not separated
quantitatively here, the reported MSRE values and the accurate
reproduction of the principal temporal behaviour support the calibrated
PDE model as a robust quantitative description of the macroscopic
dynamics.

\begin{figure}[t]
    \centering
    \begin{subfigure}{0.78\textwidth}
        \centering
        \includegraphics[width=.8\linewidth,height=0.76\textheight,keepaspectratio]{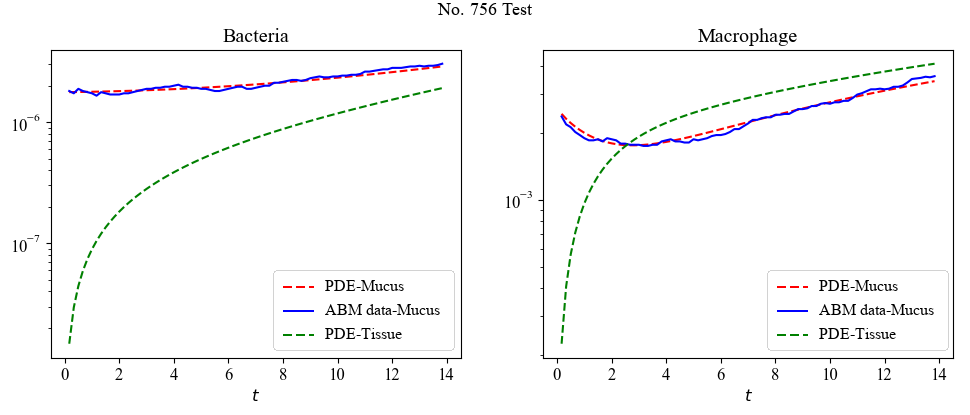}
        \caption{Concentration fitting for bacteria and macrophages in the representative parameter set No.~756, including the calibration reference, mucus-region simulation, and tissue-region simulation.}
        \label{fig:no756-calibration}
    \end{subfigure}

    \vspace{0.5em}
    \begin{subfigure}{0.30\textwidth}
        \centering
        \includegraphics[width=\linewidth,height=0.76\textheight,keepaspectratio]{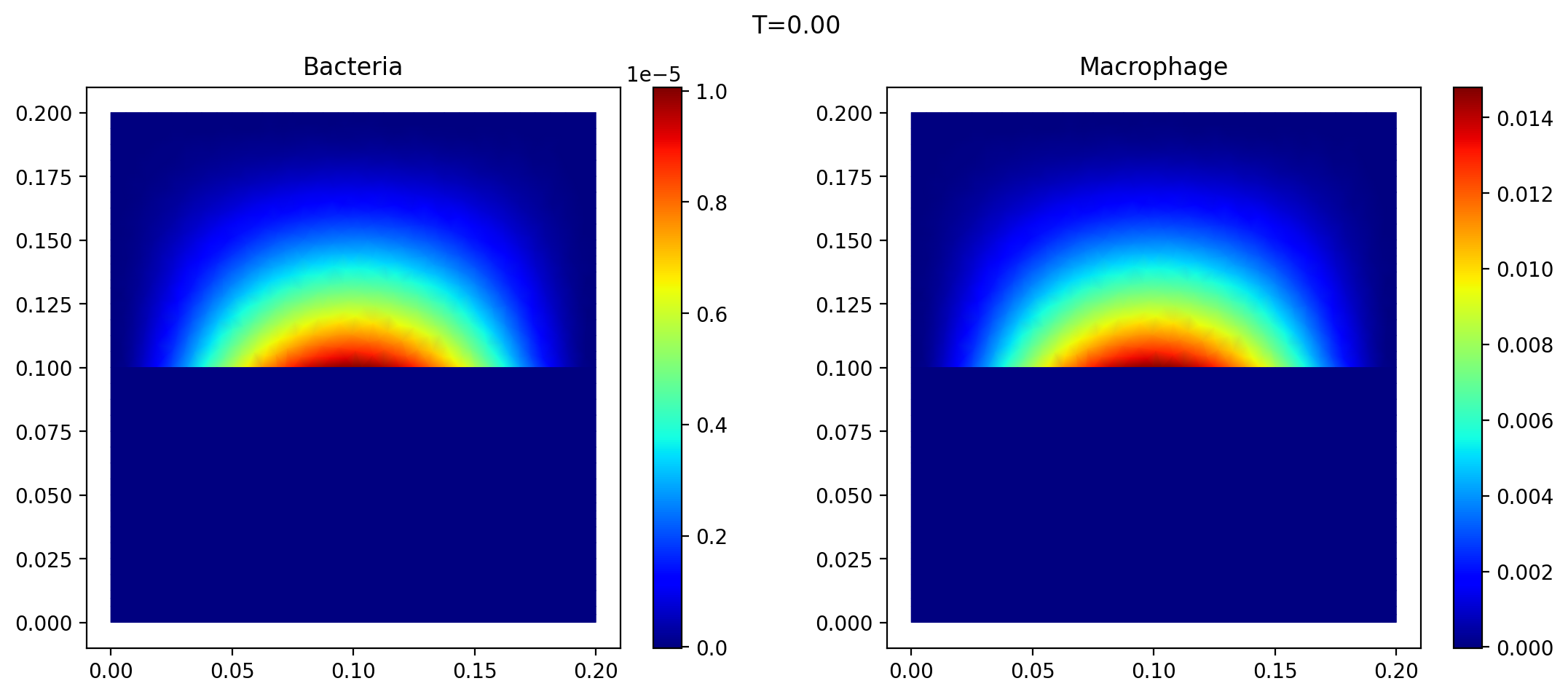}
        \caption{No velocity, $t=0$ d.}
        \label{fig:no756-baseline-t0}
    \end{subfigure}
    \hfill
    \begin{subfigure}{0.30\textwidth}
        \centering
        \includegraphics[width=\linewidth,height=0.76\textheight,keepaspectratio]{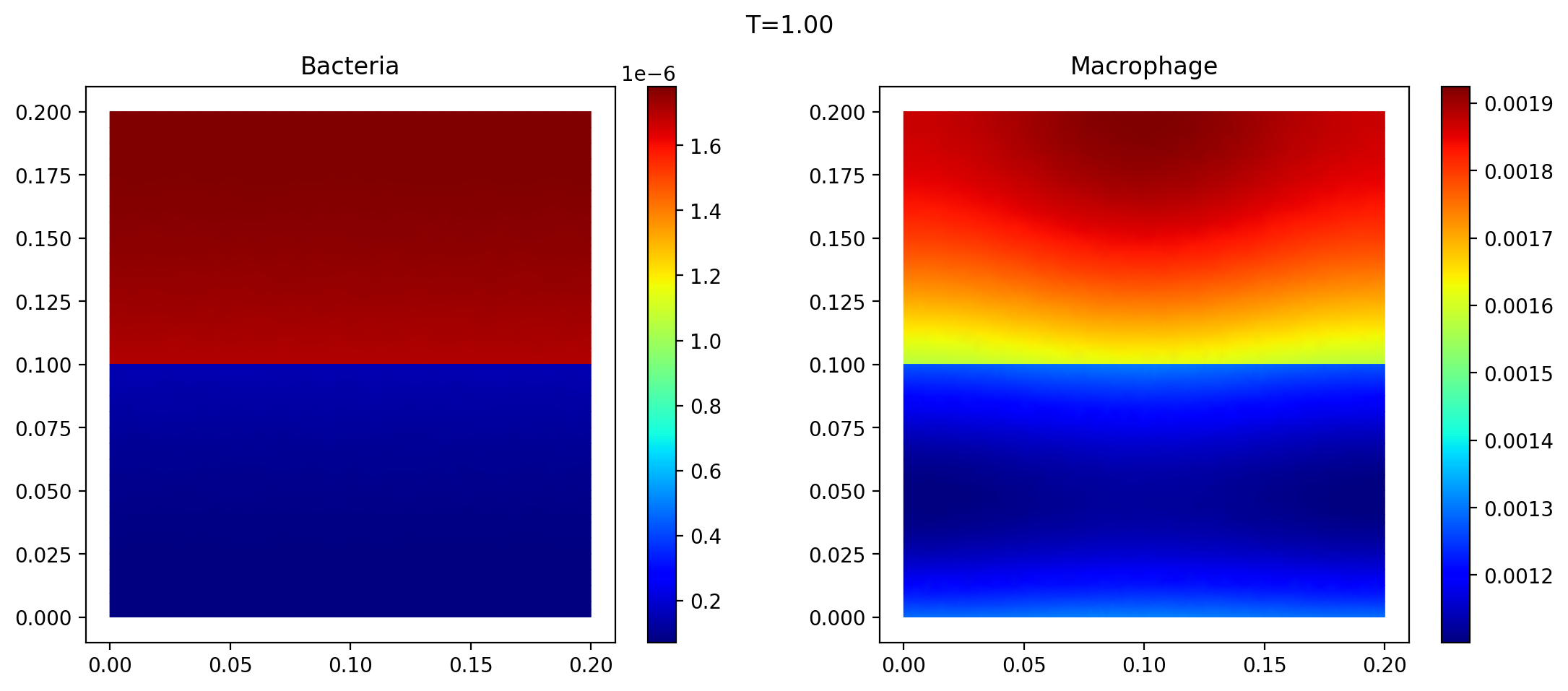}
        \caption{No velocity, $t=1$ d.}
        \label{fig:no756-baseline-t1}
    \end{subfigure}
    \hfill
    \begin{subfigure}{0.30\textwidth}
        \centering
        \includegraphics[width=\linewidth,height=0.76\textheight,keepaspectratio]{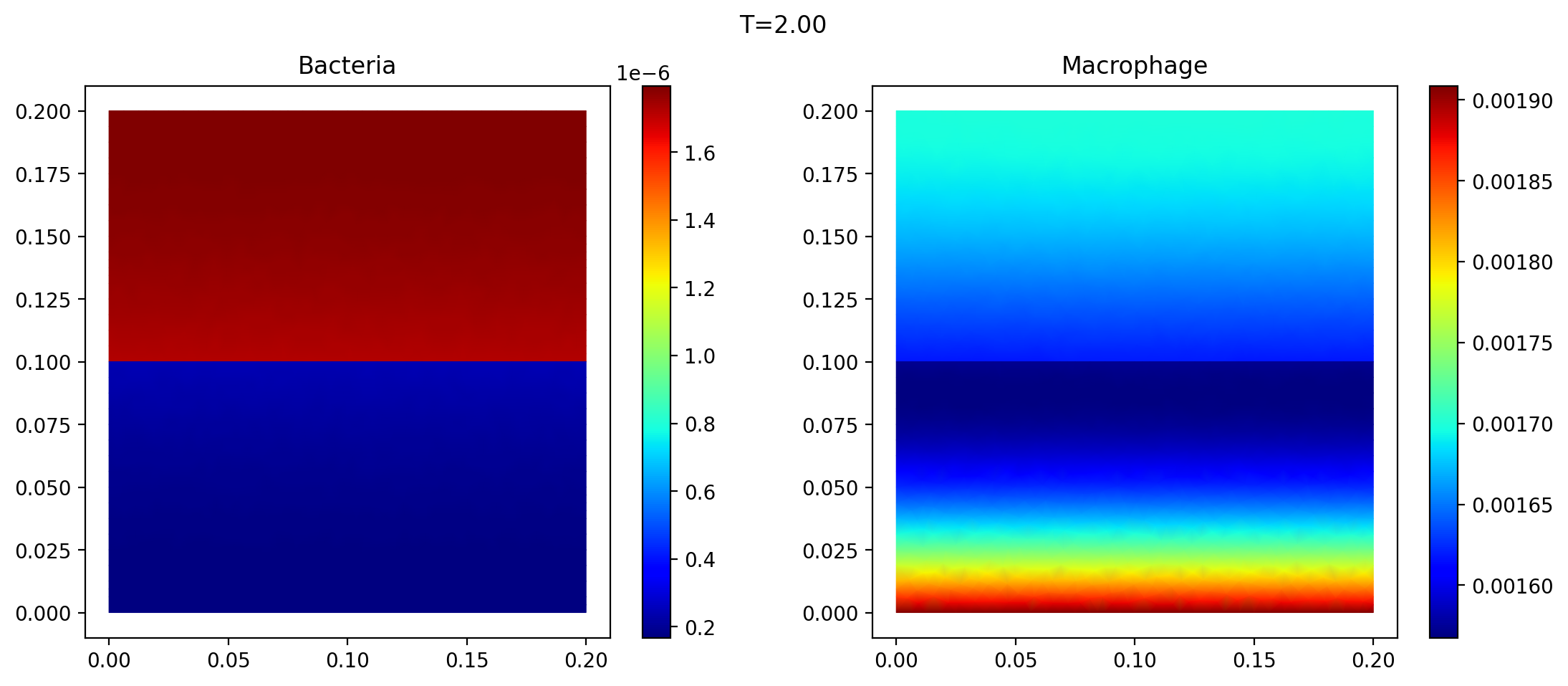}
        \caption{No velocity, $t=2$ d.}
        \label{fig:no756-baseline-t2}
    \end{subfigure}

    \vspace{0.5em}
    \begin{subfigure}{0.30\textwidth}
        \centering
        \includegraphics[width=\linewidth,height=0.76\textheight,keepaspectratio]{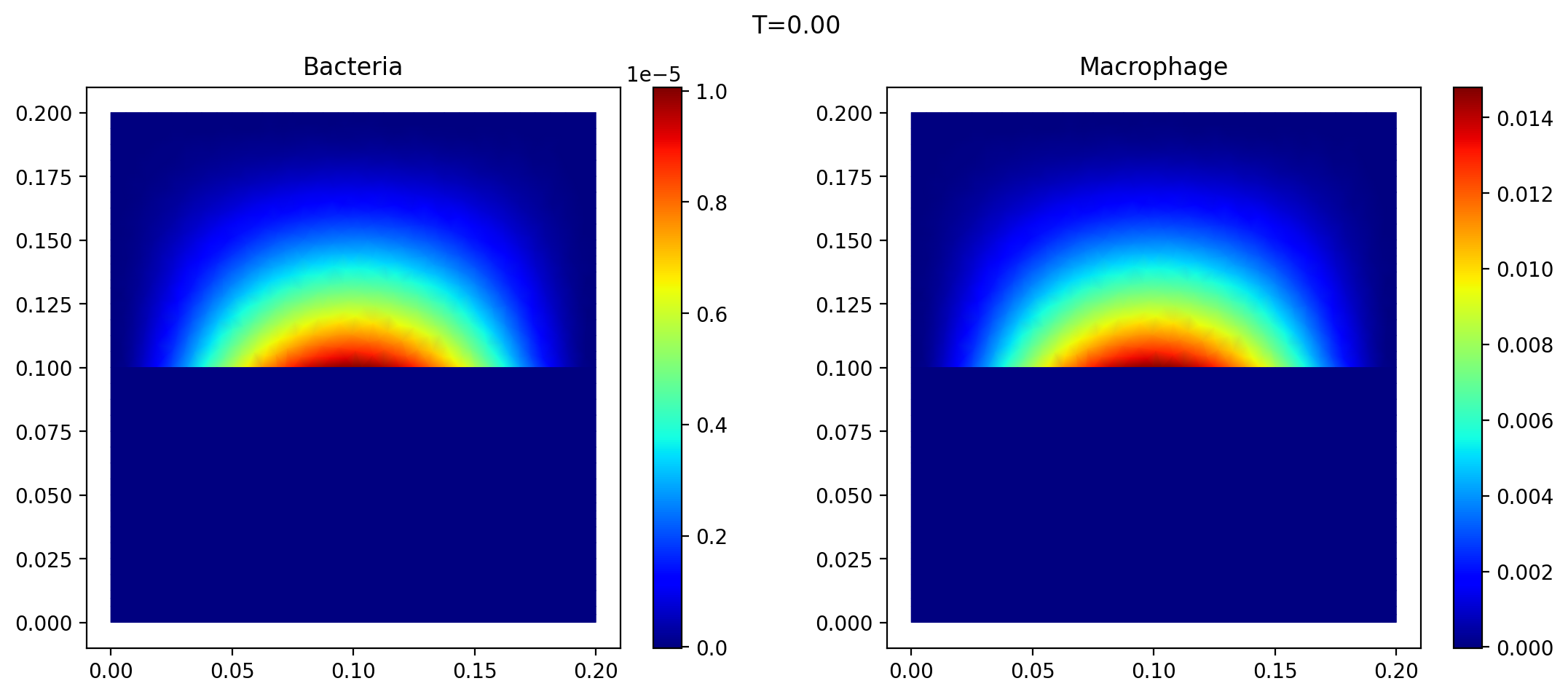}
        \caption{With velocity, $t=0$ d.}
        \label{fig:no756-velocity-t0}
    \end{subfigure}
    \hfill
    \begin{subfigure}{0.30\textwidth}
        \centering
        \includegraphics[width=\linewidth,height=0.76\textheight,keepaspectratio]{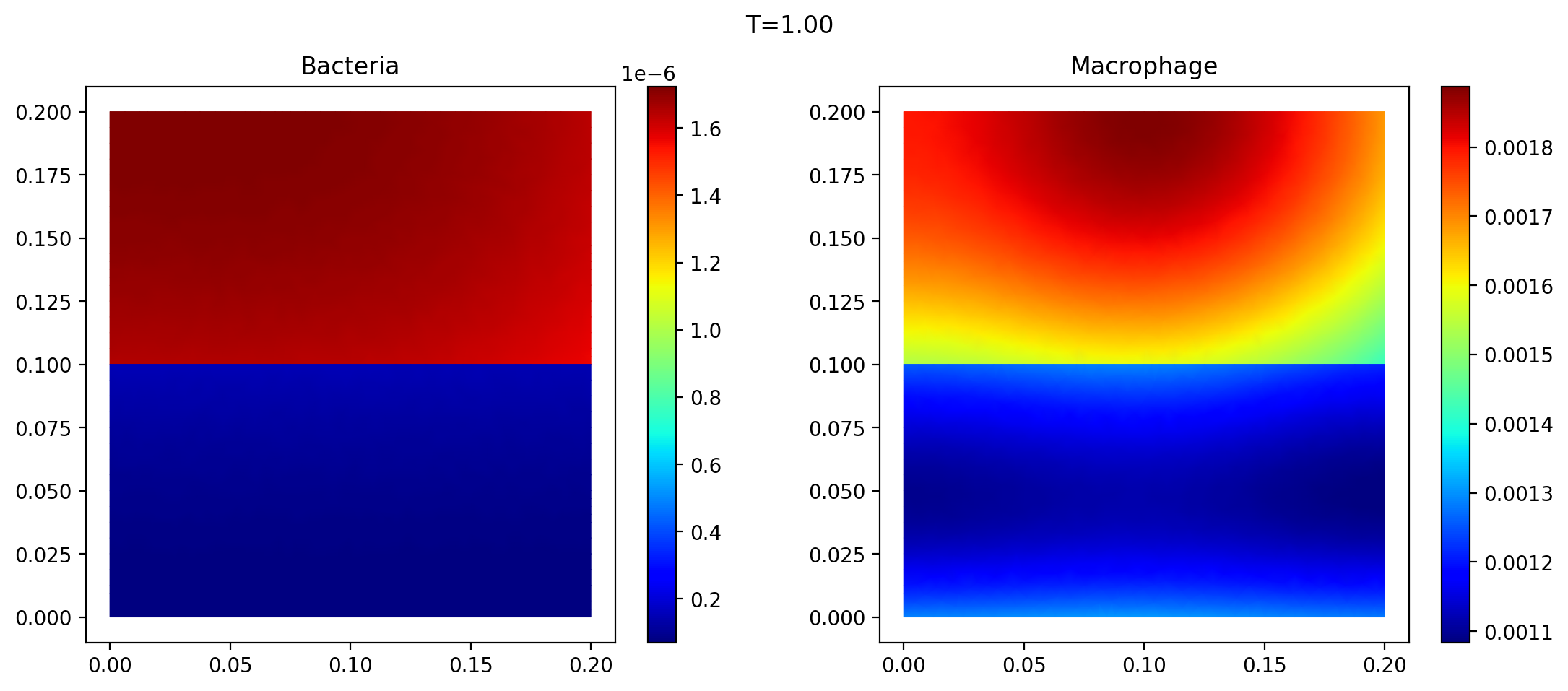}
        \caption{With velocity, $t=1$ d.}
        \label{fig:no756-velocity-t1}
    \end{subfigure}
    \hfill
    \begin{subfigure}{0.30\textwidth}
        \centering
        \includegraphics[width=\linewidth,height=0.76\textheight,keepaspectratio]{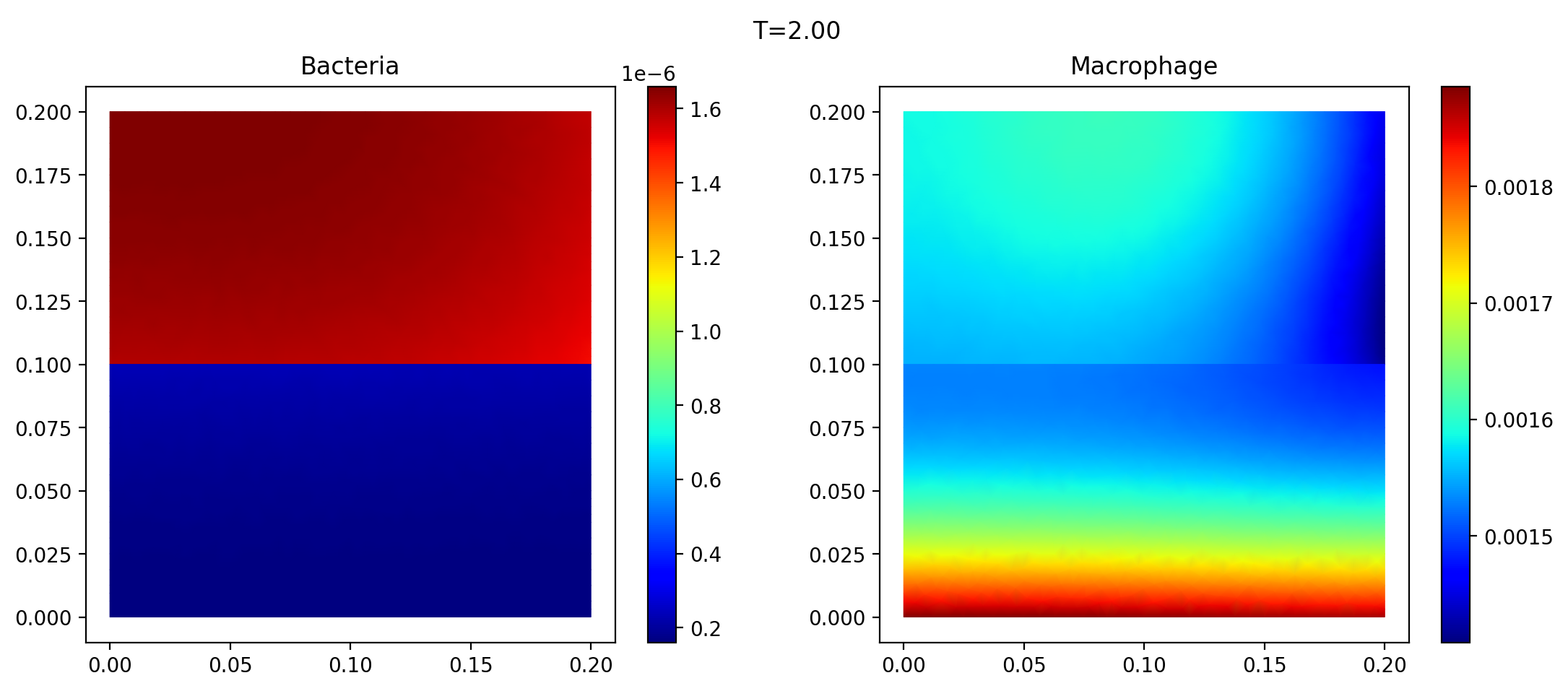}
        \caption{With velocity, $t= 2$ d.}
        \label{fig:no756-velocity-t2}
    \end{subfigure}

    \caption{
    Evolution of the system for a representative parameter set, with and without the mucus velocity field.
    (a) Concentration fitting for bacteria and macrophages, with calibration reference data, mucus-region simulation, and tissue-region simulation shown in each plot.
    (b--d) Spatial distributions without the mucus velocity field  at $t=0$, $1$, and $2$ days.
    (e--g) Spatial distributions with the mucus velocity field  at the same time points.
    Each spatial panel shows bacteria on the left and macrophages on the right.
    }
    \label{fig:no756-velocity-comparison}
\end{figure}

\begin{table}[H]
\centering
\caption{
Mean squared relative errors between the calibrated PDE
predictions and the corresponding ABM observations for $B$ and $M$.
}
\label{tab:relative-error}
\begin{tabular}{ccc}
\hline
Dataset
& $\mathrm{Err}_{B}$
& $\mathrm{Err}_{M}$
\\
\hline
1  & 4.86e-03 & 2.02e-03 \\
2  & 1.10e-02 & 1.77e-03 \\
3  & 6.80e-03 & 6.52e-03 \\
4  & 2.02e-02 & 2.59e-04 \\
5  & 3.92e-02 & 1.44e-03 \\
6  & 2.14e-02 & 4.32e-03 \\
7  & 9.14e-03 & 4.33e-04 \\
8  & 2.14e-03 & 1.15e-03 \\
9  & 3.35e-03 & 4.57e-04 \\
10 & 2.81e-03 & 2.65e-03 \\
\hline
Mean $\pm$ std (\%)
& $1.21 \pm 1.11$
& $0.21 \pm 0.19$
\\
\hline
\end{tabular}
\end{table}

The calibrated PDE model serves as a computationally efficient surrogate for the ABM. While accurately reproducing the average bacterial and macrophage dynamics, the PDE model requires substantially less computational time, thereby making large-scale sensitivity analyses and treatment optimization feasible. To quantify this computational advantage, Table~\ref{tab:computational-time} compares the cost of simulating the system up to $T=14$ days. On average, a calibrated PDE simulation requires approximately 16~s, whereas the corresponding ABM simulation takes approximately 115~s. This substantial reduction in computational cost demonstrates the greater efficiency of the PDE model.

\begin{table}[H]
\centering
\caption{
Comparison of computational times for the PDE model and the ABM over the simulation period up to $T=14$ days.
}
\scriptsize
\label{tab:computational-time}
\begin{tabular}{ccc}
\hline
Run & PDE Computational time (s) & ABM Computational time (s)\\
\hline
1  & 17.714 & 510.166\\
2  & 16.008 & 68.392\\
3  & 18.595 & 154.312\\
4  & 15.130 & 95.443\\
5  & 14.844 & 76.063\\
6  & 17.997 & 62.556\\
7  & 15.100 & 49.178\\
8  & 15.208 & 45.178\\
9  & 15.427 & 45.200\\
10 & 15.230 & 46.918\\
\hline
Mean & 16.125 & 115.344\\
\hline
\end{tabular}
\end{table}

\subsection{Sensitivity analysis}
To better understand the impact of mucus dynamics and other biological processes on infection outcomes, we performed a sensitivity analysis to measure correlations between model parameter values and bacteria and macrophage concentrations. We ran 400 PDE model simulations and varied 16 parameters within the calibrated ranges for each parameter. Partial rank correlation coefficients (PRCCs) and p-values with the Bonferroni correction were found according to the method presented by Marino et al. 2008 \cite{MARINO2008methodology}. We limit our discussion here to correlations with model outputs at the final time point (day 14 of infection) since the observed correlations were largely consistent over time. Full sensitivity analysis results are presented in Supplementary Figure \ref{fig:supp_full_PRCC}.

Several pathogen and host parameters were significantly correlated with total bacteria and macrophage concentrations (mucus and tissue compartments combined) (Figures \ref{fig:Sensitivity Analysis PRCC}, \ref{fig:Sensitivity Analysis Figure}). Some parameter influences are expected. For example, bacteria proliferation rate $\lambda_B$ was positively associated with total bacteria concentration (PRCC = 0.514, p < 0.0001), and baseline macrophage recruitment rate $\alpha_0$ (PRCC = 0.724, p < 0.0001) was positively associated with total macrophage concentration.

Other parameter correlations were less intuitive. Bacteria mucus diffusivity $D_{B_1}$ was negatively associated with total bacteria concentration (PRCC = -0.827, p < 0.0001), bacteria mucus concentration (PRCC = -0.779, p < 0.0001), and bacteria tissue concentration (PRCC = -0.850, p < 0.0001). We hypothesize that increased bacteria diffusion in the mucus led to increased bacteria mucociliary clearance from the simulation space. Indeed, bacteria mucus diffusivity $D_{B_1}$ was positively associated with cumulative bacteria mucociliary clearance (PRCC = 0.783, p < 0.0001). Increased bacteria diffusivity in the mucus appears to create more movement within the mucus, providing more opportunities to reach the mucociliary clearance boundary in the simulation. While not significantly associated with macrophage mucus concentration, bacteria mucus diffusivity $D_{B_1}$ was positively associated with macrophage mucociliary clearance (PRCC = 0.921, p < 0.0001). We hypothesize that macrophages may be undergoing chemotaxis towards bacteria as they are cleared at the mucociliary clearance  boundary.

Initial mucus viscosity $\eta_0$ (PRCC = 0.601, p < 0.0001) was positively associated with total macrophage concentration, while initial macrophage diffusivity $D_M$ was negatively associated (PRCC = -0.417, p < 0.0001). We hypothesize that higher initial viscosity and lower macrophage diffusivity would both contribute to ``trapping'' macrophages in the mucus compartment, leading to greater overall macrophage concentrations. Indeed, initial mucus viscosity $\eta_0$  was positively associated with macrophage mucus concentration (PRCC = 0.753, p < 0.0001), and macrophage diffusivity  $D_M$ was negatively associated (PRCC = -0.706, p < 0.0001). In contrast, initial mucus viscosity $\eta_0$ was negatively associated (PRCC = -0.561, p < 0.0001) with macrophage tissue concentration, and macrophage diffusivity $D_M$ (PRCC = 0.579, p < 0.0001) was positively associated. The importance of this balance between macrophage populations in the mucus vs. tissue is also supported by the positive association between macrophage tissue concentration and mucus/tissue interfacial transfer coefficient for macrophages $\tau_M$ (PRCC = 0.329, p < 0.0001), suggesting that macrophage movement between mucus and tissue compartments could help counteract their entrapment in very viscous mucus. Nonetheless, $\eta_0$ was indeed negatively associated (PRCC = -0.171, p < 0.02) with macrophage mucociliary clearance, although macrophage diffusivity $D_M$ was not found to significantly correlate with macrophage mucociliary clearance.

In agreement with these findings, transfer of bacteria and macrophages between mucus and tissue was also revealed as a key mechanism influencing infection dynamics. Macrophage diffusivity $D_M$ is positively associated with mucus-to-tissue transfer of macrophages (PRCC = 0.560, p < 0.0001), while initial mucus viscosity $\eta_0$ was negatively associated (PRCC = -0.539, p < 0.0001). This suggests that the impact of macrophage diffusivity and mucus viscosity on macrophage dynamics is achieved through its influence on both mucociliary clearance as well as mucus to tissue transfer of macrophages.

These inter-compartmental macrophage dynamics also influence bacterial dynamics. Initial mucus viscosity $\eta_0$ (PRCC = 0.820, p < 0.0001) was positively associated with cumulative mucus-to-tissue transfer of bacteria, and macrophage diffusivity $D_M$ (PRCC = -0.828, p < 0.0001) was negatively associated. These inverse relationships, compared to their impact on macrophage populations, are consistent with lower macrophage numbers in the mucus (from higher diffusivity or lower viscosity) leading to higher bacterial numbers in the mucus, and thus more possible bacterial flux into the tissue.

\begin{figure}[H]
    \centering
    \includegraphics[width=.95\linewidth,height=0.76\textheight,keepaspectratio]{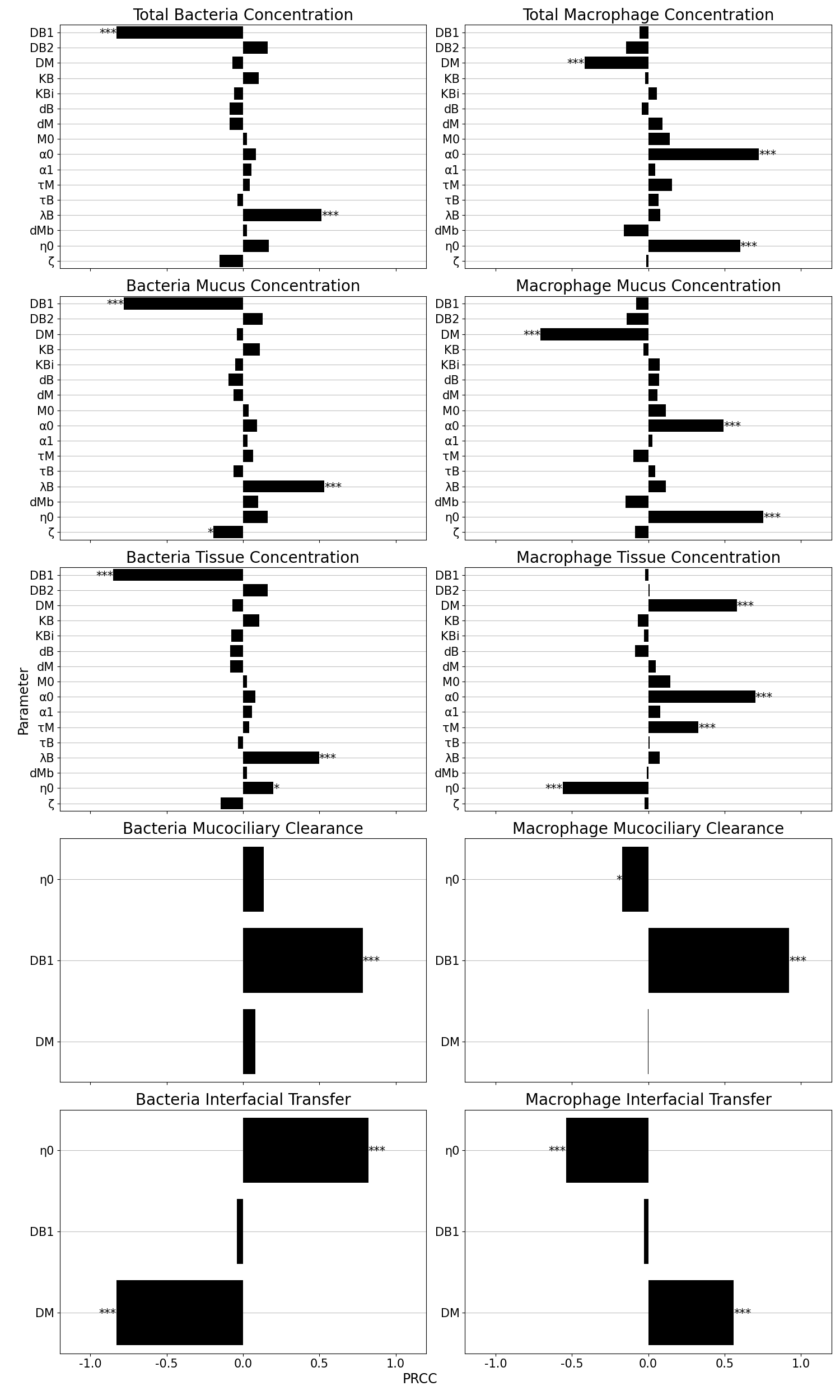}
    \caption{Sensitivity analysis reveals key parameters driving bacteria and macrophage concentrations, mucociliary clearance, and interfacial transfer. * p < 0.05, ** p < 0.01, *** p < 0.001}
    \label{fig:Sensitivity Analysis PRCC}
\end{figure}

These findings suggest that mucociliary clearance-associated mechanisms contribute significantly to bacterial control even in the context of immune responses, and that mucus-associated mechanisms significantly impact tissue compartment dynamics. However, these findings also suggests the possibility of non-monotonic impacts of therapeutics targeting mucus function if the bacterial populations in the mucus vs. tissue compartments are affected differently.

\begin{figure}[H]
    \centering
    \includegraphics[width=0.65\linewidth,height=0.76\textheight,keepaspectratio]{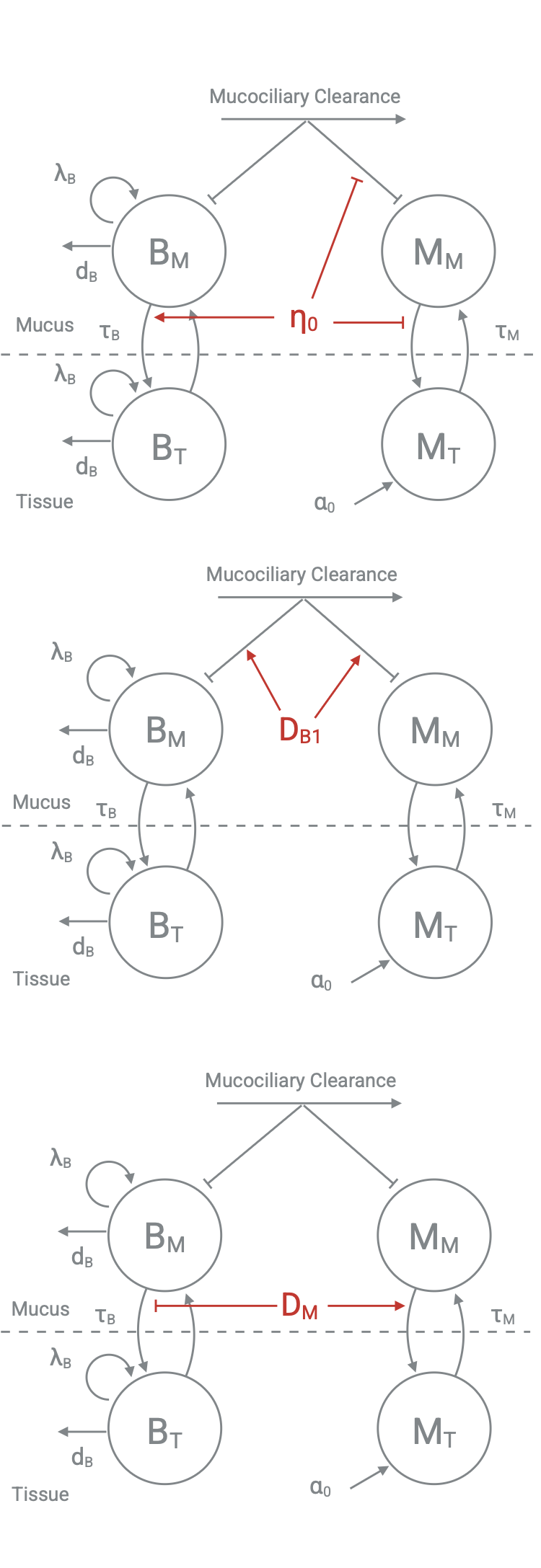}
    \caption{Sensitivity analysis suggests that initial mucus viscosity $\eta_0$ and macrophage diffusivity $D_M$ impact both macrophages and bacteria through mucociliary clearance as well as enabling macrophage killing of bacteria, whereas bacteria diffusivity in mucus $D_{B_1}$ mainly functions by facilitating mucociliary clearance of the bacteria. $B_M$: Mucus bacteria, $M_M$: Mucus macrophages, $B_T$: Tissue bacteria, $M_T$: Tissue macrophages, $\lambda_B$: bacteria proliferation rate, $d_B$: bacteria death rate, $\alpha_0$: baseline macrophage recruitment rate, $\tau_B$: bacteria interfacial transfer coefficient, $\tau_M$: macrophage interfacial transfer coefficient.}
    \label{fig:Sensitivity Analysis Figure}
\end{figure}

\subsection{Treatment simulations}
Since bacterial proliferation, immune responses, and mucociliary clearance appear to contribute to infection progression in complex ways, we explored the potential impact of mucus-thinning therapies in the context of antibacterials. By incorporating perturbations to the viscosity-dependent transport and antibiotic-induced bacterial removal in the PDE model, we further investigate the impact of antibiotic treatment and mucus rheology through local sensitivity analysis.

We ran $13^2=169$ simulations for every parameter set, varying the antibiotic bacteria killing rate $d_{anti}$ and initial mucus viscosity $\eta_0$ for each of the 10 parameter sets identified during initial model calibration. Antibiotic bacteria killing rate $d_{anti}$ was varied on the interval $[10^{-8}, 5]$ to represent antibiotic therapy, and initial mucus viscosity $\eta_0$ was varied on the interval [0.4, 5] to represent mucolytic therapy. From these simulations, we generated heatmaps of final bacteria and macrophage concentrations in the whole simulation space, mucus, and tissue.

Considering the total bacteria heatmaps, all 10 parameter sets demonstrate a clear delineation between antibiotic-dominant and mucolytic-dominant regions of the $d_{anti}$ vs. $\eta_0$ parameter space  (Figure  \ref{fig:Treatment heatmap figure}; Supplementary Figures \ref{fig:ntm_heatmap_supp_48_83} to \ref{fig:ntm_heatmap_supp_808_843}). Antibiotic efficacy is the main driver of final bacteria burden above a threshold of approximately $log(d_{anti}) = -2$. Below that efficacy, mucus viscosity appears to be the dominant determinant of bacterial burden. Total macrophage concentration, as expected, is primarily driven by mucus viscosity rather than antibiotics. Collectively, our simulations fell into 2 main categories: those for which total macrophage concentrations decrease with decreasing viscosity (i.e. mucolytic therapy) (Scenario I) and those for which macrophage concentrations increase with decreasing viscosity (Scenario II).

\paragraph{Scenario I: Macrophage Retention-Dominated Response}
Scenario I (7 out of the 10 parameter sets) demonstrates decreasing macrophage concentration with decreasing mucus viscosity. This is consistent with the global sensitivity analysis results demonstrating a positive association between initial mucus viscosity $\eta_0$ and total final macrophage concentration (See Fig. 4). This indicates that less-viscous mucus from mucolytic therapy potentially leads to lower macrophage concentrations overall. In 5 of these simulations, less-viscous mucus results in macrophages being predominantly located in the tissue compartment, which aligns with the sensitivity analysis finding that viscosity is negatively associated with macrophage transfer to the tissue.

In 5 of the scenario I simulations, there was a non-monotonic impact of mucus viscosity on bacterial burden, with higher bacterial loads predicted for both high and low viscosity. This aligns with the fact that final bacteria burden was not found to be sensitive to initial mucus viscosity $\eta_0$ in the global sensitivity analysis. From the heatmaps of bacteria concentration in the mucus and tissue compartments, as viscosity decreases, bacteria mucus concentrations decrease or remain nearly constant, while bacteria tissue concentrations increase. This suggests that as mucus becomes less viscous, there is an increased shift of bacteria from the mucus to the tissue. Taken together, these results indicate that while mucolytic therapies are effective at clearing bacteria and macrophages from the mucus, they may also drive the remaining bacteria and macrophages into the tissue, where antibacterials would be required to eliminate the infection.

\paragraph{Scenario II: Recruitment-Dominated Response}
Scenario II (3 of the 10 simulations) demonstrates increasing total macrophage concentration with decreasing mucus viscosity. The compartment-specific results reveal that this outcome is primarily driven by mucus macrophages. Among all 10 parameter sets, these 3 simulations had the highest values of baseline macrophage recruitment rate $\alpha_0$ and the highest final macrophage counts (>1000). This, along with the negative association between viscosity and macrophage mucus to tissue transfer, suggests that macrophage recruitment and subsequent transfer to the mucus is compensating for mucociliary clearance of macrophages. However, as with the Scenario I simulations, less-viscous mucus still mainly shifts bacteria into the tissue compartment from the mucus. Therefore, in these Scenario II simulations, there is a seeming misalignment between where the bacteria and macrophages are located.

Overall, these local sensitivity analyses suggest that bactericidal antibiotics would be the preferred method for clearing NTM infection, while mucolytic therapies are effective at clearing bacteria present in the mucus. However, there is a risk of the infection remaining predominantly in the lung tissue if the mucus viscosity decreases too much. Therefore, mucolytic therapies should be optimized to support mucociliary clearance while preventing bacteria dynamics that may lead to predominantly tissue infection.

\begin{figure}
    \centering
    \includegraphics[width=1\textwidth,height=0.76\textheight,keepaspectratio]{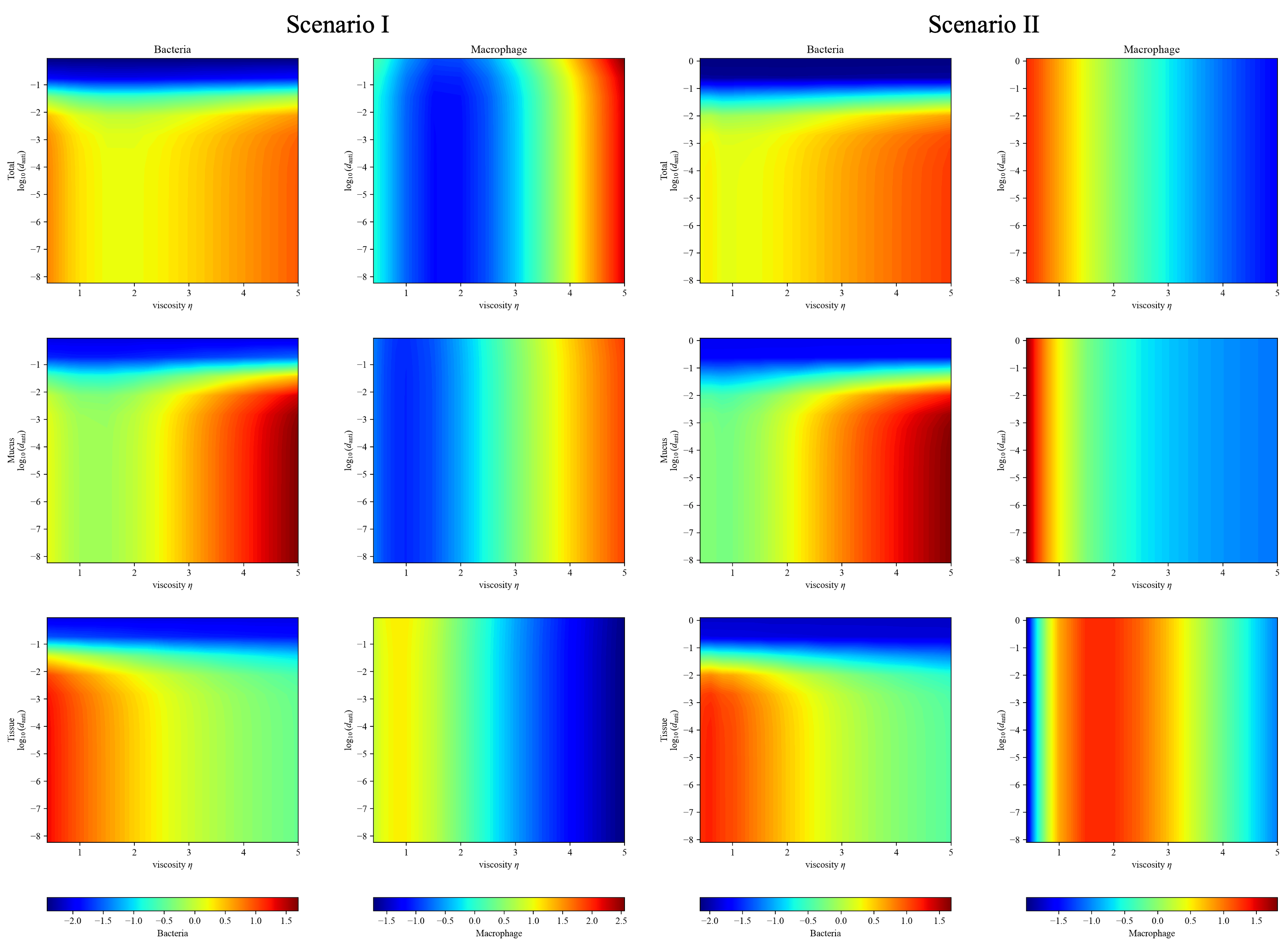}
    \caption{A representative example of a Scenario I simulation with a macrophage retention-dominated response (left) and a representative example of a Scenario II simulation with a recruitment-dominated response (right).}
    \label{fig:Treatment heatmap figure}
\end{figure}

\section{Conclusions}
Non-tuberculous mycobacterial (NTM) infections are an increasingly important clinical challenge for individuals with cystic fibrosis (CF), where impaired mucociliary clearance, abnormal mucus rheology, and dysregulated immune responses create a favorable environment for persistent bacterial colonization and chronic infection \cite{Martiniano2022-NTMandCF}. Despite significant advances in experimental and clinical studies, it remains difficult to explore the relative contributions of bacterial proliferation, immune-cell dynamics, mucus transport, and tissue invasion because these processes occur over multiple spatial and temporal scales and strongly influence one another. Developing predictive mathematical models that integrate these mechanisms is therefore essential for understanding disease progression, identifying the dominant biological processes that govern infection outcomes, and designing more effective therapeutic strategies that could potentially include mucolytics \cite{Sulaiman2025-nonPharmNTM}.

Toward addressing these challenges, we developed a multiscale continuum modeling framework that integrates bacterial dynamics, macrophage-mediated immune responses, mucus transport, and interfacial exchange between the mucus layer and the underlying lung tissue. A key contribution of this paper is the systematic integration of agent-based modeling and partial differential equations. Rather than constructing the continuum model solely from phenomenological assumptions, we calibrated the PDE model using data generated from a previously developed agent-based model, thereby preserving the underlying cellular mechanisms while achieving the computational efficiency needed for sensitivity analysis, uncertainty quantification, and large-scale treatment simulations. This hybrid modeling strategy provides an effective bridge between cellular-scale stochastic interactions and tissue-scale transport dynamics.

The calibrated PDE model successfully reproduces the temporal evolution of bacterial and macrophage populations observed in the ABM across multiple parameter sets. Beyond reproducing the average population dynamics, the PDE framework provides additional spatial information that is difficult to obtain from compartment-based or stochastic models alone. In particular, the model captures the coupled movement of bacteria and immune cells within the mucus layer and their migration into the underlying tissue, allowing us to investigate how physical transport mechanisms interact with immune responses during the early establishment of infection.

The sensitivity analysis demonstrates that infection outcomes are jointly controlled by microbial growth, immune responses, and mucus-associated transport mechanisms. While bacterial proliferation and macrophage recruitment remain important determinants of infection progression, parameters governing mucus viscosity, bacterial diffusivity, macrophage mobility, and interfacial transfer emerge as equally influential regulators of bacterial persistence. These findings suggest that mucociliary clearance is not merely a passive removal mechanism but actively shapes host-pathogen interactions by regulating the spatial redistribution of both bacteria and immune cells. This is in line with other work demonstrating the complex interactions between airway clearance, airway structure, and immune responses in driving NTM disease \cite{Ortega2025-geneticNTM,Park2025-geneticNTMgwas}. Consequently, the progression of pulmonary NTM infection likely depends on the dynamic balance among bacterial replication, immune-mediated clearance, and physical transport within the airway environment.

Our treatment simulations further demonstrate that mucus-targeted interventions may have complex and nonlinear consequences. Reducing mucus viscosity enhances mucociliary transport and facilitates bacterial removal from the mucus layer. However, excessive reduction in mucus viscosity  can also increase bacterial redistribution into the lung tissue, where bacteria become inaccessible to mucociliary clearance and rely more heavily on immune responses and antibiotic treatment for elimination. This spatial redistribution creates competing effects that cannot be captured by non-spatial models. The simulations therefore provide quantitative insights into how mucolytic therapy complements an integrated therapeutic strategy including antibiotics \cite{Burgel2024-CFguidelines}. Combining mucus-thinning agents with sufficiently effective antibacterial treatment may maximize bacterial clearance while minimizing the risk of tissue invasion. These findings also suggest that clinical responses of NTM infections to mucolytic therapies may vary among patients with different immune characteristics and mucus properties.

From a computational perspective, the proposed framework establishes an effective strategy for integrating discrete and continuum models of infectious disease \cite{cilfone2015strategies}. The ABM provides detailed mechanistic information describing stochastic cellular interactions, whereas the PDE model efficiently captures tissue-scale transport and enables systematic exploration of parameter space. The calibration procedure developed here offers a practical methodology for constructing computationally efficient surrogate models while retaining key biological realism. Such hybrid multiscale strategies are broadly applicable to many infectious and inflammatory diseases in which biological processes span multiple spatial and temporal scales \cite{garira2018primer,vodovotz2019agent}.

Several limitations should be acknowledged. The current model focuses primarily on innate immunity and explicitly considers macrophages as the dominant immune cell population. Other important components of the immune response, including neutrophils, dendritic cells, T lymphocytes, cytokine signaling, and adaptive immunity, are not yet represented. The airway geometry is simplified into two coupled compartments and does not account for the highly heterogeneous branching structure of the lung or patient-specific airway morphology. In addition, although the PDE parameters are calibrated against an established ABM, direct calibration using experimental or clinical imaging data will be needed to further strengthen the model. Finally, the present work investigates early infection dynamics and does not explicitly model long-term biofilm formation, bacterial phenotypic adaptation, or chronic tissue remodeling that are characteristic of persistent NTM disease.

These limitations naturally motivate several future directions. The framework can be extended to incorporate additional immune cell populations, inflammatory signaling pathways, and bacterial adaptation mechanisms to study chronic infection and host-pathogen co-evolution. Patient-specific airway geometries reconstructed from CT imaging and individualized mucus properties could enable personalized simulations of disease progression and treatment response. The computational efficiency of the PDE model also makes it well-suited for uncertainty quantification, Bayesian parameter estimation, optimal control, and AI-assisted treatment optimization. More broadly, the hybrid ABM-PDE framework developed in this study provides a general computational paradigm for linking mechanistic cellular models with tissue-scale continuum descriptions. We anticipate that this approach will facilitate the development of predictive digital models for pulmonary infections and provide a quantitative foundation for designing personalized therapeutic strategies.

\section*{Ethics statement}
This study did not involve human participants, human tissue, or live animals. The agent-based model data used for model calibration were obtained from previously published computational studies, and no new ethical approval was required.

\section*{Data and code availability}
The data and code used for the PDE simulations, ABM simulations,
and sensitivity analysis are available on GitHub at
\url{https://github.com/jindong24/ntmsimulation}.

\section*{Author contributions}
J.W., K.K., E.P. and W.H. conceived the study. J.W., P.-C.K., M.C., N.W. and W.H. developed the mathematical model and computational framework. J.W. and K.K. performed the numerical simulations and data analysis. K.K. and E.P. provided the biological interpretation and agent-based model data. E. P. and W.H. supervised the research. J.W., W.H., K.K., and E.P. wrote the initial manuscript draft. All authors contributed to revising the manuscript, approved the final version, and agree to be held accountable for the work.

\section*{Competing interests}
The authors declare that they have no competing interests.

\section*{Acknowledgements}
This material is based upon work supported by the National Science Foundation under Grant No. DMS-1929284 while the authors were in residence at the Institute for Computational and Experimental Research in Mathematics in Providence, RI, during the “Data-driven Predictive Modeling and Simulation for Lung Infections” Collaborate@ICERM program. JW was supported by a Royal Society Newton International Fellowship (NIF$\backslash$R1$\backslash$252881). MC was supported by the National Science Foundation under Grant No. DMS-2602787 and the Army
Research Office award W911NF-23-1-0004. WH was supported by NIGMS 1R35GM146894 and the Huck Chair in AI Mathematical Modeling from Penn State University's Huck Institutes of the Life Sciences.
NW was supported by NSF DMS-2327184.
EP and KK were supported by NIH-NIAID R01-AI172838.

\section*{Use of artificial intelligence}
Generative artificial intelligence tools were used solely to improve the language,
clarity and readability of the manuscript. All scientific content, analyses,
interpretations and conclusions were produced and verified by the authors.

\section*{Disclaimer}
The views expressed in this article are those of the authors and do not necessarily reflect the views of the funding agencies.

\clearpage
\setcounter{section}{0}
\setcounter{equation}{0}
\setcounter{figure}{0}
\setcounter{table}{0}
\renewcommand{\thesection}{S\arabic{section}}
\renewcommand{\theequation}{S\arabic{equation}}
\renewcommand{\thefigure}{S\arabic{figure}}
\renewcommand{\thetable}{S\arabic{table}}
\renewcommand{\theHsection}{supp.\arabic{section}}
\renewcommand{\theHequation}{supp.\arabic{equation}}
\renewcommand{\theHfigure}{supp.\arabic{figure}}
\renewcommand{\theHtable}{supp.\arabic{table}}
\section*{Supplementary Materials}
\addcontentsline{toc}{section}{Supplementary Materials}

\section{Selection of Agent-Based Model (ABM) Simulations for Calibration}
10 runs of the nontuberculous mycobacteria (NTM) airway infection ABM were selected to calibrate the partial differential equation (PDE) model. The 10 runs were chosen to capture the full range of ABM simulation results, with enrichment for edge cases. The plot below shows the final bacteria and macrophage counts for all ABM runs, with the 10 selected runs shown in red dots.

\begin{figure}[H]
    \centering
    \includegraphics[width=1\linewidth,height=0.76\textheight,keepaspectratio]{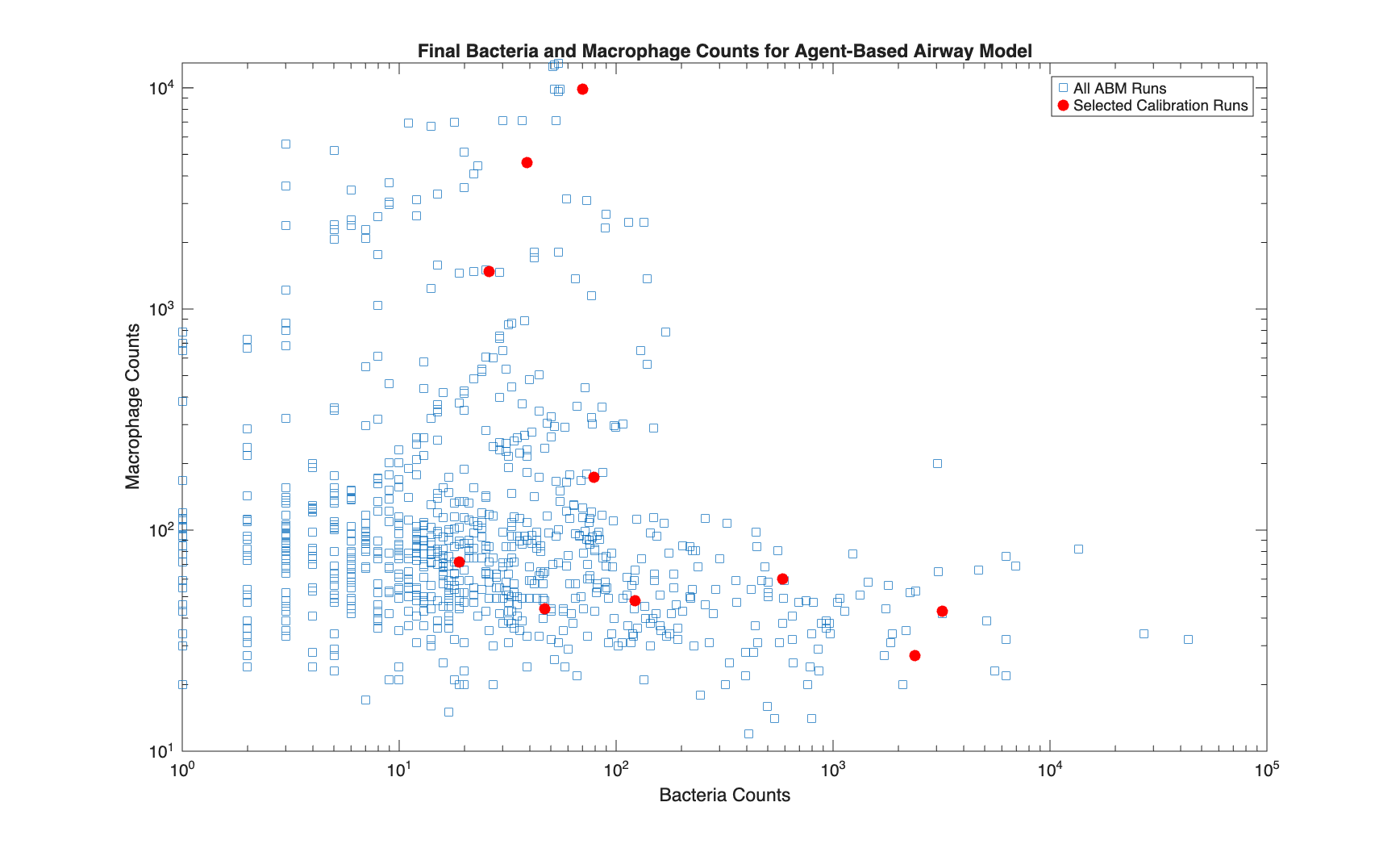}
    \caption{Final bacteria and macrophage counts from the nontuberculous mycobacteria (NTM) airway infection agent-based model. The runs selected for calibration of the partial differential equation model are shown in red.}
    \label{fig:ABM runs scatter plot}
\end{figure}

\section{ABM to Partial Differential Equation (PDE) Model Parameter Conversions}
\subsection{NTM Bacteria Death Rate \texorpdfstring{$d_B$}{dB}}
In the ABM, infected macrophages may kill an intracellular NTM bacterium with a certain probability each 6-minute time step. The following method to convert this bacteria kill probability to a bacteria death rate was adapted from Jones et al. 2017 \cite{jones2017procedure}.
\begin{eqnarray*}
p(t)=1-e^{-rt} \\
1-p(t)=e^{-rt} \\
ln(1-p(t))=-rt \\
r=\frac{-ln(1-p(t)}{t}
\end{eqnarray*}
where p(t) is the probability at time t.

For a 6-minute ABM timestep:

\begin{eqnarray*}
r=\frac{-ln(1-\text{ABM bacteria kill probability})}{\text{6 min}}\times
\frac{\text{60 min}}{\text{1 hr}}\times \frac{\text{24 hr}}{\text{1 day}}\\
=-240ln(1-\text{ABM bacteria kill probability})
\end{eqnarray*}

To test the validity of this parameter conversion, we reran the 10 selected ABM simulations with no bacteria proliferation and compared the bacteria time course outputs with those from an ODE model of bacteria death, shown below.

\begin{equation*}
\frac{dB}{dt}=-240ln(1-\text{ABM bacteria kill probability}) \times B
\end{equation*}

where B is bacteria concentration in $g/cm^3$.

Figure~\ref{fig:ode bac death plots} shows the ODE and ABM bacteria time courses for each of the 10 runs, demonstrating that the ODE model recapitulates the ABM results and supporting the validity of the bacteria death parameter conversion.

\begin{figure}
    \centering
    \includegraphics[width=0.8\linewidth,height=0.76\textheight,keepaspectratio]{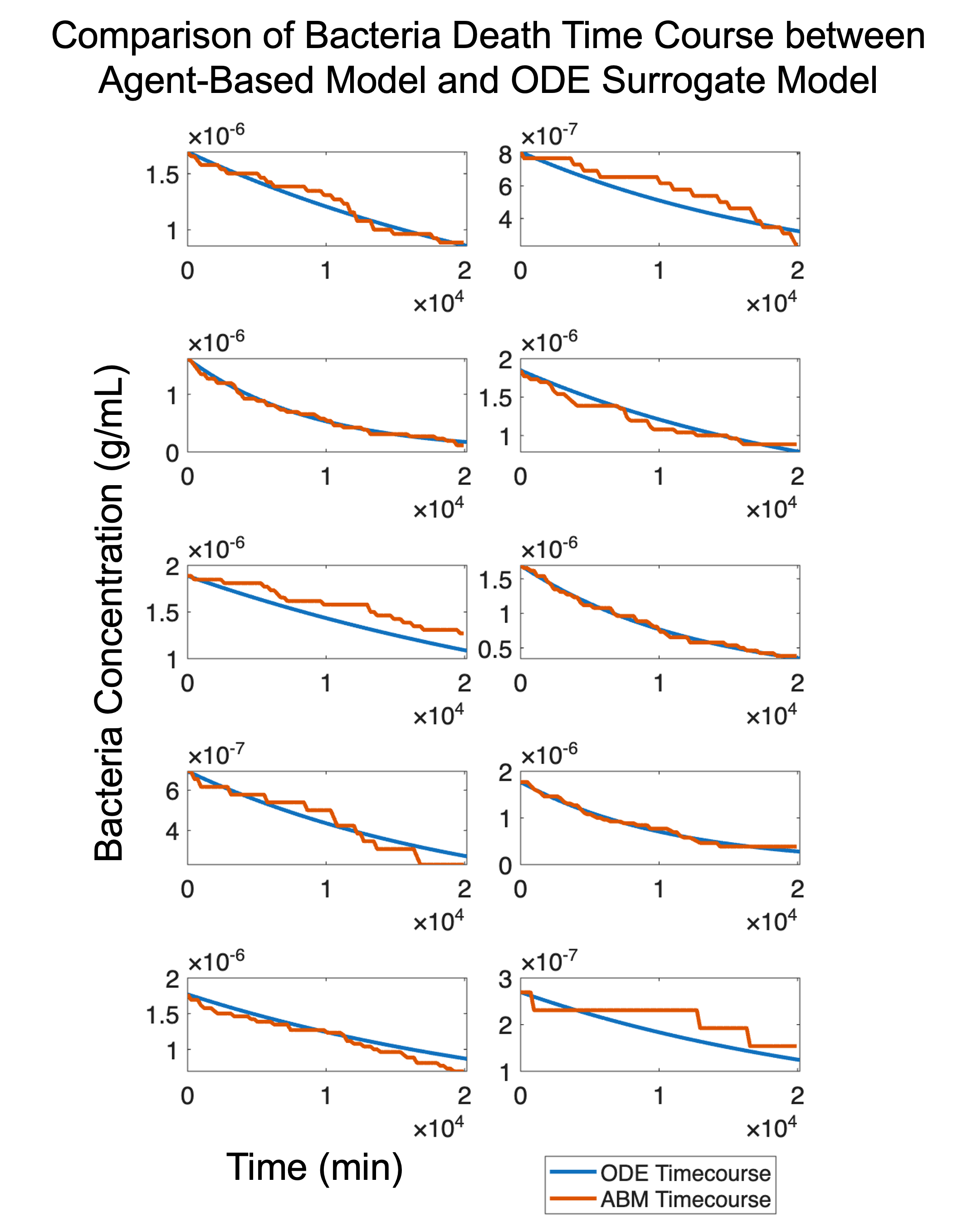}
    \caption{Bacteria concentration time courses between the ordinary differential equation bacteria death model and the adjusted agent-based model with no bacteria replication.}
    \label{fig:ode bac death plots}
\end{figure}

\subsection{Homeostatic Macrophage Recruitment \texorpdfstring{$\alpha_0$}{alpha0}}
In non-infection conditions, the number of newly recruited macrophages must counteract the number of macrophages that undergo apoptosis. Consider the following steady-state case

\begin{equation*}
\frac{dM}{dt}=r_h-d_MM=0
\end{equation*}
where $r_h$ is the macrophage recruitment at homeostasis and $d_M$ is the macrophage death rate. The number of newly recruited macrophages must counteract the concentration of macrophages lost. If a concentration of $M_0$ macrophages is initialized, then

{\small
\begin{equation*}
d_M M_0 =\frac{\text{(Number of newly recruited macrophages)}\text{(Macrophage density)}\text{(Macrophage volume)}}{\text{ABM Simulation Volume}}
\end{equation*}
}

\begin{equation*}
\text{Recruited macrophages per day} =\frac{d_M M_0 \text{(ABM Simulation Volume)}}{\text{(Macrophage density)}\text{(Macrophage volume)}}
\end{equation*}

In order to translate ABM cell agent data to the PDE model, we assume a well-mixed ABM simulation space. In the ABM, new macrophages are recruited at the bottom layer of the ABM simulation environment. Calculating the recruited macrophages per volume at the bottom layer of the ABM environment,

\begin{equation*}
\frac{d_M M_0 \text{(ABM Simulation Volume)}}{\text{(Macrophage density)}\text{(Macrophage volume)}\text{(Simulation Bottom Layer Volume)}}
\end{equation*}

After converting to units of concentration of macrophages,

{\small
\begin{equation*}
\begin{aligned}
&\frac{d_M M_0 \text{(ABM Simulation Volume)}}{\text{(Macrophage density)}\text{(Macrophage volume)}\text{(Simulation Bottom Layer Volume)}}\\
&\qquad \times (\text{Macrophage Density})(\text{Macrophage Volume})
\\&= \frac{d_M M_0 \text{(ABM Simulation Volume)}}{\text{(Simulation Bottom Layer Volume)}}
\\ &= \frac{d_M M_0 (\text{0.2} \times \text{0.2} \times \text{0.006 cm}^3)}{ \text{0.2} \times \text{0.2} \times \text{0.002 cm}^3}
\\ &= 3 \cdot d_M M_0 \text{ g/cm}^3 \cdot \text{day}
\end{aligned}
\end{equation*}
}

Assuming well-mixed conditions, this is equal to the concentration of macrophages per unit length of the ABM simulation environment, yielding the final conversion

\begin{equation*}
3 \cdot d_M M_0 \text{ g/cm}^4 \cdot \text{day}
\end{equation*}

\section{Units of model variables and parameters}
\begin{table}[ht]
\centering
\small
\caption{Variables and parameters in the PDE model and their units.}
\label{tab:pde_units}
\setlength{\tabcolsep}{4pt}
\renewcommand{\arraystretch}{1.05}
\begin{tabular}{lll}
\hline
\textbf{Symbol} & \textbf{Description} & \textbf{Unit} \\
\hline

$\bx$
& Spatial coordinate
& $\mathrm{cm}$ \\

$t$
& Time
& $\mathrm{day}$ \\

$B(\bx,t)$
& Bacterial density
& $\mathrm{g\,cm^{-3}}$ \\

$M(\bx,t)$
& Macrophage density
& $\mathrm{g\,cm^{-3}}$ \\

$m(\bx,t)$
& Mucus density
& $\mathrm{g\,cm^{-3}}$ \\

$\bm u(\bx,t)$
& Advection velocity
& $\mathrm{cm\,day^{-1}}$ \\

\hline

$\lambda_B$
& Bacterial growth rate
& $\mathrm{day^{-1}}$ \\

$K_B$
& Bacterial carrying capacity
& $\mathrm{g\,cm^{-3}}$ \\

$K_{Bi}$
& Half-saturation bacterial density
& $\mathrm{g\,cm^{-3}}$ \\

$d_B$
& Macrophage-mediated bacterial clearance rate
& $\mathrm{cm^3\,g^{-1}\,day^{-1}}$ \\

$d_{Mb}$
& Maximum macrophage burst rate
& $\mathrm{day^{-1}}$ \\

$d_M$
& Macrophage death rate
& $\mathrm{day^{-1}}$ \\

$d_m$
& Mucus degradation rate
& $\mathrm{day^{-1}}$ \\

\hline

$D_B( D_1,D_2)$
& Bacterial diffusion coefficients
& $\mathrm{cm^2\,day^{-1}}$ \\

$D_M$
& Macrophage diffusion coefficient
& $\mathrm{cm^2\,day^{-1}}$ \\

$D_m$
& Mucus diffusion coefficient
& $\mathrm{cm^2\,day^{-1}}$ \\

$\delta$
& Chemotaxis coefficient
& $\mathrm{cm^5\,g^{-1}\,day^{-1}}$ \\

\hline

$\gamma_B$
& Bacterial outflow coefficient
& $\mathrm{cm\,day^{-1}}$ \\

$\gamma_M$
& Macrophage outflow coefficient
& $\mathrm{cm\,day^{-1}}$ \\

$\gamma_m$
& Mucus outflow coefficient
& $\mathrm{cm\,day^{-1}}$ \\

$\tau_B$
& Bacterial interface permeability coefficient
& $\mathrm{cm^{-1}}$ \\

$\tau_M$
& Macrophage interface permeability coefficient
& $\mathrm{cm^{-1}}$ \\

$M_{\rm ref}$
& Reference macrophage density
& $\mathrm{g\,cm^{-3}}$ \\

$m_0$
& Reference mucus density
& $\mathrm{g\,cm^{-3}}$ \\

\hline

$\alpha(B)$
& Macrophage recruitment coefficient
& $\mathrm{cm\,day^{-1}}$ \\

$\alpha_0$
& Baseline macrophage recruitment coefficient
& $\mathrm{cm\,day^{-1}}$ \\

$\alpha_1$
& Bacteria-dependent recruitment coefficient
& $\mathrm{cm^4\,g^{-1}\,day^{-1}}$ \\

$\beta(MB)$
& Mucus production coefficient
& $\mathrm{cm\,day^{-1}}$ \\

$\beta_0$
& Baseline mucus production coefficient
& $\mathrm{cm\,day^{-1}}$ \\

$\beta_1$
& Inflammation-dependent mucus production coefficient
& $\mathrm{cm^7\,g^{-2}\,day^{-1}}$ \\

\hline

$\eta$
&  viscosity factor
& Dimensionless \\

$\eta_0$
& Reference viscosity
& Dimensionless \\

$d_{\rm anti}$
& Treatment-induced bacterial clearance rate
& $\mathrm{day^{-1}}$ \\

$\zeta$
& Treatment-strength parameter
& Dimensionless \\

\hline
\end{tabular}
\end{table}
\section{Calibration parameters}\label{apdx:calibration}
The model parameters selected for calibration are summarized in Table~\ref{tab:calibration-params}.
\begin{table}[H]
\centering
\begin{tabular}{ccccccc}
\hline
$D_B$ & $\lambda_B$ & $K_B$ & $K_{Bi}$ & $d_{Mb}$ & $d_B$ & $\tau_B$ \\
\hline
$D_M$ & $\delta$ & $d_M$ & $\tau_M$ & $M_0$ & $\alpha_0$ & $\alpha_1$ \\
\hline

\end{tabular}
\caption{List of model parameters subject to calibration.}
\label{tab:calibration-params}
\end{table}

\section{Evolution Results Based on Calibration}

This section presents the system evolution for different parameter sets. As in the main text, the results compare the ABM-derived data with the calibrated PDE simulations in the mucus region and the corresponding PDE predictions in the tissue region. The spatial evolution of the PDE system is shown at selected time points, both with and without the mucus velocity field, in Figures~\ref{fig:appendix-set048-velocity-comparison}--\ref{fig:appendix-set843-velocity-comparison}.

\begin{figure}[H]
    \centering
    \begin{subfigure}{0.78\textwidth}
        \centering
        \includegraphics[width=\linewidth,height=0.76\textheight,keepaspectratio]{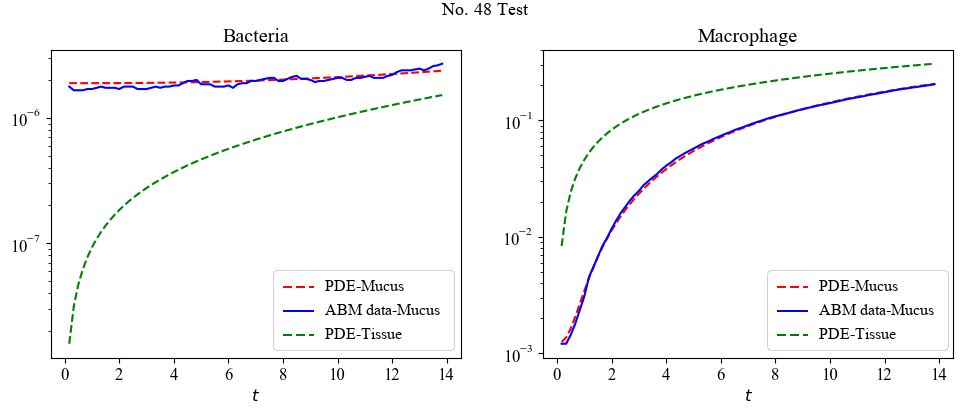}
        \caption{Concentration fitting.}
        \label{fig:appendix-set048-calibration}
    \end{subfigure}

    \vspace{0.5em}
    \begin{subfigure}{0.30\textwidth}
        \centering
        \includegraphics[width=\linewidth,height=0.76\textheight,keepaspectratio]{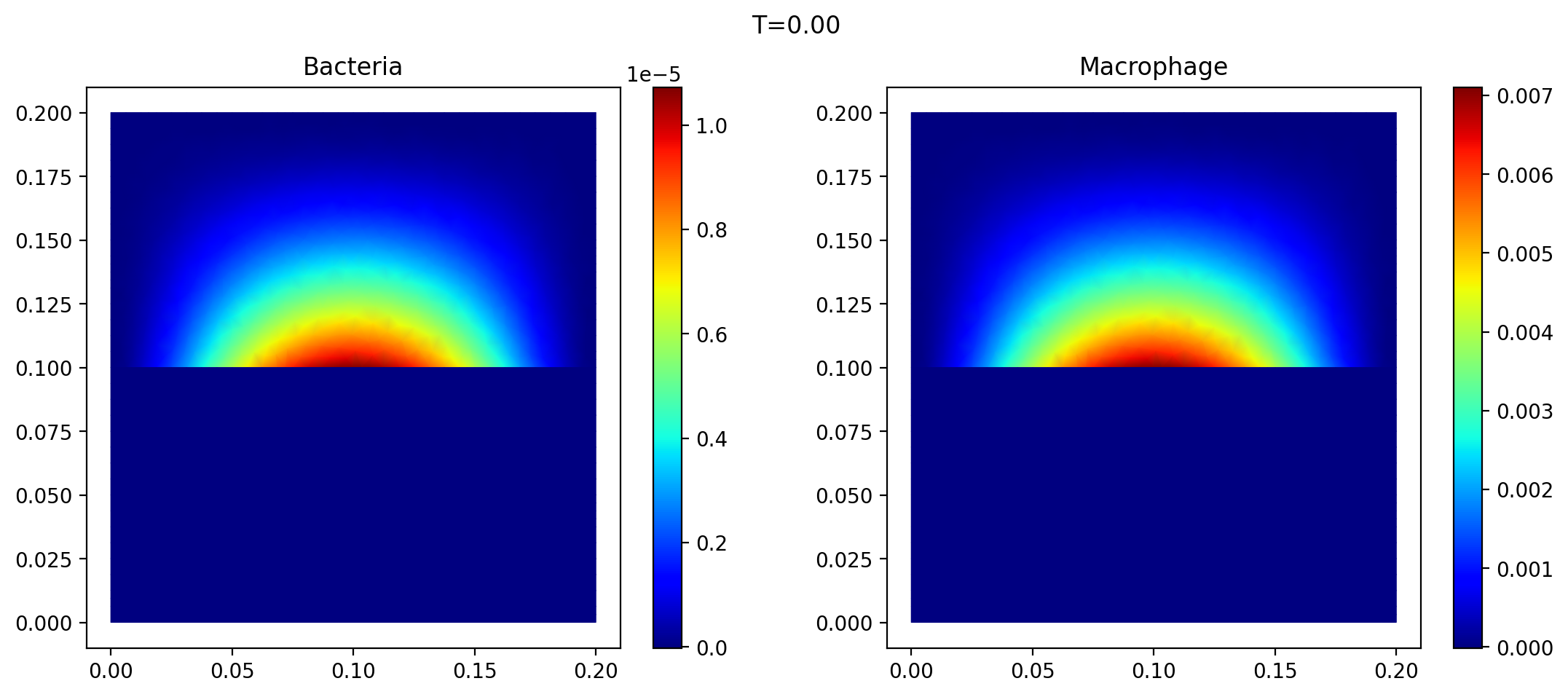}
        \caption{No velocity, $t=0$ day.}
        \label{fig:appendix-set048-no-velocity-t0}
    \end{subfigure}
    \hfill
    \begin{subfigure}{0.30\textwidth}
        \centering
        \includegraphics[width=\linewidth,height=0.76\textheight,keepaspectratio]{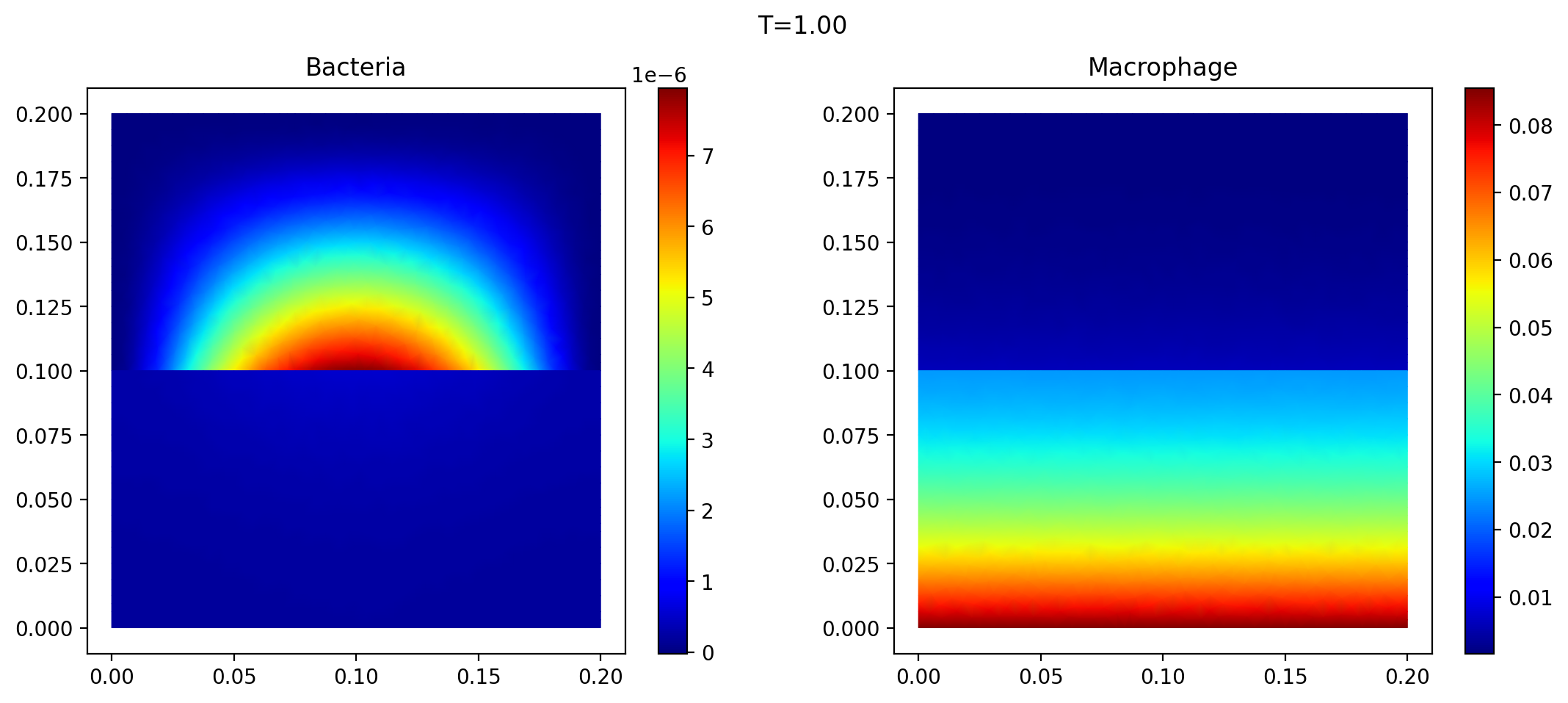}
        \caption{No velocity, $t=1$ day.}
        \label{fig:appendix-set048-no-velocity-t1}
    \end{subfigure}
    \hfill
    \begin{subfigure}{0.30\textwidth}
        \centering
        \includegraphics[width=\linewidth,height=0.76\textheight,keepaspectratio]{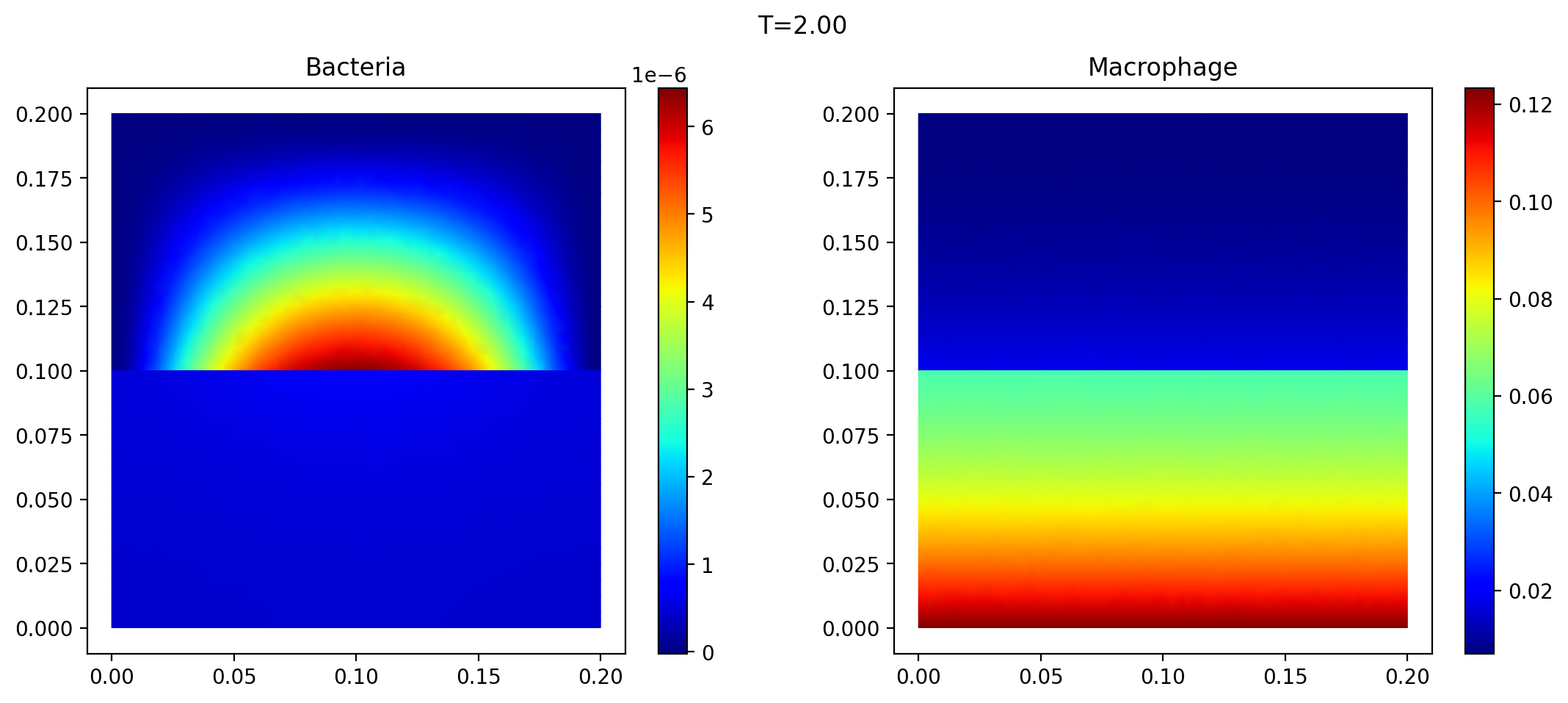}
        \caption{No velocity, $t=2$ day.}
        \label{fig:appendix-set048-no-velocity-t2}
    \end{subfigure}

    \vspace{0.5em}
    \begin{subfigure}{0.30\textwidth}
        \centering
        \includegraphics[width=\linewidth,height=0.76\textheight,keepaspectratio]{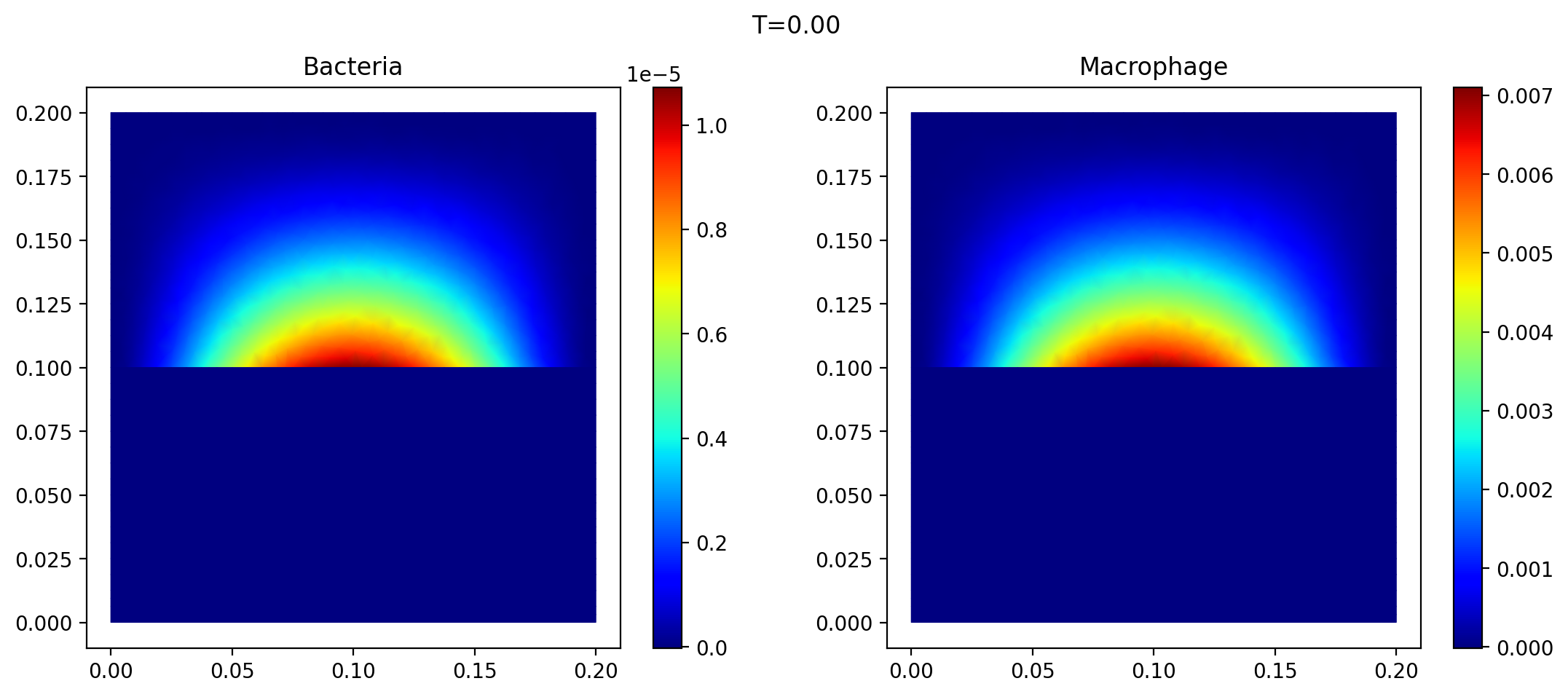}
        \caption{With velocity, $t=0$ day.}
        \label{fig:appendix-set048-velocity-t0}
    \end{subfigure}
    \hfill
    \begin{subfigure}{0.30\textwidth}
        \centering
        \includegraphics[width=\linewidth,height=0.76\textheight,keepaspectratio]{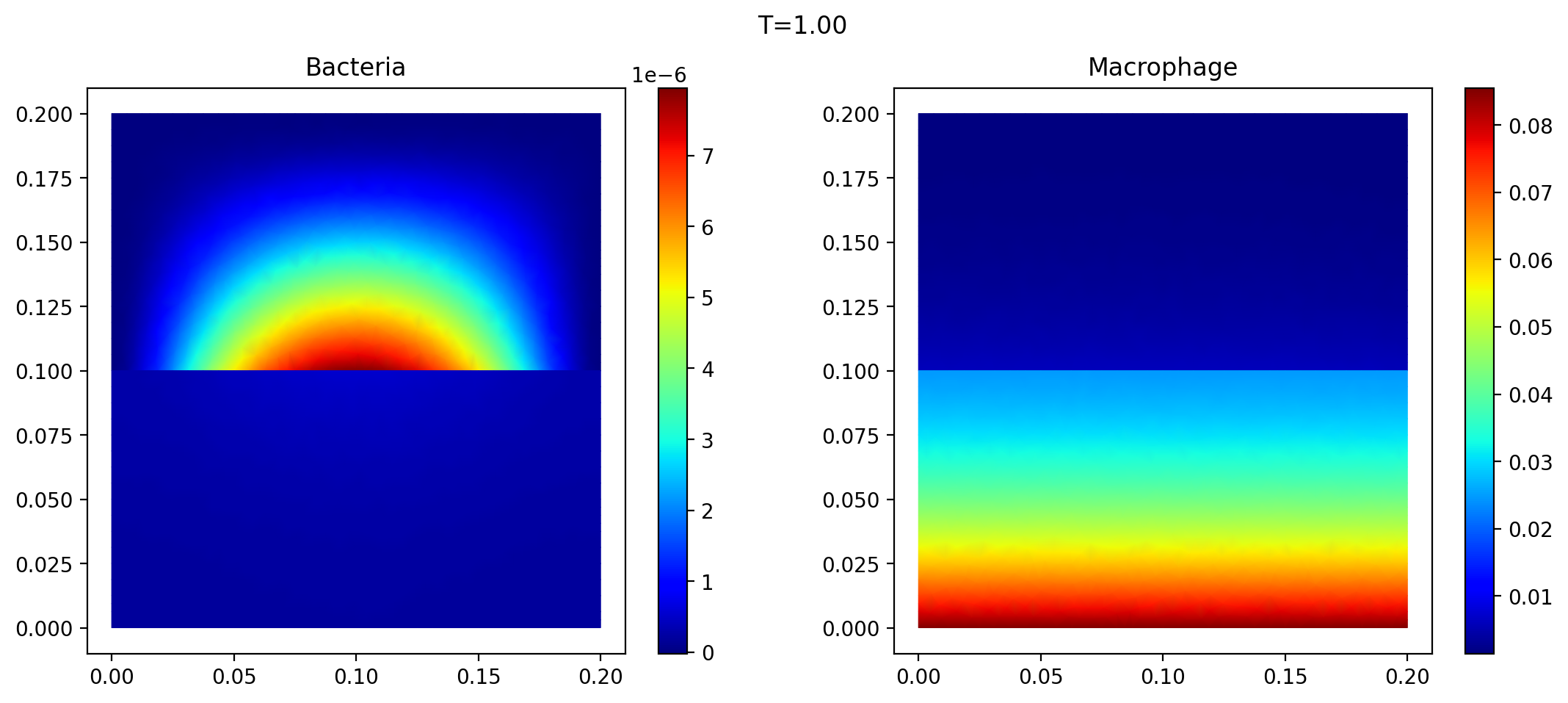}
        \caption{With velocity, $t=1$ day.}
        \label{fig:appendix-set048-velocity-t1}
    \end{subfigure}
    \hfill
    \begin{subfigure}{0.30\textwidth}
        \centering
        \includegraphics[width=\linewidth,height=0.76\textheight,keepaspectratio]{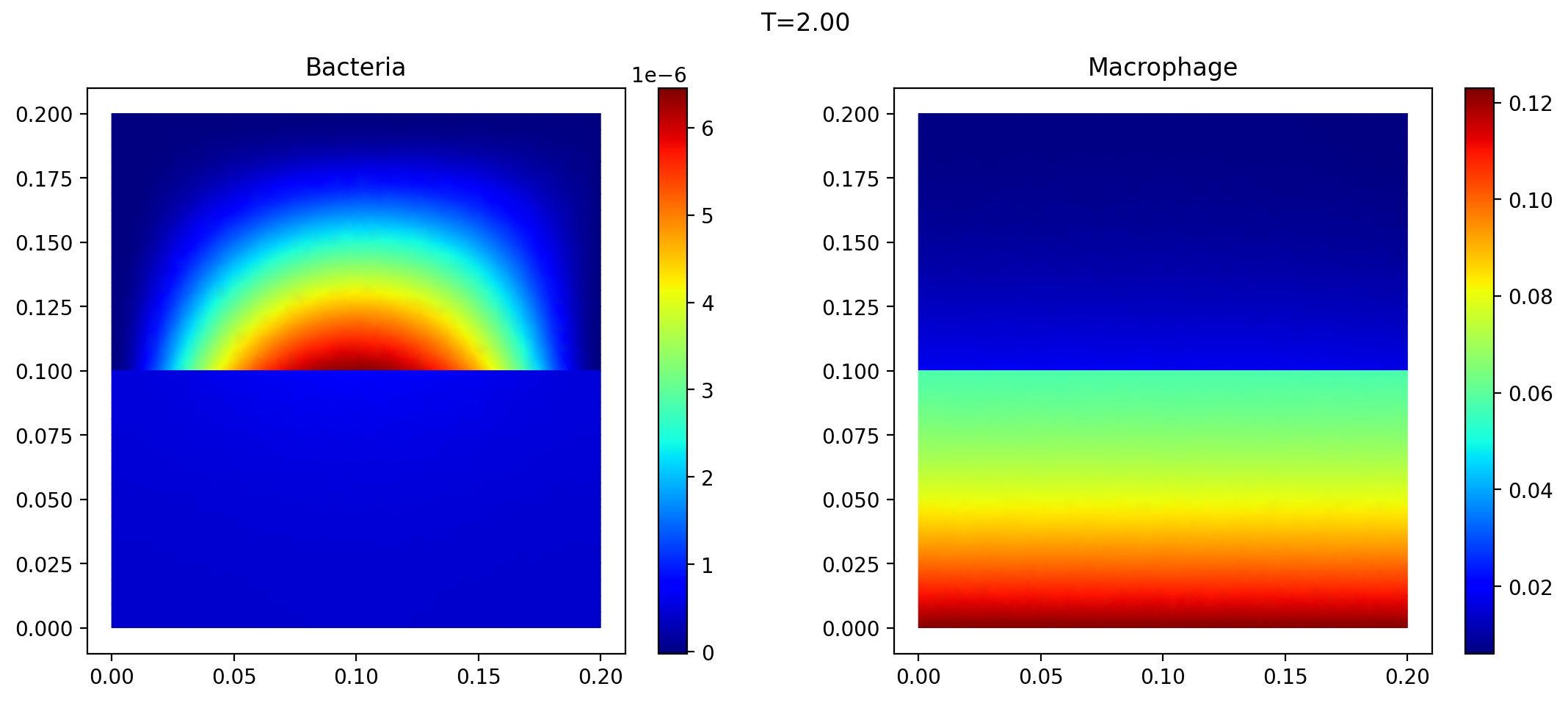}
        \caption{With velocity, $t= 2$ day.}
        \label{fig:appendix-set048-velocity-t2}
    \end{subfigure}

    \caption{
    Evolution of the system for parameter set No.~48, with and without the mucus velocity field.
    (a) Concentration fitting for bacteria and macrophages, with calibration reference data, mucus-region simulation, and tissue-region simulation shown in each plot.
    (b--d) Spatial distributions without mucus velocity field driven at $t=0$, $1$, and $2$ day.
    (e--g) Spatial distributions with the mucus velocity field driven at the same time points.
    Each spatial panel shows bacteria on the left and macrophages on the right.
    }
    \label{fig:appendix-set048-velocity-comparison}
\end{figure}
 \clearpage

\begin{figure}[H]
    \centering
    \begin{subfigure}{0.78\textwidth}
        \centering
        \includegraphics[width=\linewidth,height=0.76\textheight,keepaspectratio]{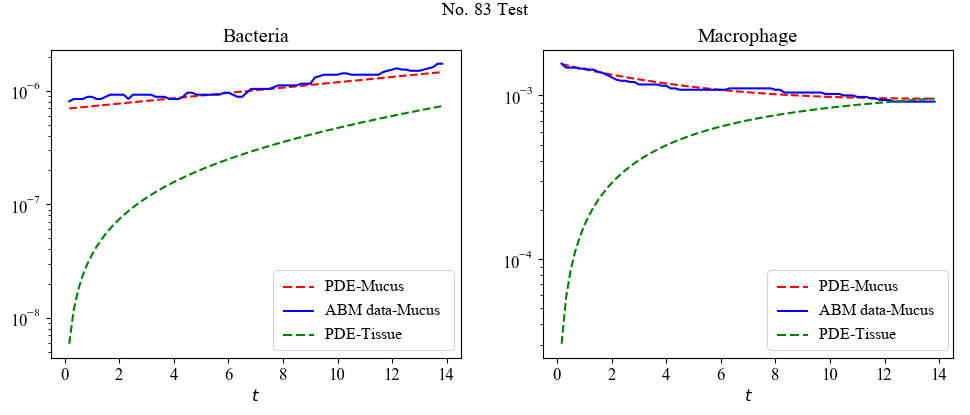}
        \caption{Concentration fitting.}
        \label{fig:appendix-set083-calibration}
    \end{subfigure}

    \vspace{0.5em}
    \begin{subfigure}{0.30\textwidth}
        \centering
        \includegraphics[width=\linewidth,height=0.76\textheight,keepaspectratio]{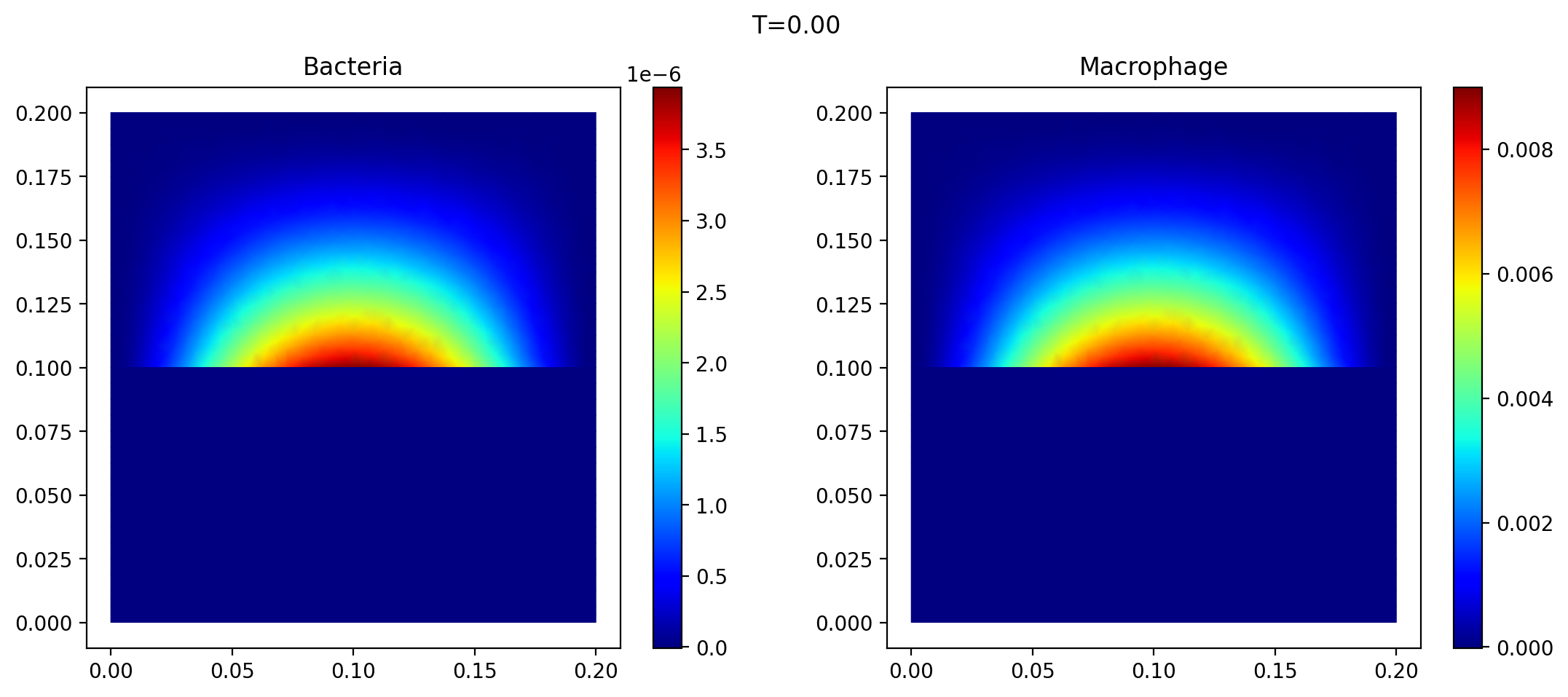}
        \caption{No velocity, $t=0$ day.}
        \label{fig:appendix-set083-no-velocity-t0}
    \end{subfigure}
    \hfill
    \begin{subfigure}{0.30\textwidth}
        \centering
        \includegraphics[width=\linewidth,height=0.76\textheight,keepaspectratio]{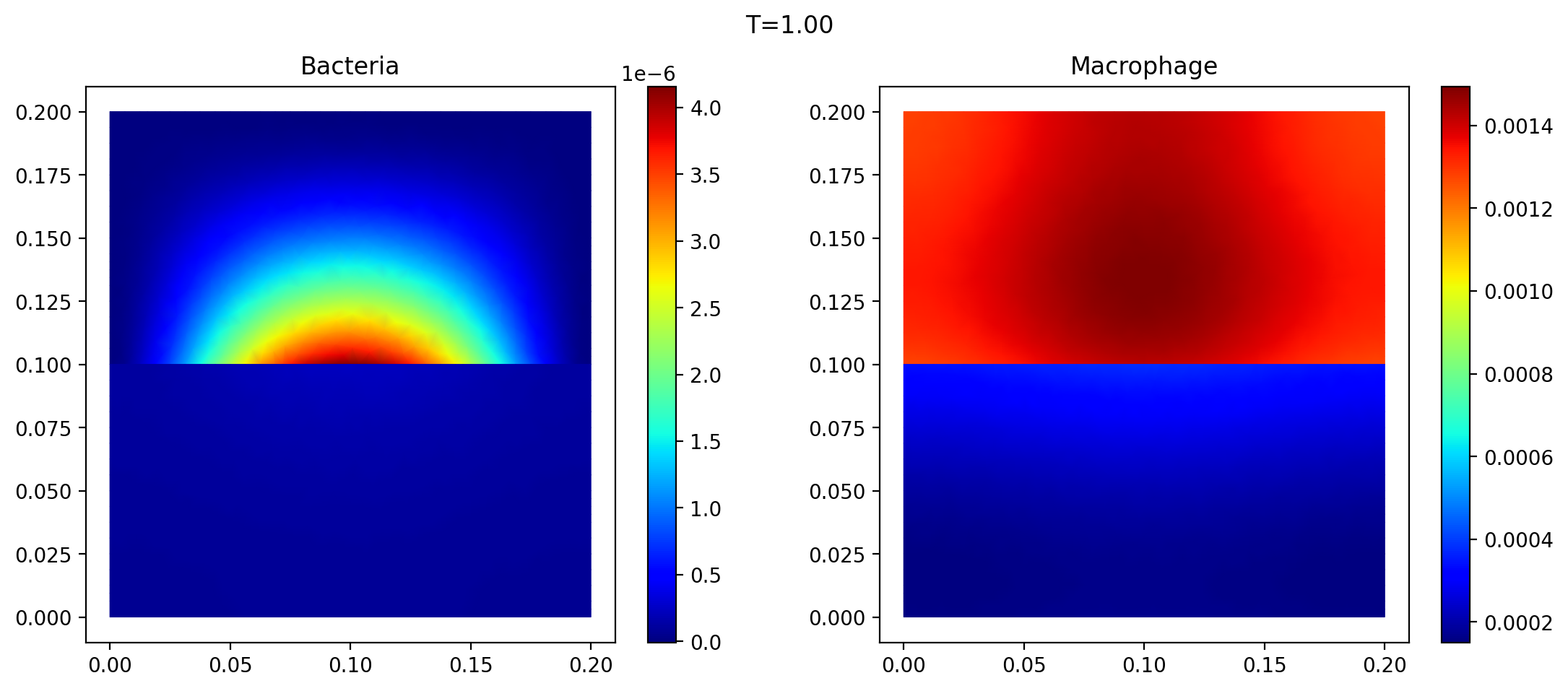}
        \caption{No velocity, $t=1$ day.}
        \label{fig:appendix-set083-no-velocity-t1}
    \end{subfigure}
    \hfill
    \begin{subfigure}{0.30\textwidth}
        \centering
        \includegraphics[width=\linewidth,height=0.76\textheight,keepaspectratio]{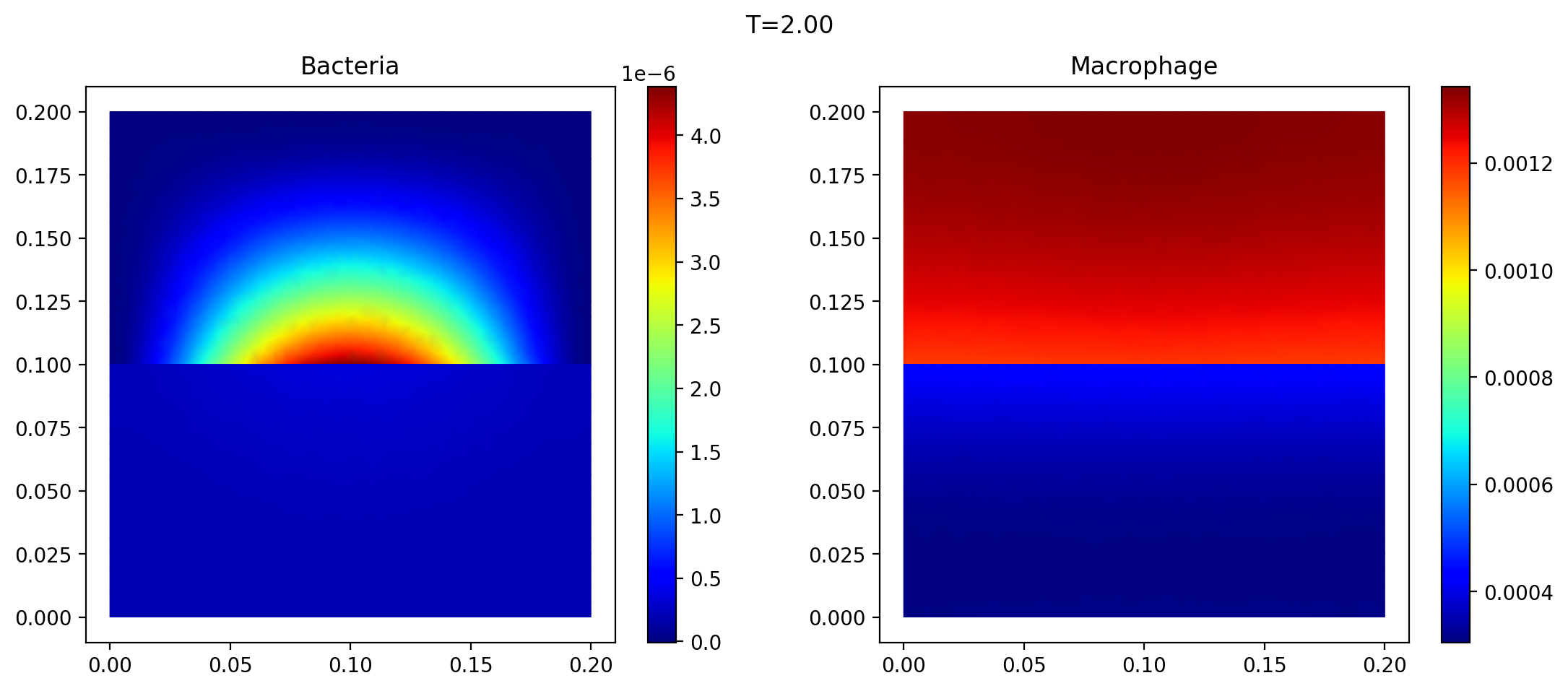}
        \caption{No velocity, $t=2$ day.}
        \label{fig:appendix-set083-no-velocity-t2}
    \end{subfigure}

    \vspace{0.5em}
    \begin{subfigure}{0.30\textwidth}
        \centering
        \includegraphics[width=\linewidth,height=0.76\textheight,keepaspectratio]{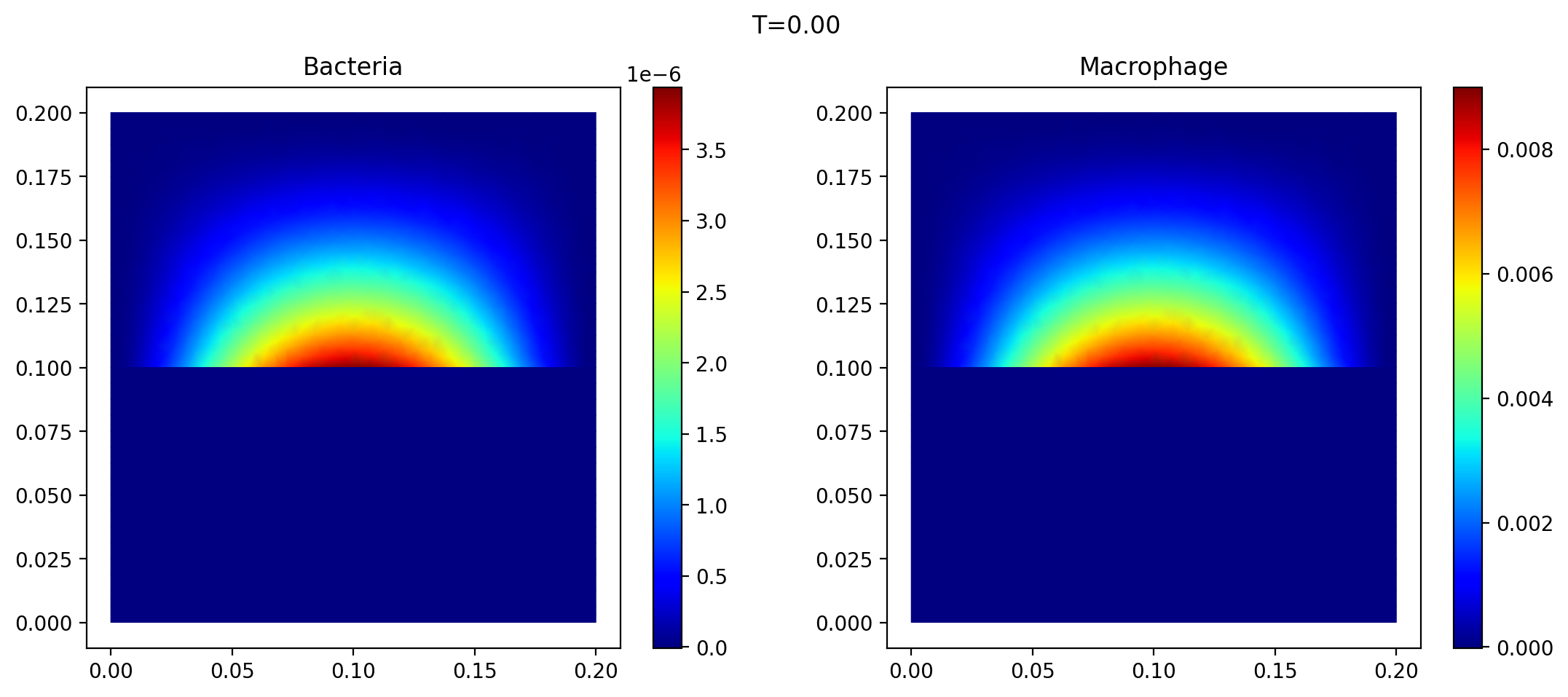}
        \caption{With velocity, $t=0$ day.}
        \label{fig:appendix-set083-velocity-t0}
    \end{subfigure}
    \hfill
    \begin{subfigure}{0.30\textwidth}
        \centering
        \includegraphics[width=\linewidth,height=0.76\textheight,keepaspectratio]{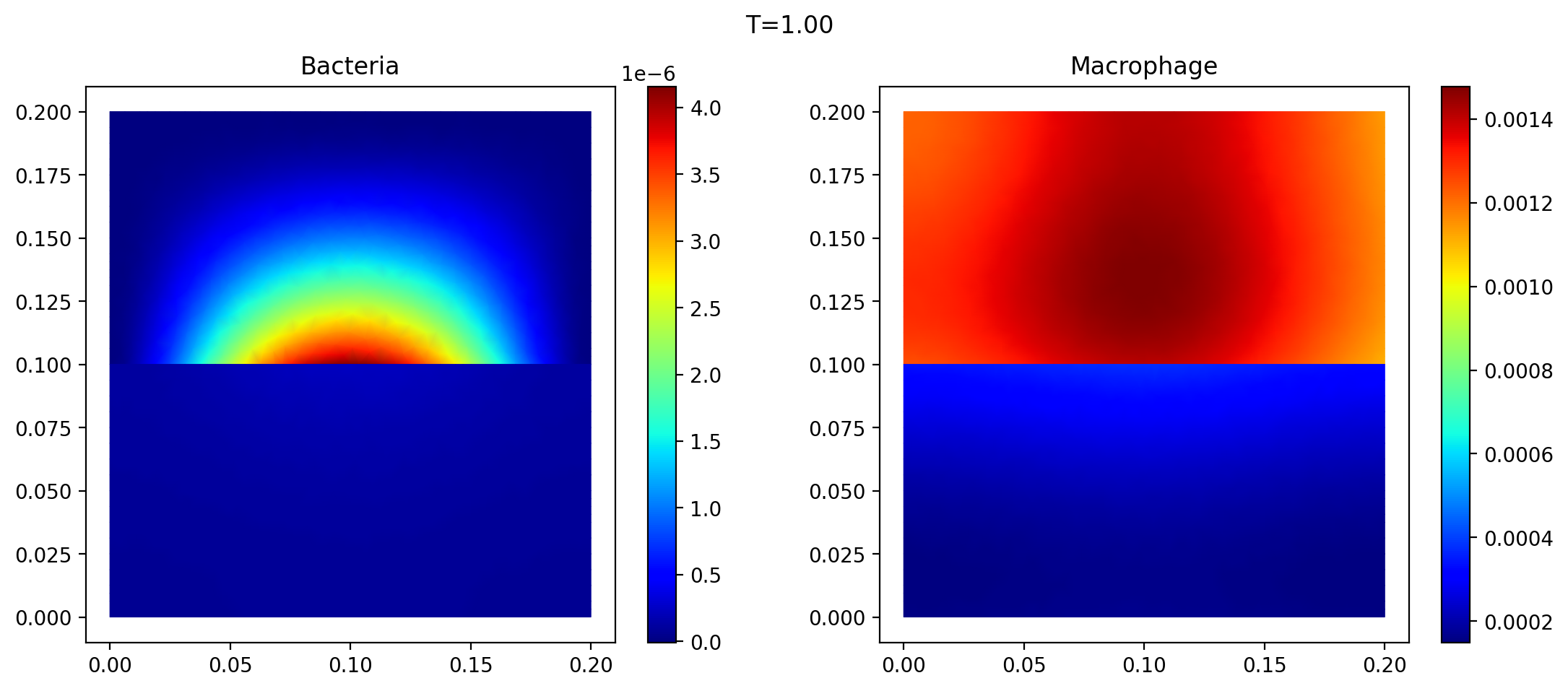}
        \caption{With velocity, $t=1$ day.}
        \label{fig:appendix-set083-velocity-t1}
    \end{subfigure}
    \hfill
    \begin{subfigure}{0.30\textwidth}
        \centering
        \includegraphics[width=\linewidth,height=0.76\textheight,keepaspectratio]{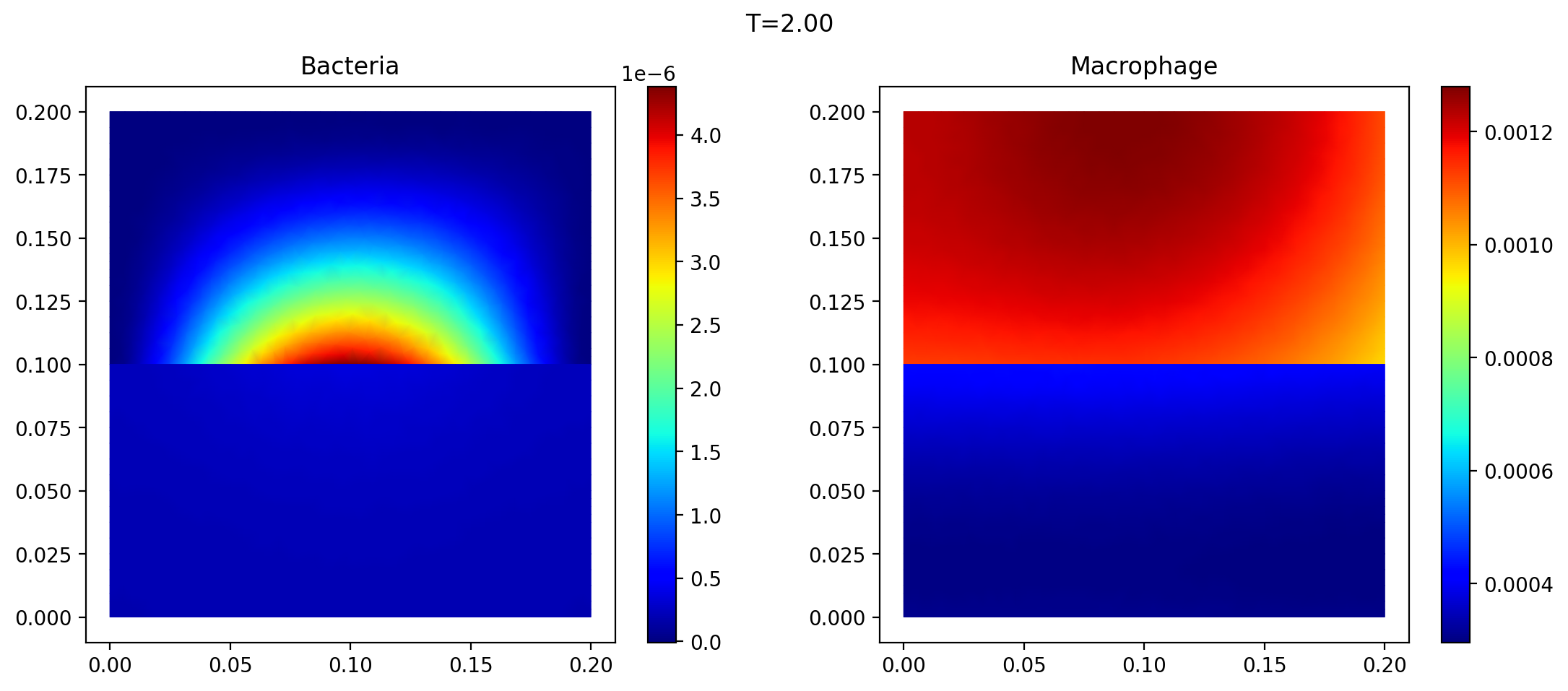}
        \caption{With velocity, $t= 2$ day.}
        \label{fig:appendix-set083-velocity-t2}
    \end{subfigure}

    \caption{
    Evolution of the system for parameter set No.~83, with and without the mucus velocity field.
    (a) Concentration fitting for bacteria and macrophages, with calibration reference data, mucus-region simulation, and tissue-region simulation shown in each plot.
    (b--d) Spatial distributions without mucus velocity field driven at $t=0$, $1$, and $2$ day.
    (e--g) Spatial distributions with the mucus velocity field driven at the same time points.
    Each spatial panel shows bacteria on the left and macrophages on the right.
    }
    \label{fig:appendix-set083-velocity-comparison}
\end{figure}
 \clearpage

\begin{figure}[H]
    \centering
    \begin{subfigure}{0.78\textwidth}
        \centering
        \includegraphics[width=\linewidth,height=0.76\textheight,keepaspectratio]{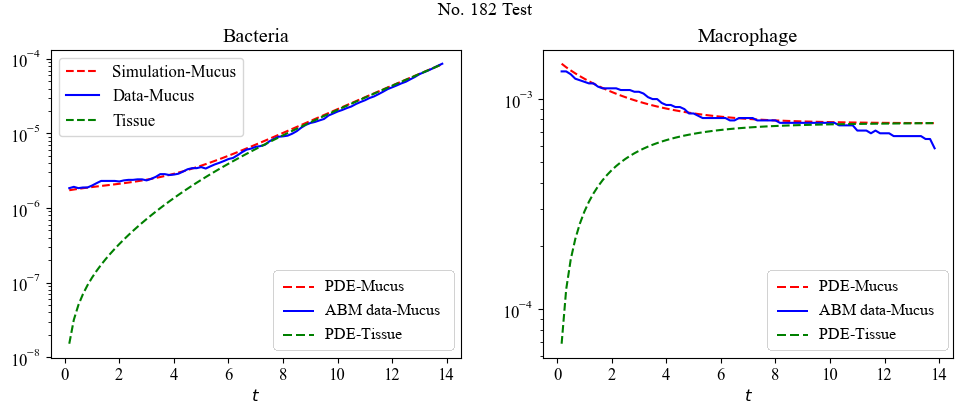}
        \caption{Concentration fitting.}
        \label{fig:appendix-set182-calibration}
    \end{subfigure}

    \vspace{0.5em}
    \begin{subfigure}{0.30\textwidth}
        \centering
        \includegraphics[width=\linewidth,height=0.76\textheight,keepaspectratio]{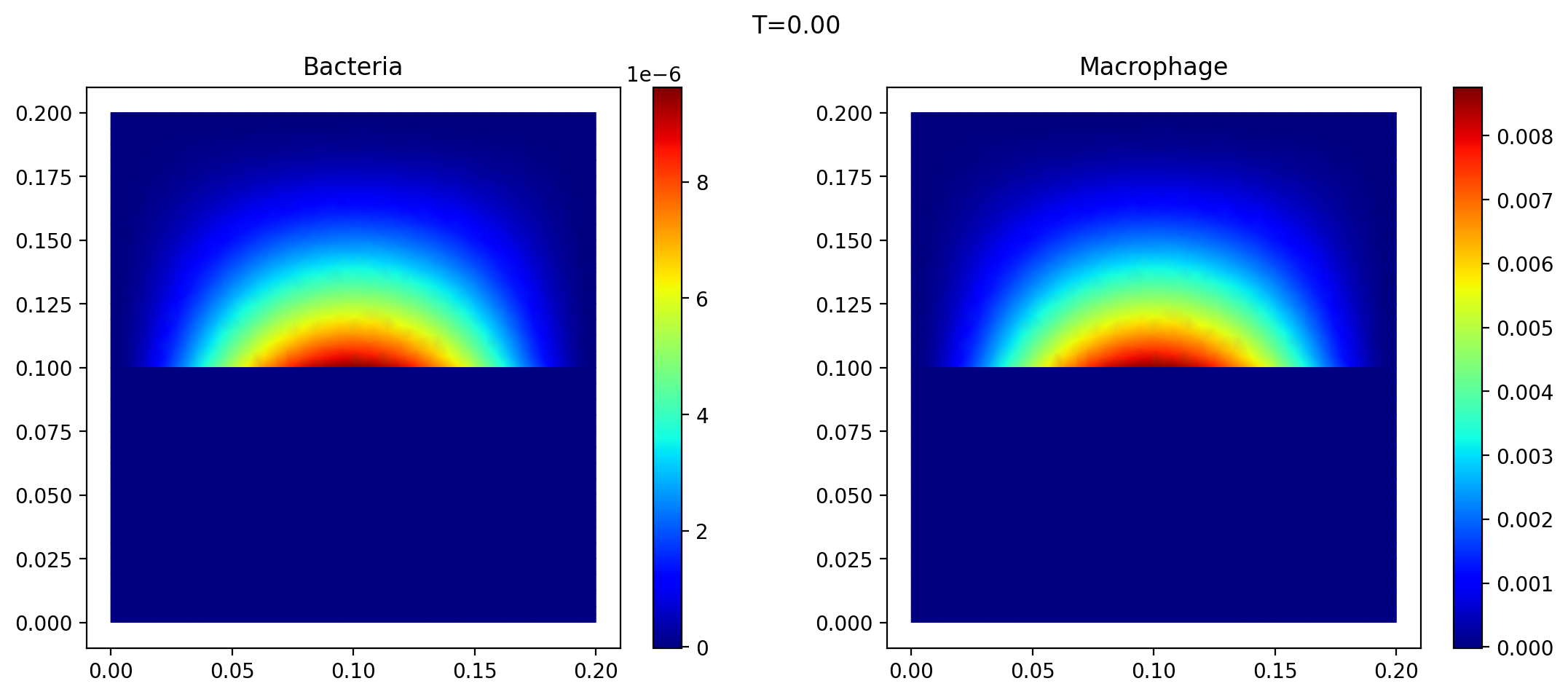}
        \caption{No velocity, $t=0$ day.}
        \label{fig:appendix-set182-no-velocity-t0}
    \end{subfigure}
    \hfill
    \begin{subfigure}{0.30\textwidth}
        \centering
        \includegraphics[width=\linewidth,height=0.76\textheight,keepaspectratio]{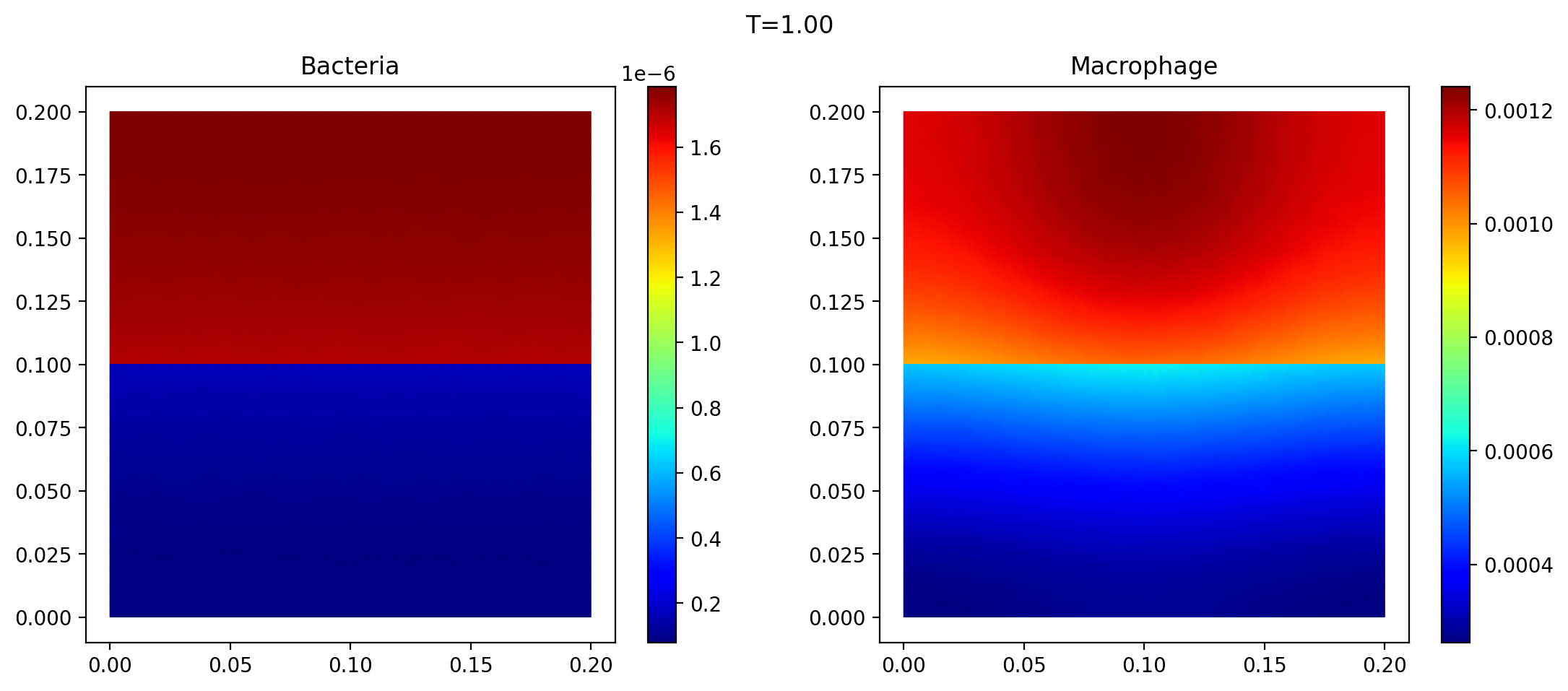}
        \caption{No velocity, $t=1$ day.}
        \label{fig:appendix-set182-no-velocity-t1}
    \end{subfigure}
    \hfill
    \begin{subfigure}{0.30\textwidth}
        \centering
        \includegraphics[width=\linewidth,height=0.76\textheight,keepaspectratio]{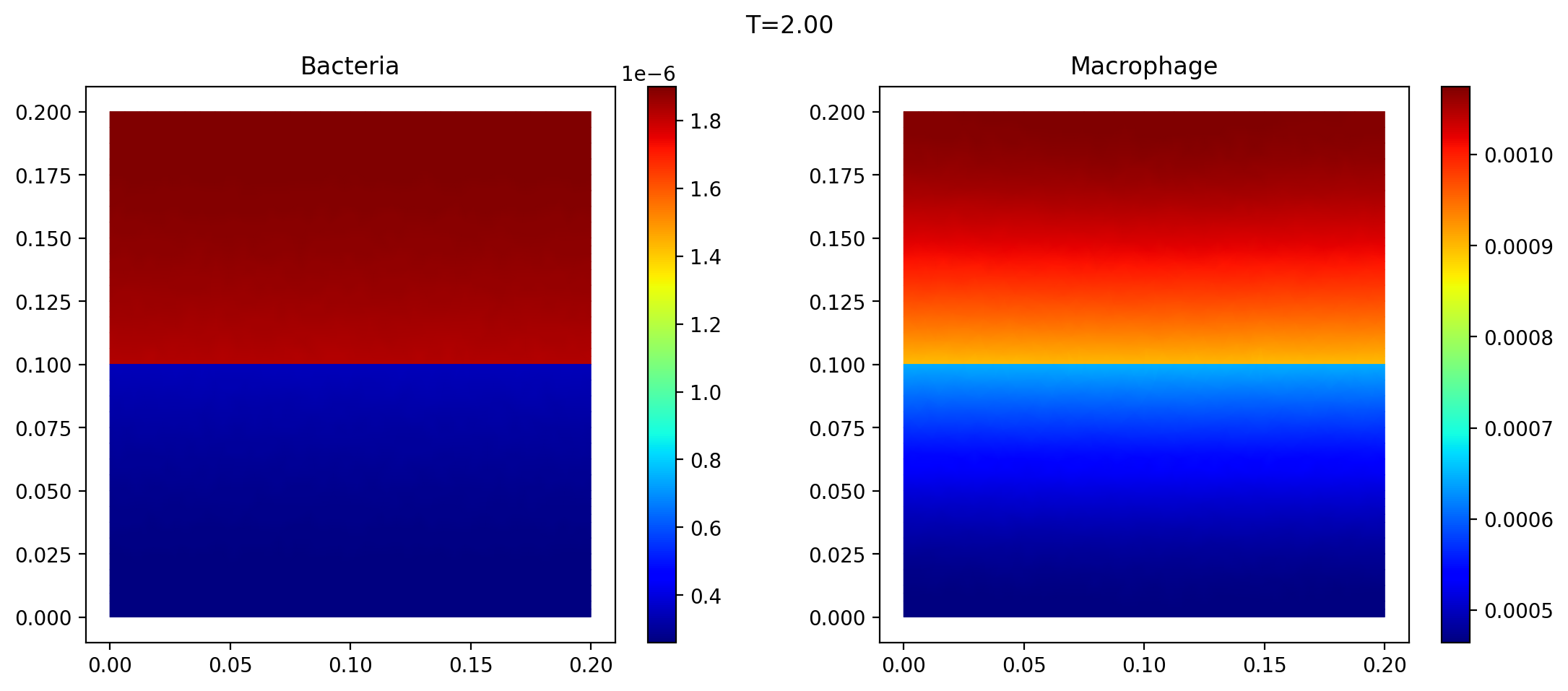}
        \caption{No velocity, $t=2$ day.}
        \label{fig:appendix-set182-no-velocity-t2}
    \end{subfigure}

    \vspace{0.5em}
    \begin{subfigure}{0.30\textwidth}
        \centering
        \includegraphics[width=\linewidth,height=0.76\textheight,keepaspectratio]{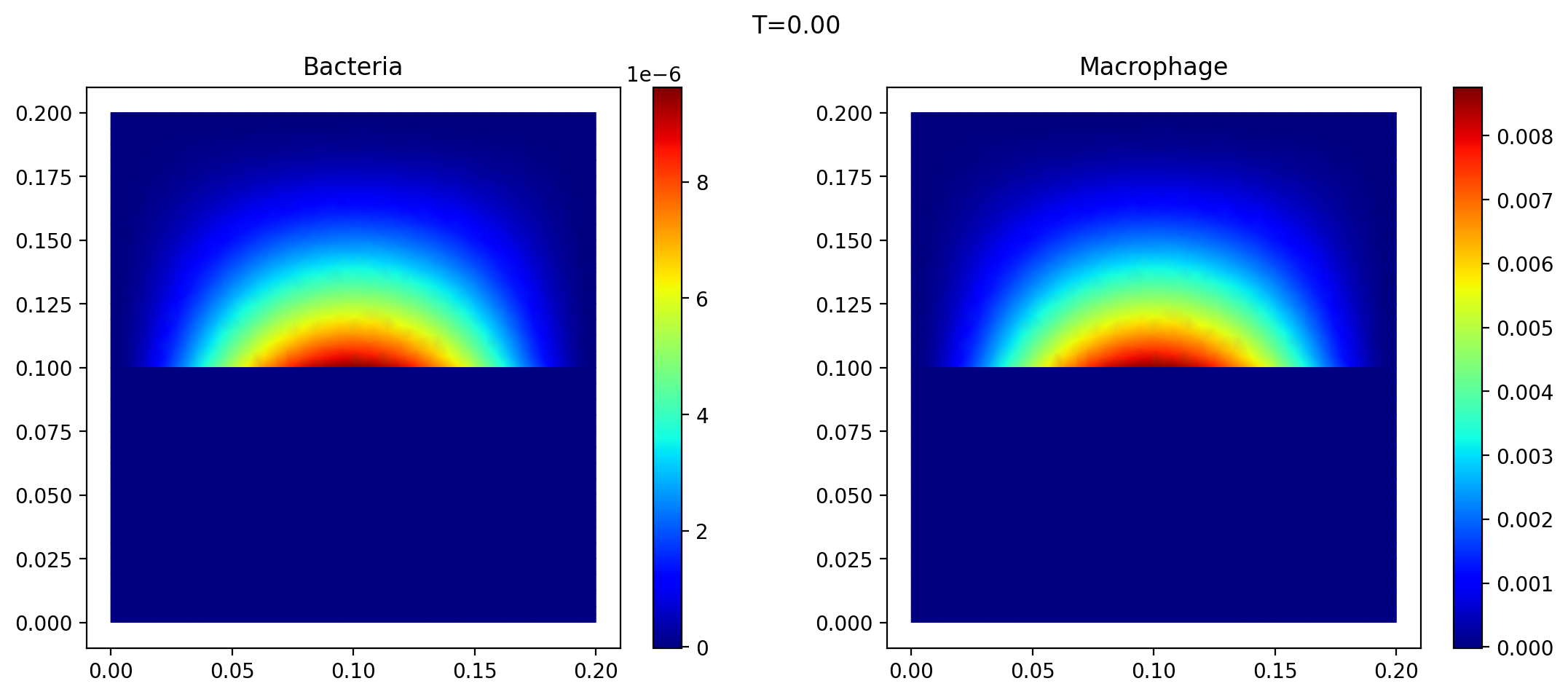}
        \caption{With velocity, $t=0$ day.}
        \label{fig:appendix-set182-velocity-t0}
    \end{subfigure}
    \hfill
    \begin{subfigure}{0.30\textwidth}
        \centering
        \includegraphics[width=\linewidth,height=0.76\textheight,keepaspectratio]{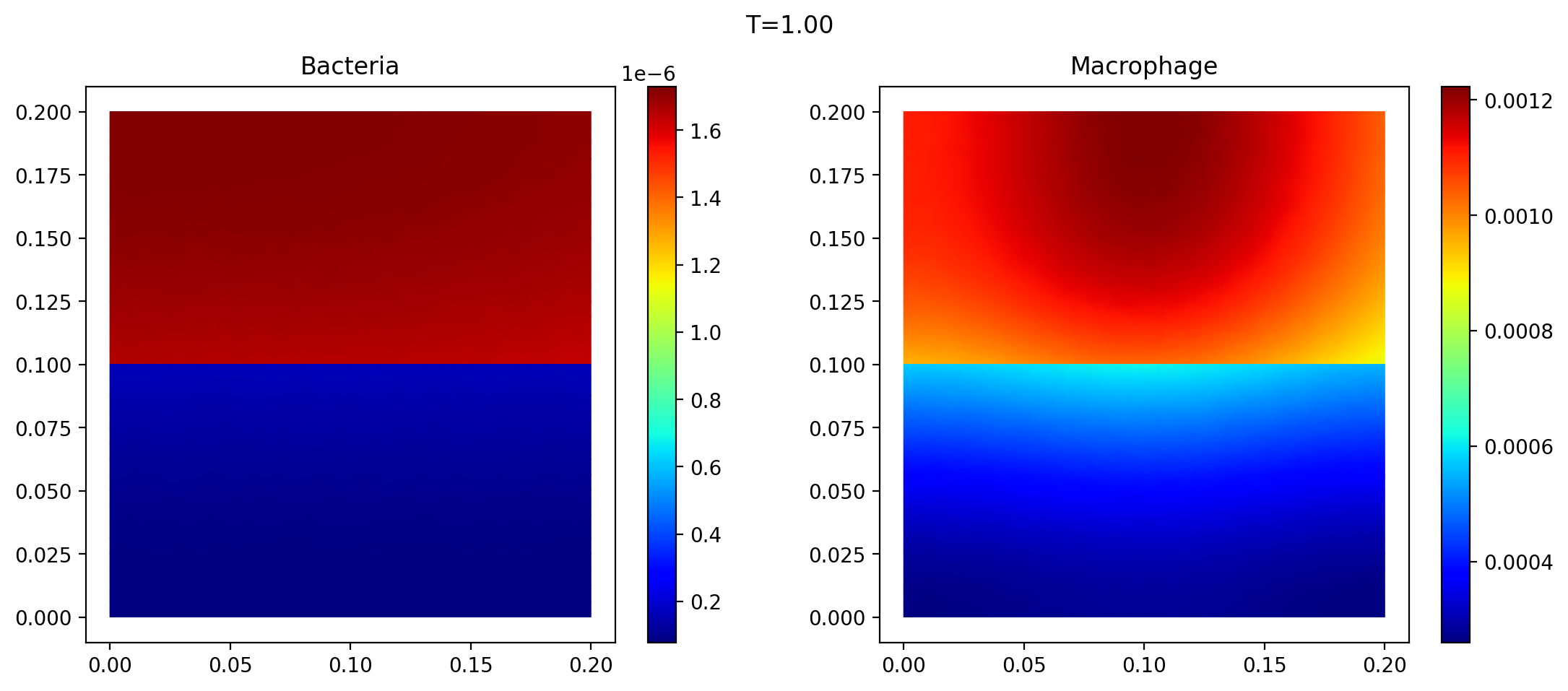}
        \caption{With velocity, $t=1$ day.}
        \label{fig:appendix-set182-velocity-t1}
    \end{subfigure}
    \hfill
    \begin{subfigure}{0.30\textwidth}
        \centering
        \includegraphics[width=\linewidth,height=0.76\textheight,keepaspectratio]{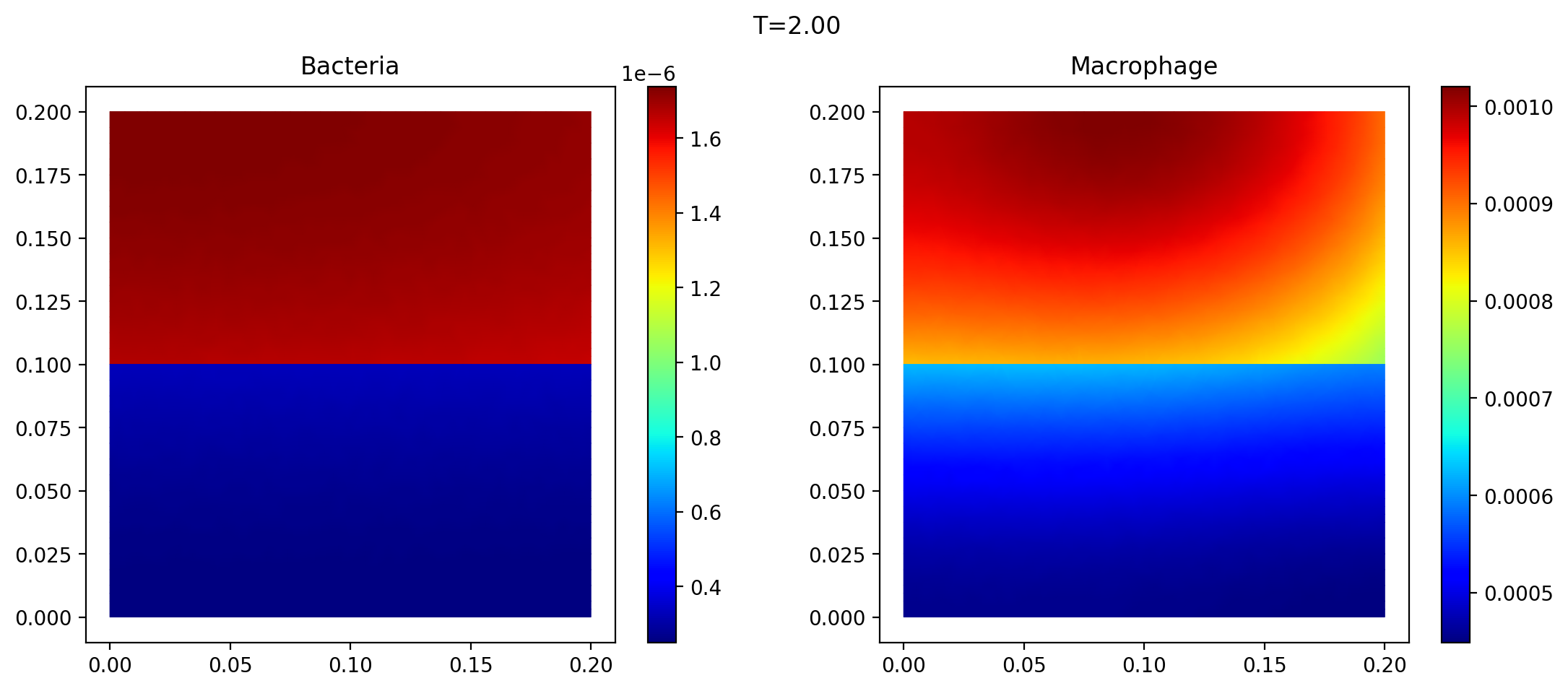}
        \caption{With velocity, $t= 2$ day.}
        \label{fig:appendix-set182-velocity-t2}
    \end{subfigure}

    \caption{
    Evolution of the system for parameter set No.~182, with and without the mucus velocity field.
    (a) Concentration fitting for bacteria and macrophages, with calibration reference data, mucus-region simulation, and tissue-region simulation shown in each plot.
    (b--d) Spatial distributions without mucus velocity field driven at $t=0$, $1$, and $2$ day.
    (e--g) Spatial distributions with the mucus velocity field driven at the same time points.
    Each spatial panel shows bacteria on the left and macrophages on the right.
    }
    \label{fig:appendix-set182-velocity-comparison}
\end{figure}
 \clearpage

\begin{figure}[H]
    \centering
    \begin{subfigure}{0.78\textwidth}
        \centering
        \includegraphics[width=\linewidth,height=0.76\textheight,keepaspectratio]{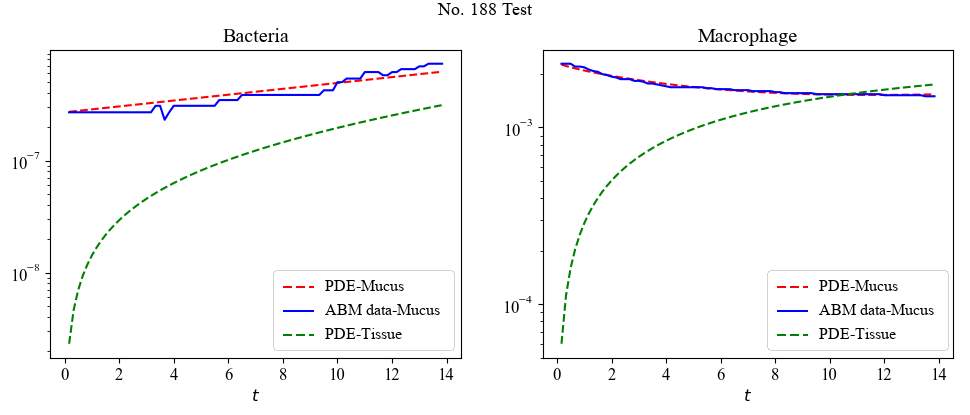}
        \caption{Concentration fitting.}
        \label{fig:appendix-set188-calibration}
    \end{subfigure}

    \vspace{0.5em}
    \begin{subfigure}{0.30\textwidth}
        \centering
        \includegraphics[width=\linewidth,height=0.76\textheight,keepaspectratio]{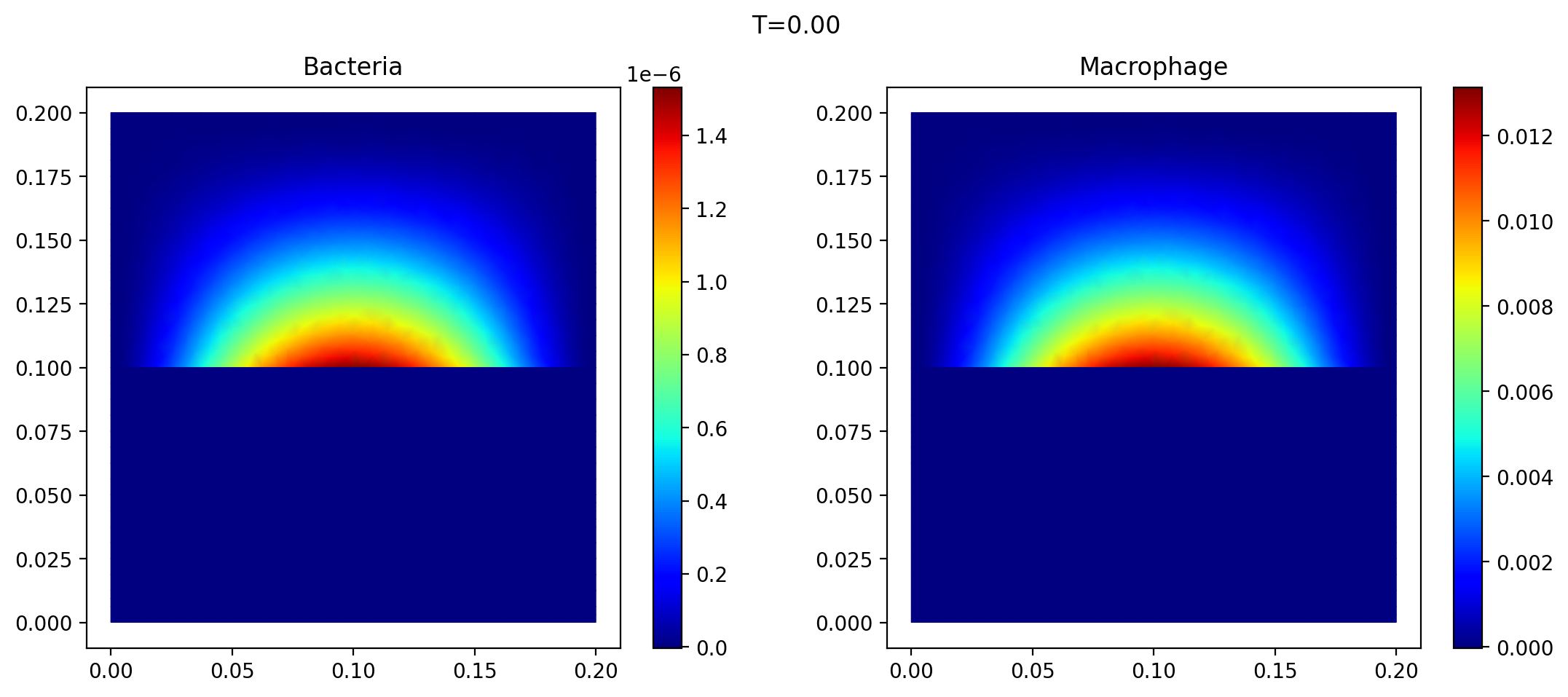}
        \caption{No velocity, $t=0$ day.}
        \label{fig:appendix-set188-no-velocity-t0}
    \end{subfigure}
    \hfill
    \begin{subfigure}{0.30\textwidth}
        \centering
        \includegraphics[width=\linewidth,height=0.76\textheight,keepaspectratio]{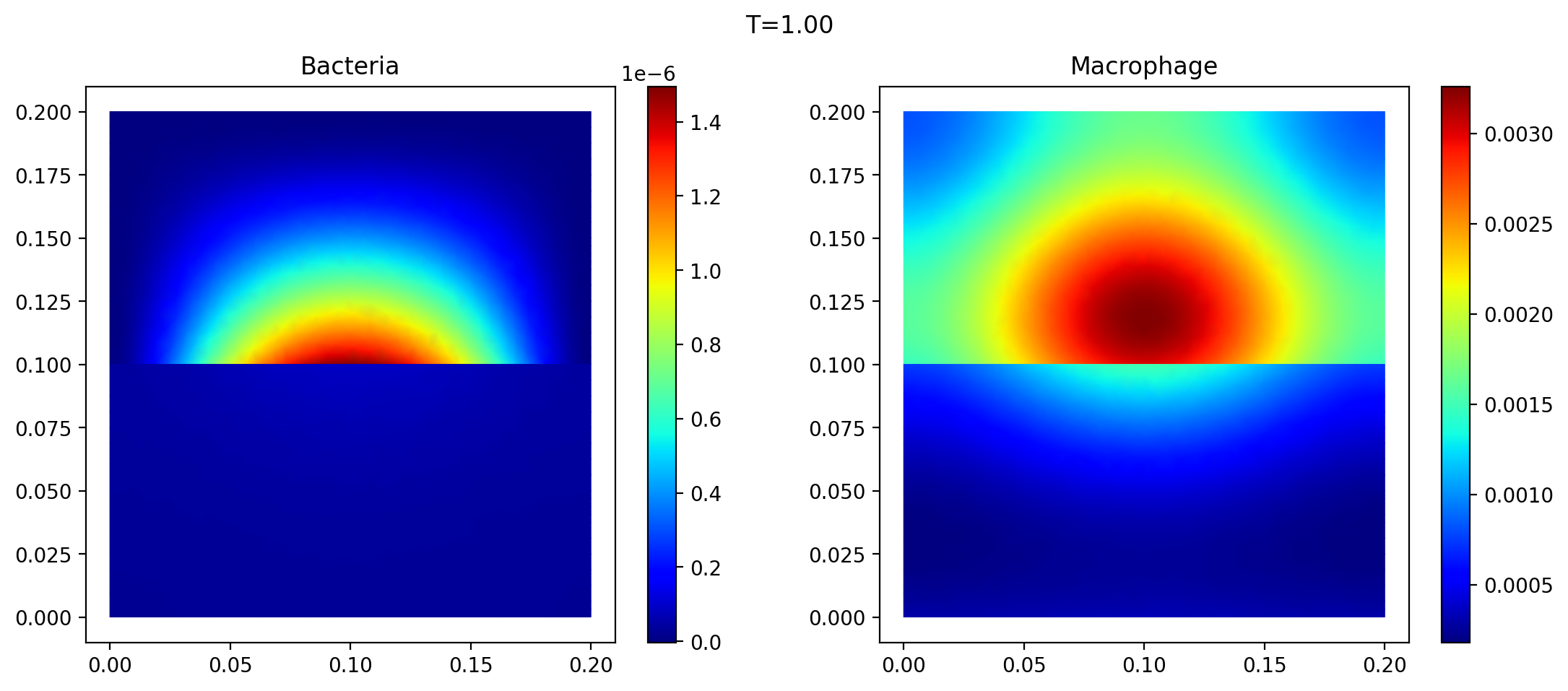}
        \caption{No velocity, $t=1$ day.}
        \label{fig:appendix-set188-no-velocity-t1}
    \end{subfigure}
    \hfill
    \begin{subfigure}{0.30\textwidth}
        \centering
        \includegraphics[width=\linewidth,height=0.76\textheight,keepaspectratio]{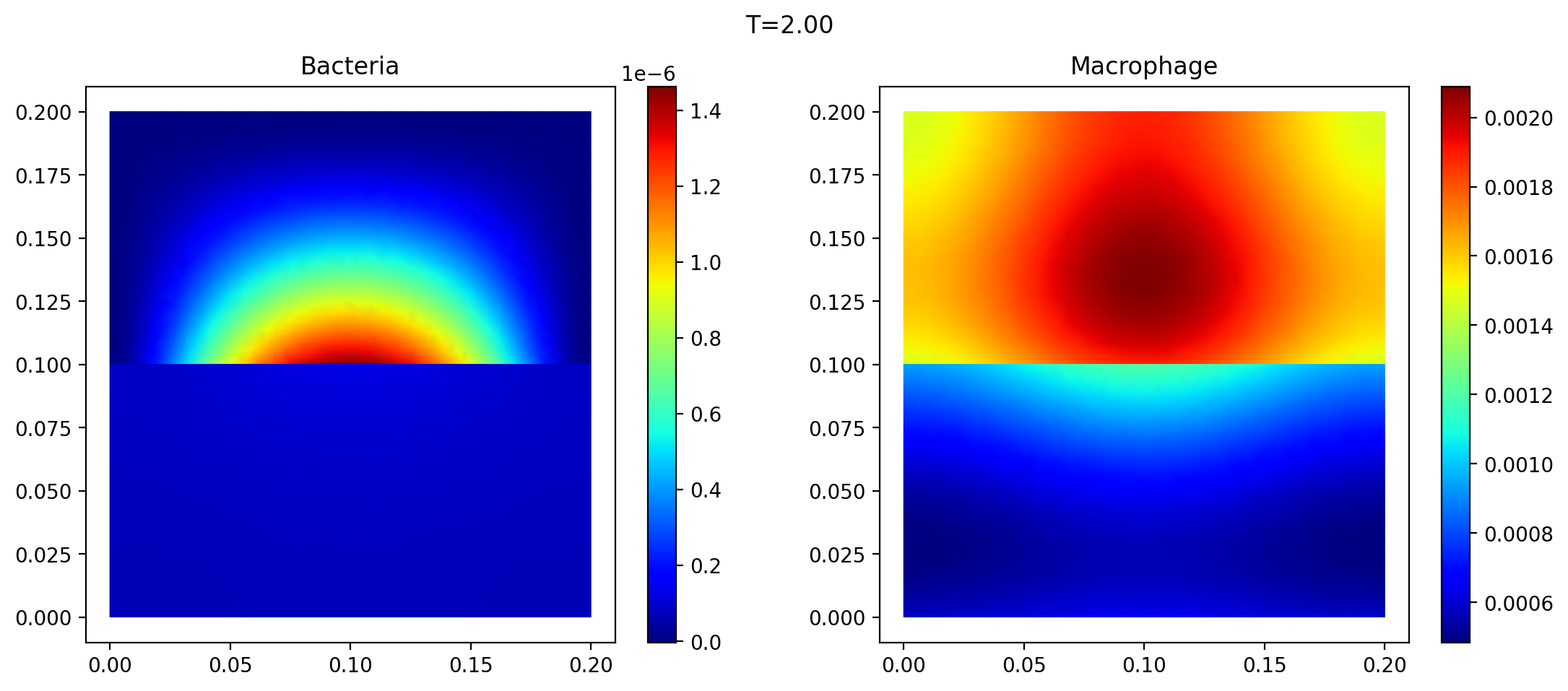}
        \caption{No velocity, $t=2$ day.}
        \label{fig:appendix-set188-no-velocity-t2}
    \end{subfigure}

    \vspace{0.5em}
    \begin{subfigure}{0.30\textwidth}
        \centering
        \includegraphics[width=\linewidth,height=0.76\textheight,keepaspectratio]{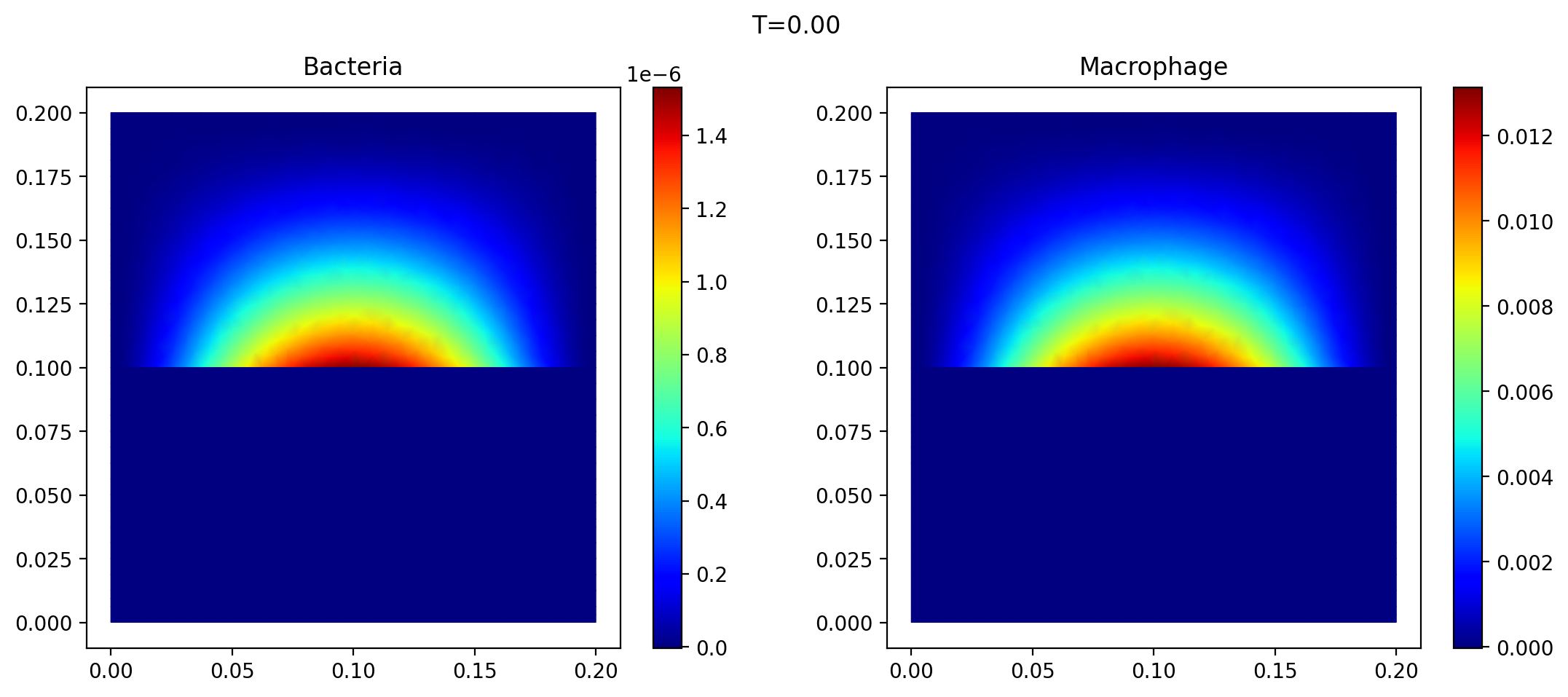}
        \caption{With velocity, $t=0$ day.}
        \label{fig:appendix-set188-velocity-t0}
    \end{subfigure}
    \hfill
    \begin{subfigure}{0.30\textwidth}
        \centering
        \includegraphics[width=\linewidth,height=0.76\textheight,keepaspectratio]{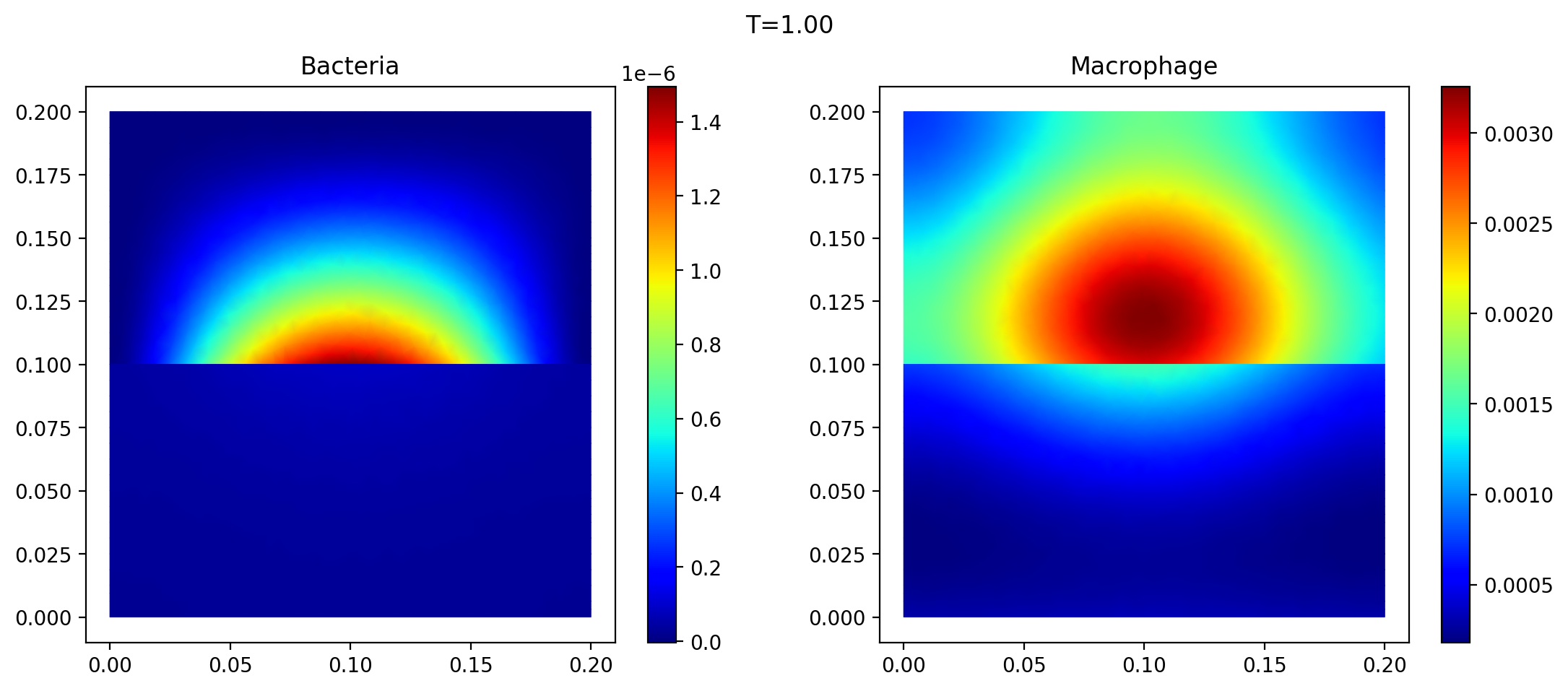}
        \caption{With velocity, $t=1$ day.}
        \label{fig:appendix-set188-velocity-t1}
    \end{subfigure}
    \hfill
    \begin{subfigure}{0.30\textwidth}
        \centering
        \includegraphics[width=\linewidth,height=0.76\textheight,keepaspectratio]{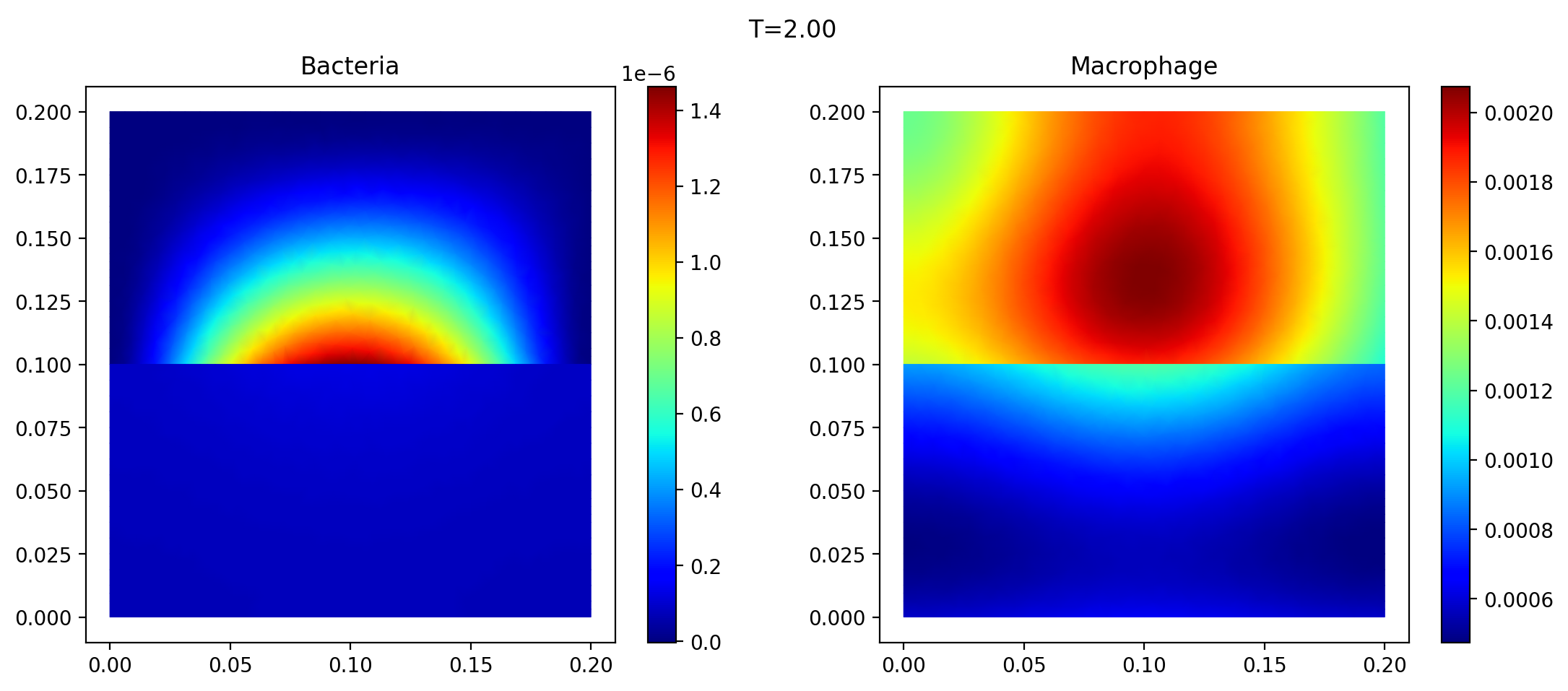}
        \caption{With velocity, $t= 2$ day.}
        \label{fig:appendix-set188-velocity-t2}
    \end{subfigure}

    \caption{
    Evolution of the system for parameter set No.~188, with and without the mucus velocity field.
    (a) Concentration fitting for bacteria and macrophages, with calibration reference data, mucus-region simulation, and tissue-region simulation shown in each plot.
    (b--d) Spatial distributions without mucus velocity field driven at $t=0$, $1$, and $2$ day.
    (e--g) Spatial distributions with the mucus velocity field driven at the same time points.
    Each spatial panel shows bacteria on the left and macrophages on the right.
    }
    \label{fig:appendix-set188-velocity-comparison}
\end{figure}
 \clearpage

\begin{figure}[H]
    \centering
    \begin{subfigure}{0.78\textwidth}
        \centering
        \includegraphics[width=\linewidth,height=0.76\textheight,keepaspectratio]{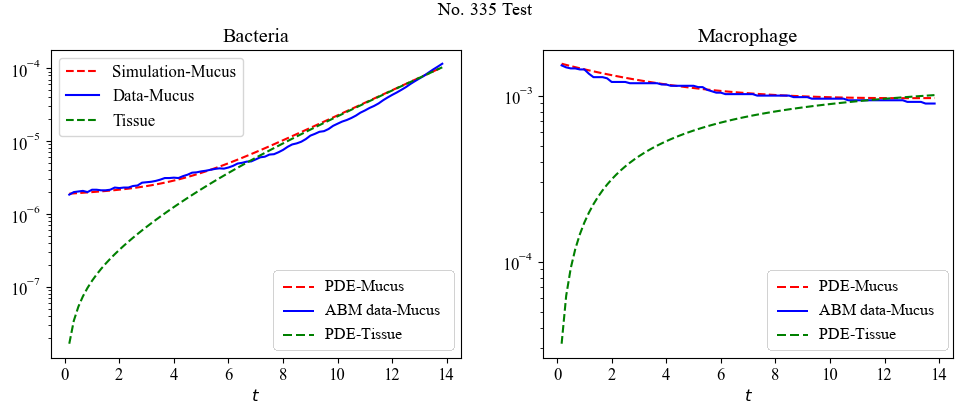}
        \caption{Concentration fitting.}
        \label{fig:appendix-set335-calibration}
    \end{subfigure}

    \vspace{0.5em}
    \begin{subfigure}{0.30\textwidth}
        \centering
        \includegraphics[width=\linewidth,height=0.76\textheight,keepaspectratio]{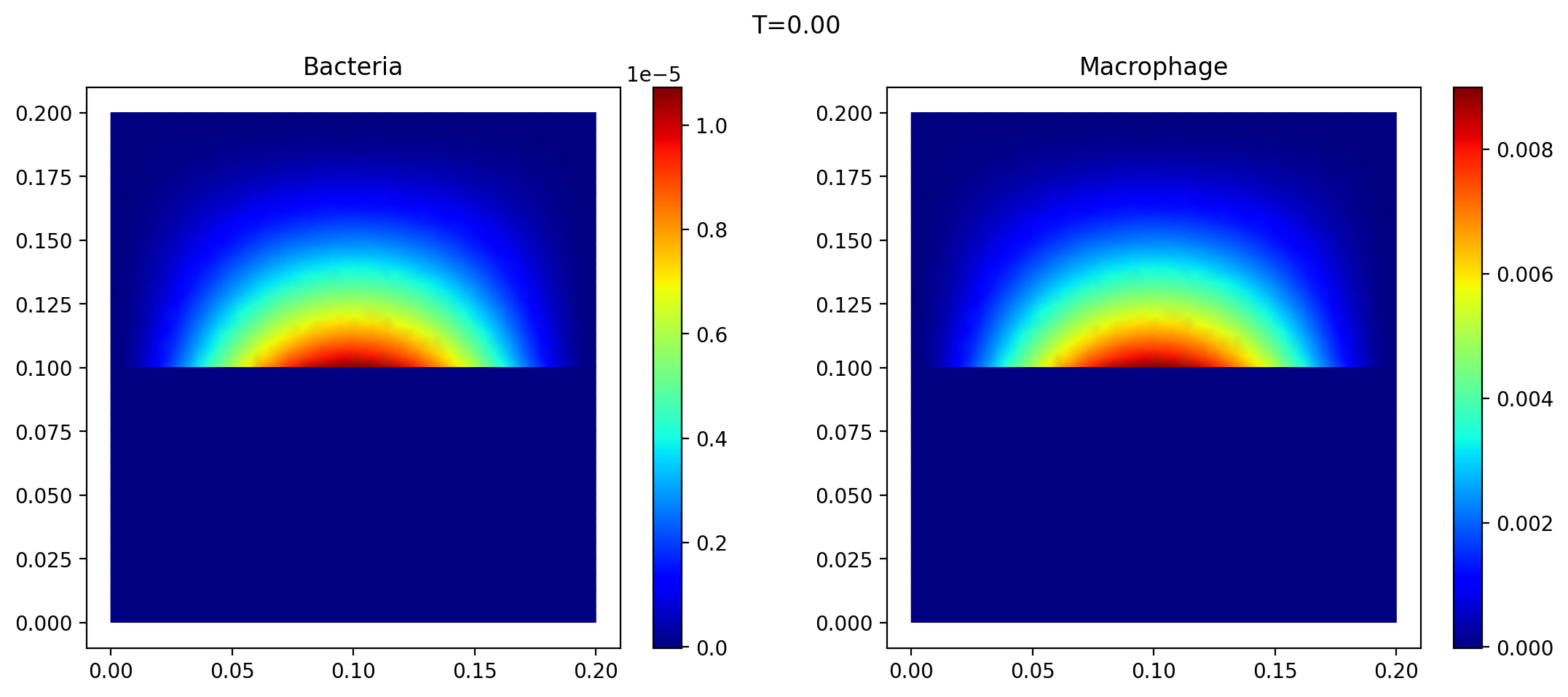}
        \caption{No velocity, $t=0$ day.}
        \label{fig:appendix-set335-no-velocity-t0}
    \end{subfigure}
    \hfill
    \begin{subfigure}{0.30\textwidth}
        \centering
        \includegraphics[width=\linewidth,height=0.76\textheight,keepaspectratio]{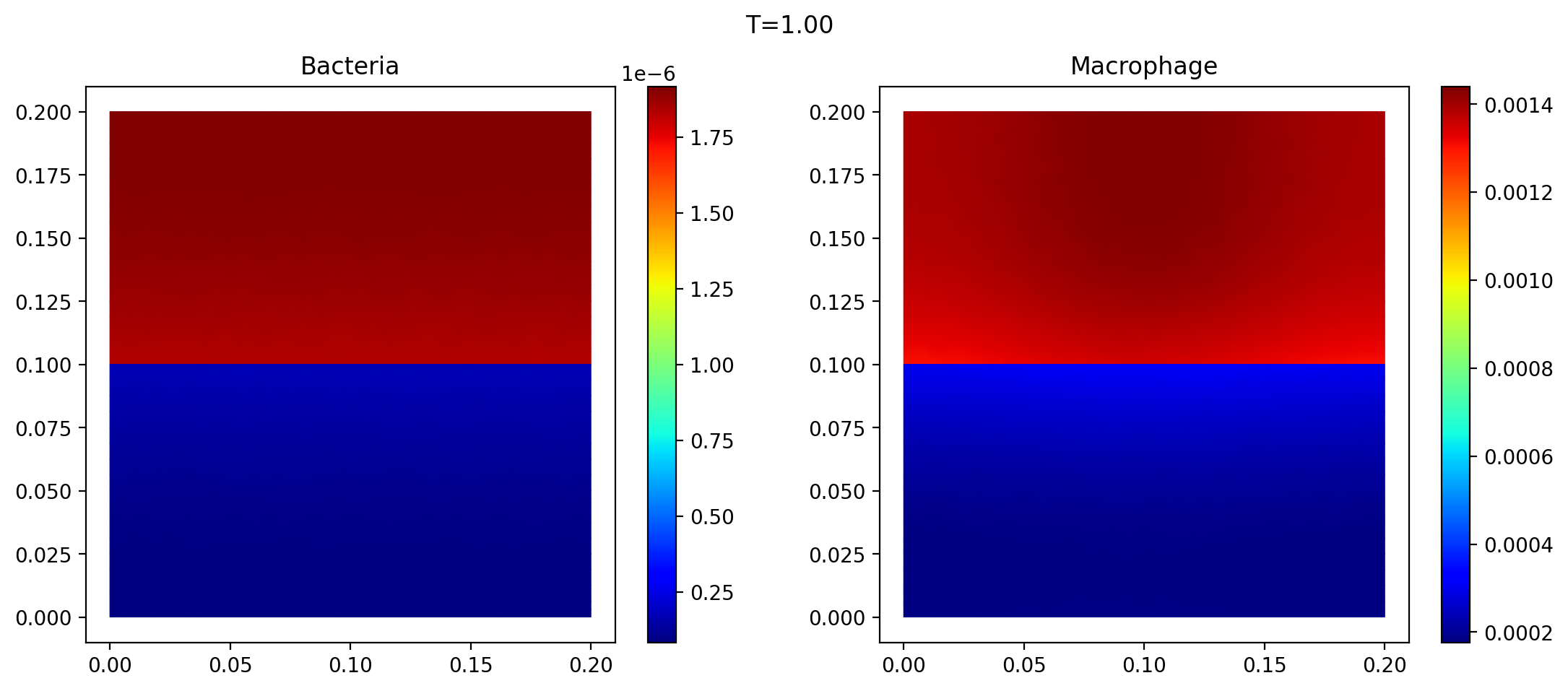}
        \caption{No velocity, $t=1$ day.}
        \label{fig:appendix-set335-no-velocity-t1}
    \end{subfigure}
    \hfill
    \begin{subfigure}{0.30\textwidth}
        \centering
        \includegraphics[width=\linewidth,height=0.76\textheight,keepaspectratio]{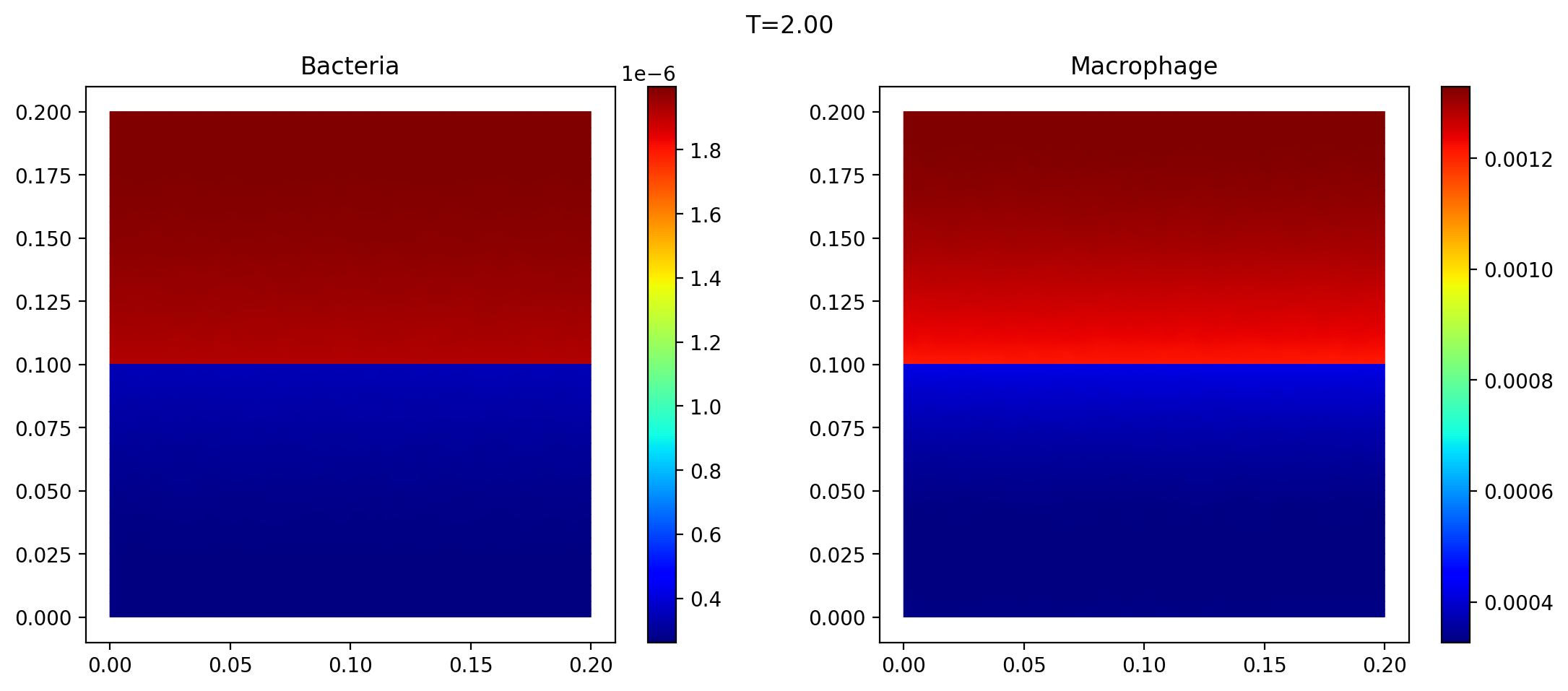}
        \caption{No velocity, $t=2$ day.}
        \label{fig:appendix-set335-no-velocity-t2}
    \end{subfigure}

    \vspace{0.5em}
    \begin{subfigure}{0.30\textwidth}
        \centering
        \includegraphics[width=\linewidth,height=0.76\textheight,keepaspectratio]{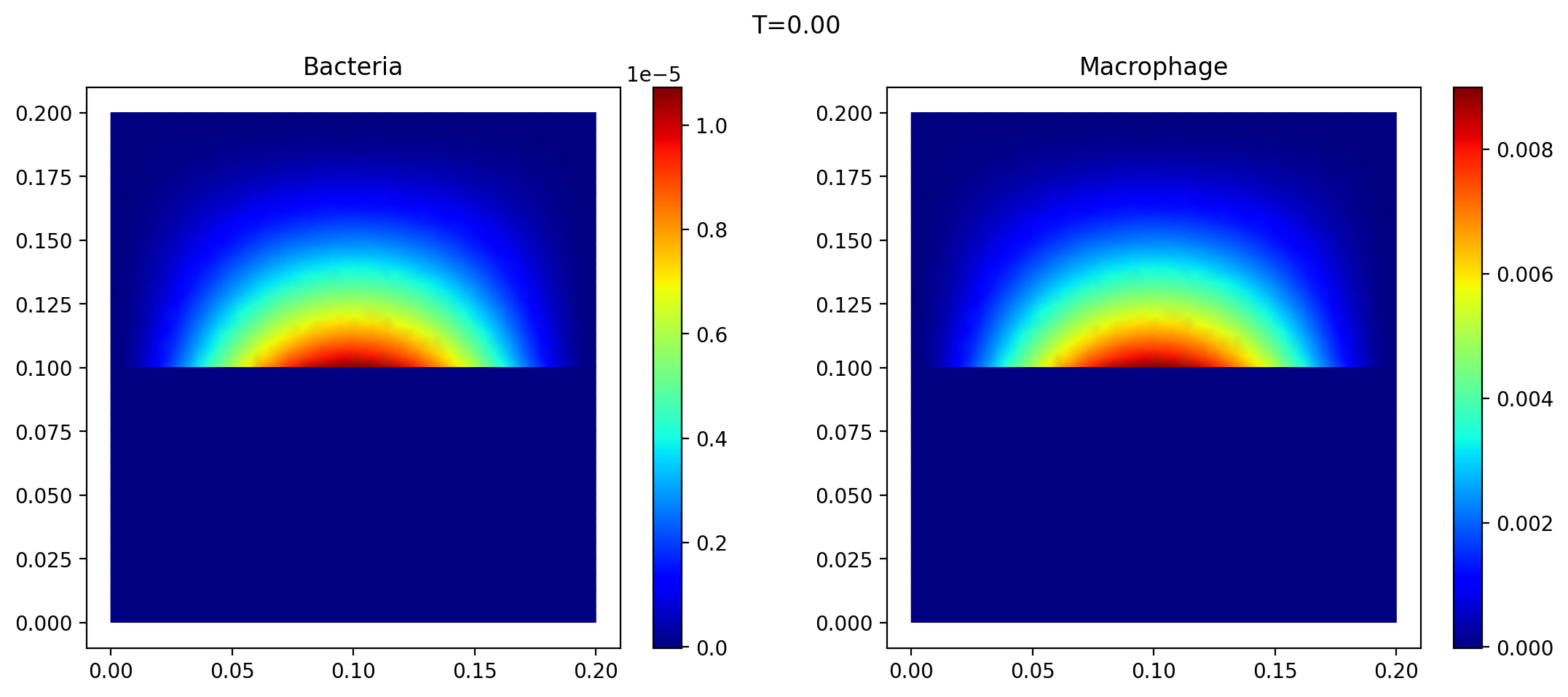}
        \caption{With velocity, $t=0$ day.}
        \label{fig:appendix-set335-velocity-t0}
    \end{subfigure}
    \hfill
    \begin{subfigure}{0.30\textwidth}
        \centering
        \includegraphics[width=\linewidth,height=0.76\textheight,keepaspectratio]{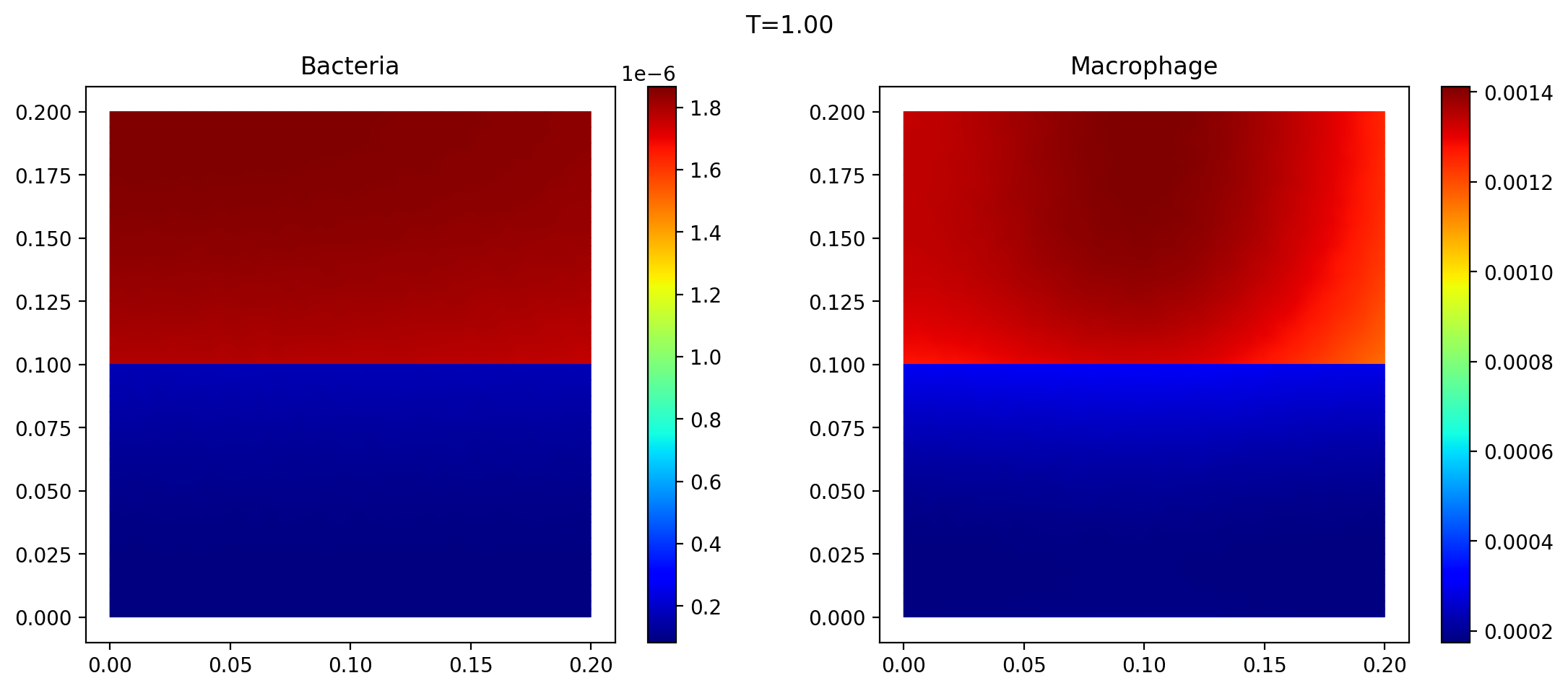}
        \caption{With velocity, $t=1$ day.}
        \label{fig:appendix-set335-velocity-t1}
    \end{subfigure}
    \hfill
    \begin{subfigure}{0.30\textwidth}
        \centering
        \includegraphics[width=\linewidth,height=0.76\textheight,keepaspectratio]{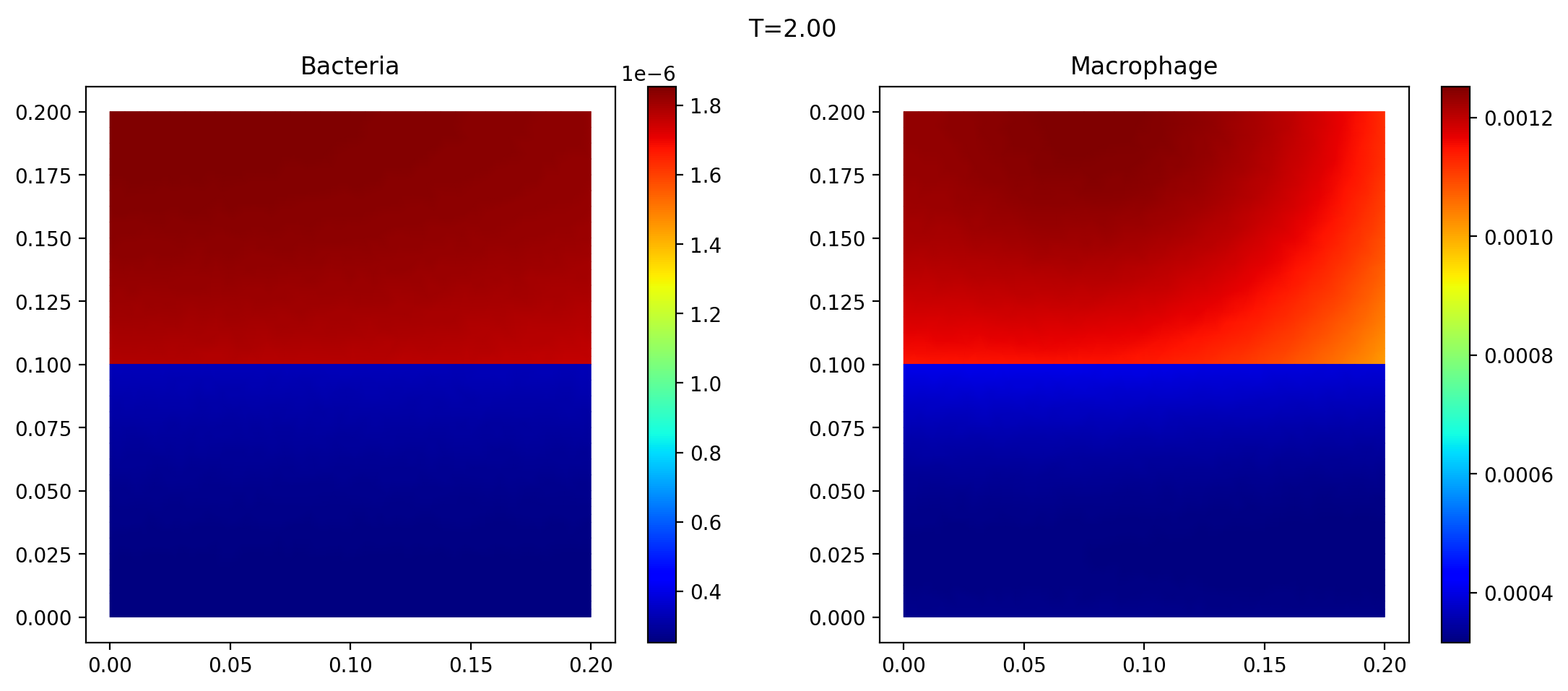}
        \caption{With velocity, $t= 2$ day.}
        \label{fig:appendix-set335-velocity-t2}
    \end{subfigure}

    \caption{
    Evolution of the system for parameter set No.~335, with and without the mucus velocity field.
    (a) Concentration fitting for bacteria and macrophages, with calibration reference data, mucus-region simulation, and tissue-region simulation shown in each plot.
    (b--d) Spatial distributions without mucus velocity field driven at $t=0$, $1$, and $2$ day.
    (e--g) Spatial distributions with the mucus velocity field driven at the same time points.
    Each spatial panel shows bacteria on the left and macrophages on the right.
    }
    \label{fig:appendix-set335-velocity-comparison}
\end{figure}
 \clearpage

\begin{figure}[H]
    \centering
    \begin{subfigure}{0.78\textwidth}
        \centering
        \includegraphics[width=\linewidth,height=0.76\textheight,keepaspectratio]{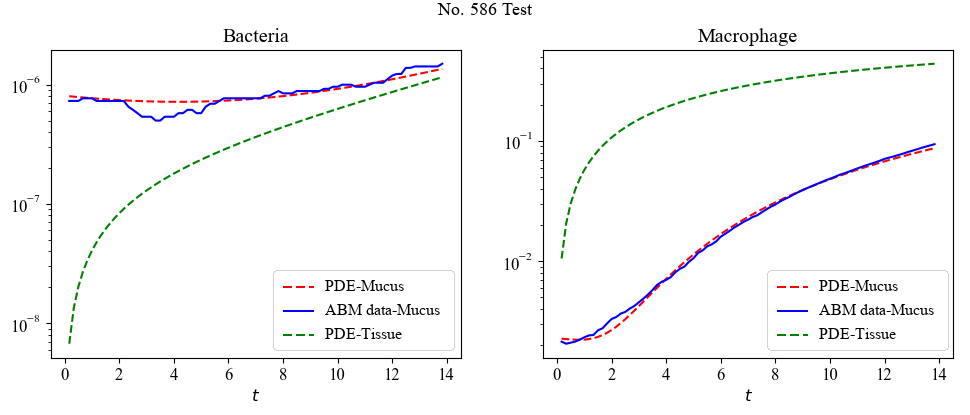}
        \caption{Concentration fitting.}
        \label{fig:appendix-set586-calibration}
    \end{subfigure}

    \vspace{0.5em}
    \begin{subfigure}{0.30\textwidth}
        \centering
        \includegraphics[width=\linewidth,height=0.76\textheight,keepaspectratio]{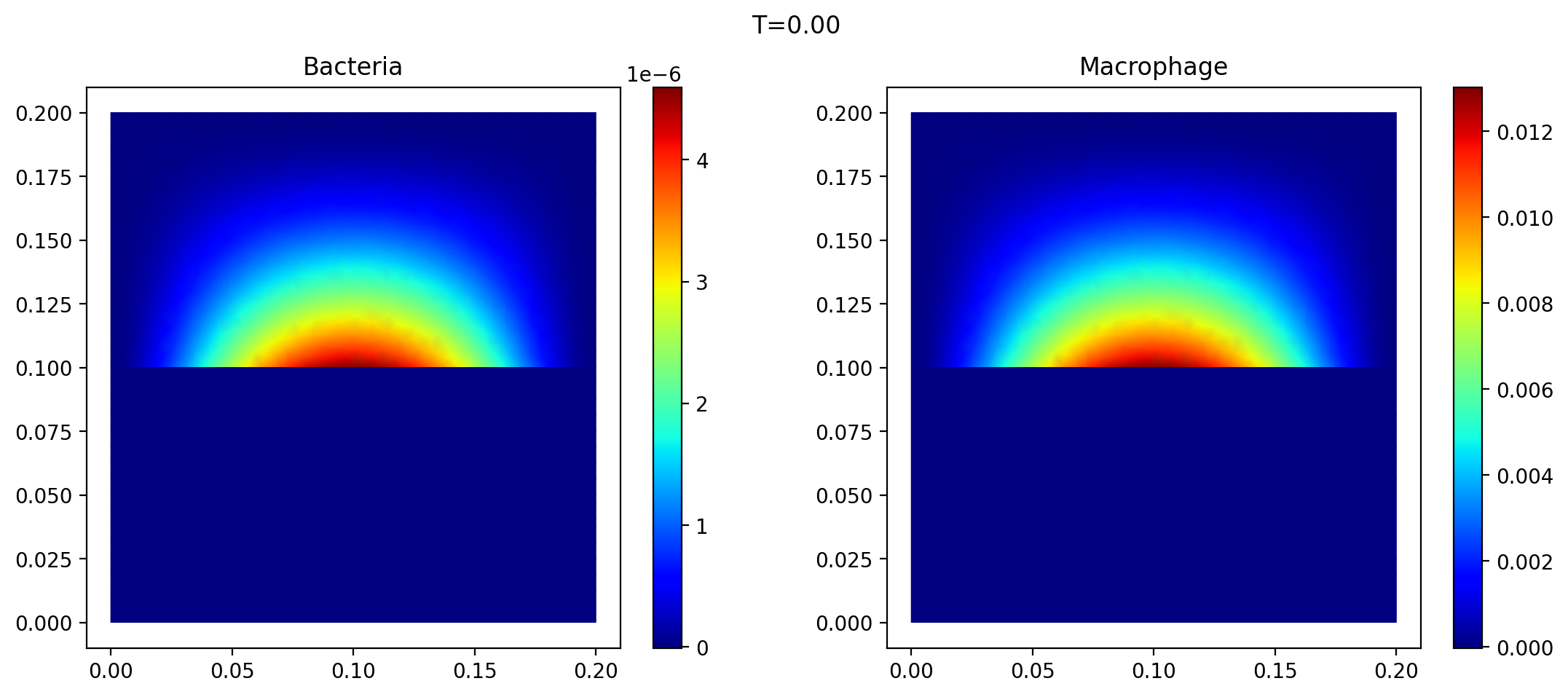}
        \caption{No velocity, $t=0$ day.}
        \label{fig:appendix-set586-no-velocity-t0}
    \end{subfigure}
    \hfill
    \begin{subfigure}{0.30\textwidth}
        \centering
        \includegraphics[width=\linewidth,height=0.76\textheight,keepaspectratio]{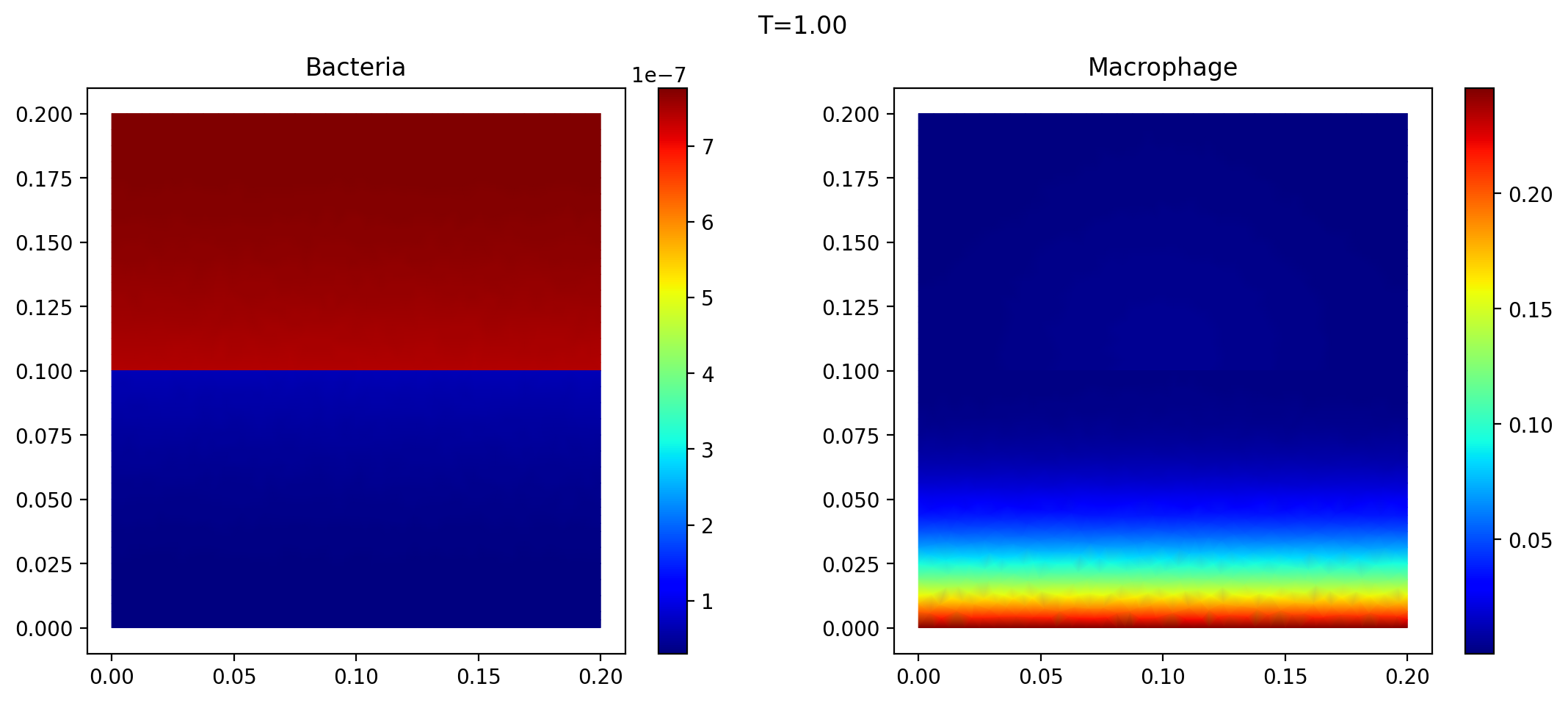}
        \caption{No velocity, $t=1$ day.}
        \label{fig:appendix-set586-no-velocity-t1}
    \end{subfigure}
    \hfill
    \begin{subfigure}{0.30\textwidth}
        \centering
        \includegraphics[width=\linewidth,height=0.76\textheight,keepaspectratio]{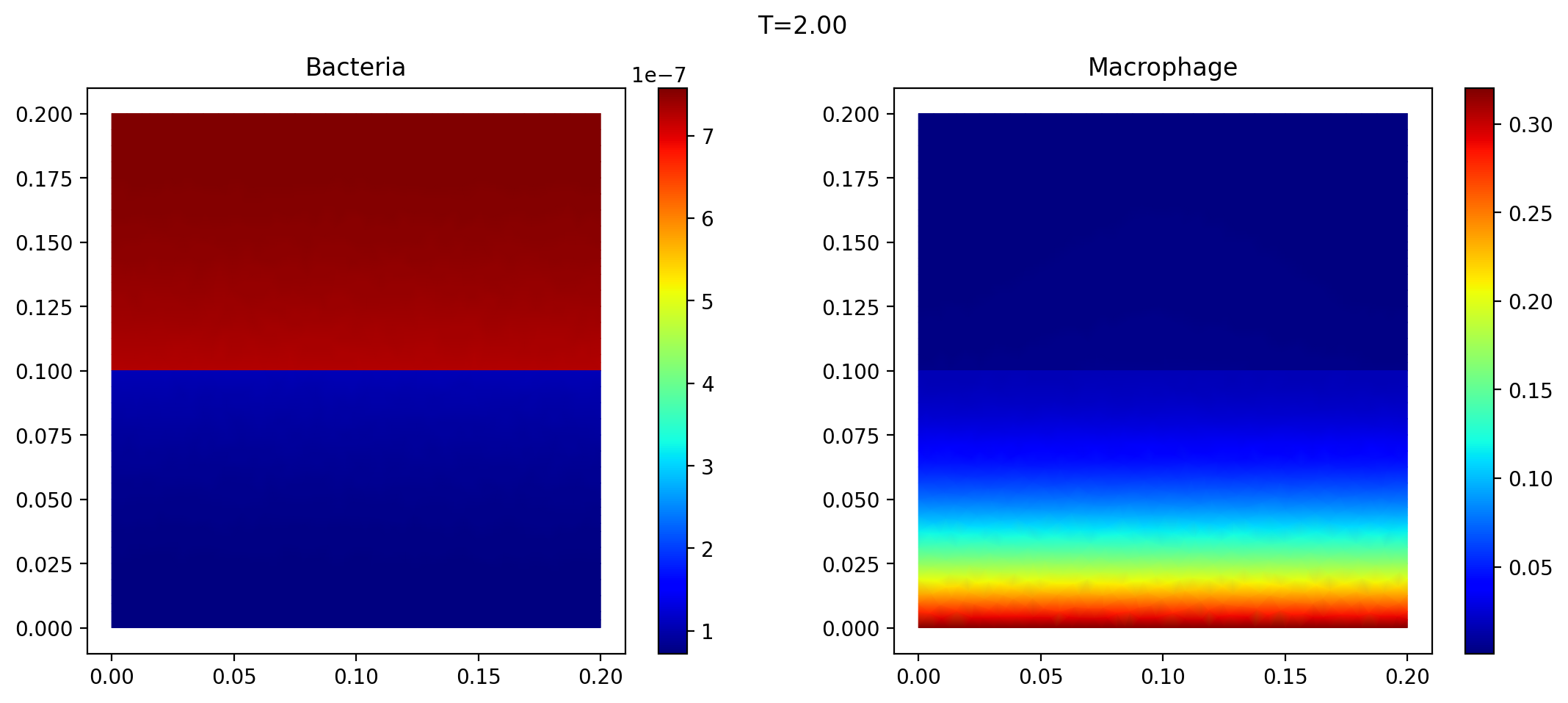}
        \caption{No velocity, $t=2$ day.}
        \label{fig:appendix-set586-no-velocity-t2}
    \end{subfigure}

    \vspace{0.5em}
    \begin{subfigure}{0.30\textwidth}
        \centering
        \includegraphics[width=\linewidth,height=0.76\textheight,keepaspectratio]{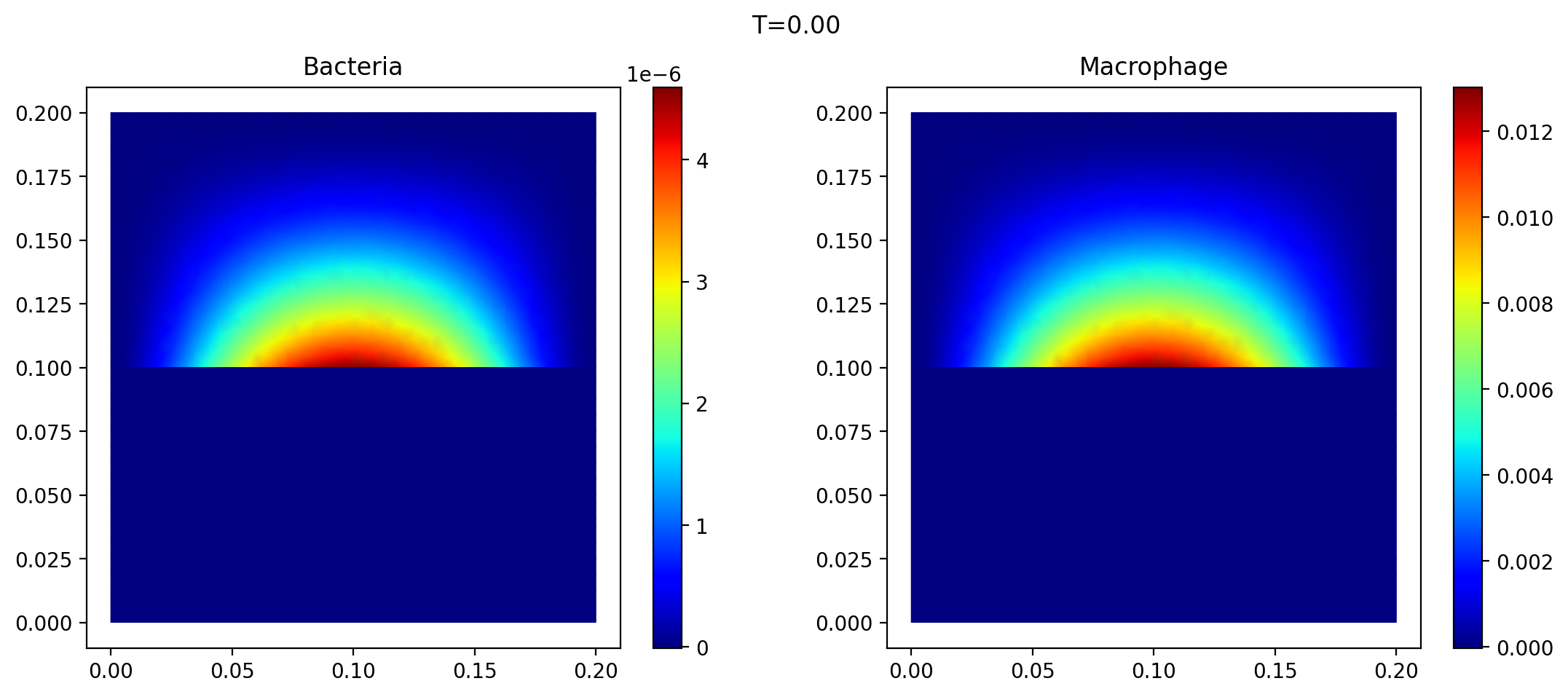}
        \caption{With velocity, $t=0$ day.}
        \label{fig:appendix-set586-velocity-t0}
    \end{subfigure}
    \hfill
    \begin{subfigure}{0.30\textwidth}
        \centering
        \includegraphics[width=\linewidth,height=0.76\textheight,keepaspectratio]{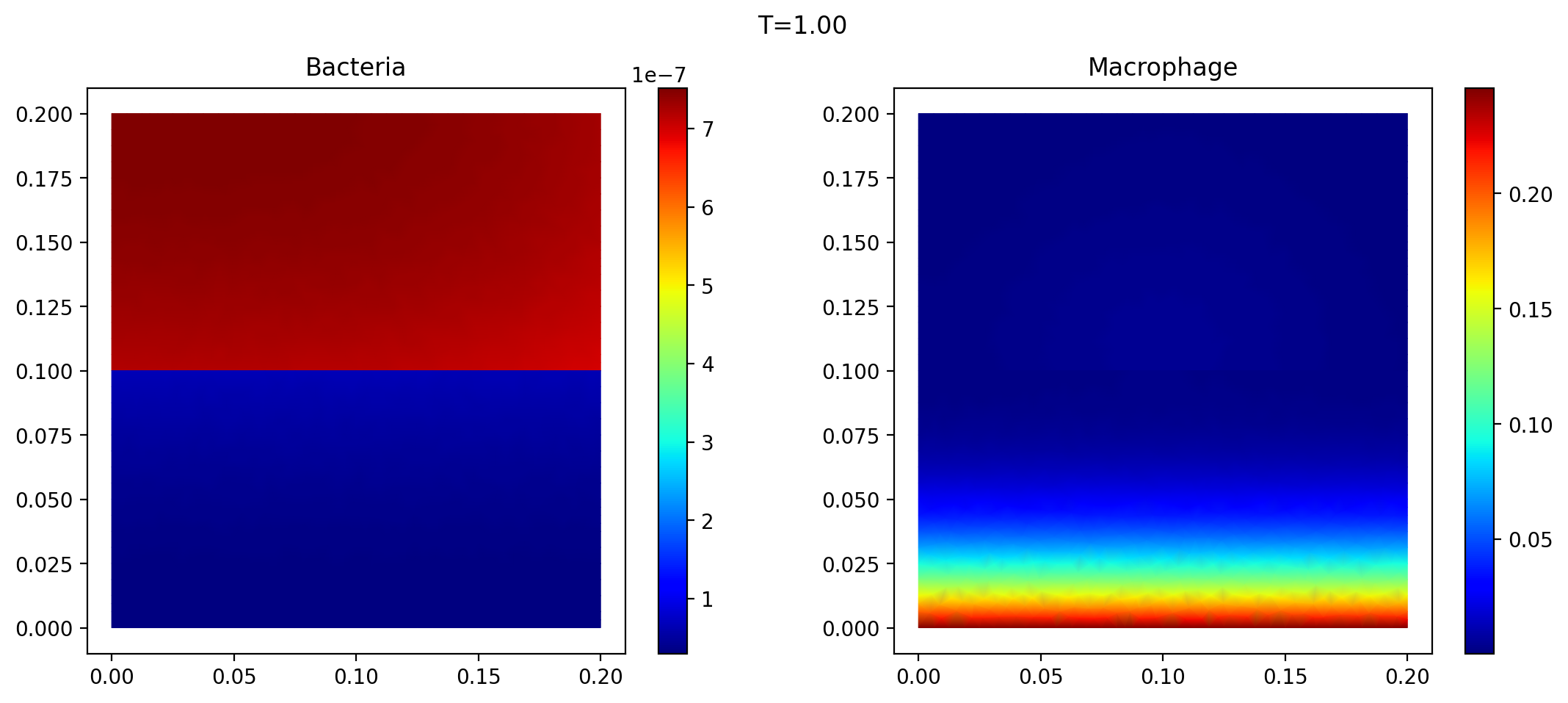}
        \caption{With velocity, $t=1$ day.}
        \label{fig:appendix-set586-velocity-t1}
    \end{subfigure}
    \hfill
    \begin{subfigure}{0.30\textwidth}
        \centering
        \includegraphics[width=\linewidth,height=0.76\textheight,keepaspectratio]{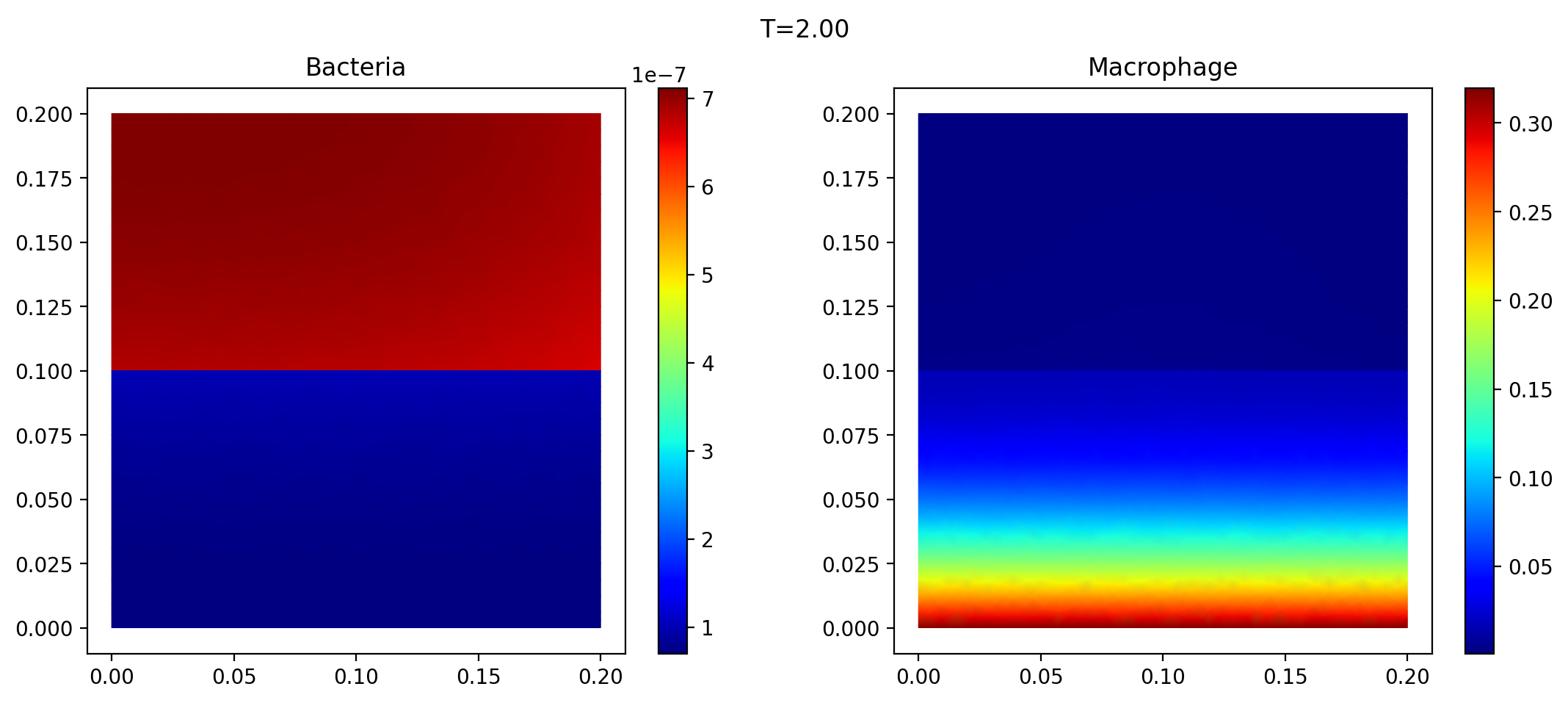}
        \caption{With velocity, $t= 2$ day.}
        \label{fig:appendix-set586-velocity-t2}
    \end{subfigure}

    \caption{
    Evolution of the system for parameter set No.~586, with and without the mucus velocity field.
    (a) Concentration fitting for bacteria and macrophages, with calibration reference data, mucus-region simulation, and tissue-region simulation shown in each plot.
    (b--d) Spatial distributions without mucus velocity field driven at $t=0$, $1$, and $2$ day.
    (e--g) Spatial distributions with the mucus velocity field driven at the same time points.
    Each spatial panel shows bacteria on the left and macrophages on the right.
    }
    \label{fig:appendix-set586-velocity-comparison}
\end{figure}
 \clearpage

\begin{figure}[H]
    \centering
    \begin{subfigure}{0.78\textwidth}
        \centering
        \includegraphics[width=\linewidth,height=0.76\textheight,keepaspectratio]{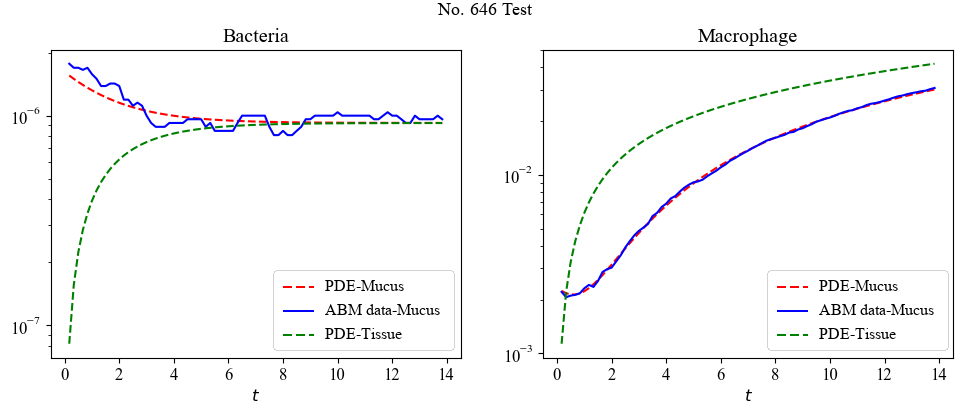}
        \caption{Concentration fitting.}
        \label{fig:appendix-set646-calibration}
    \end{subfigure}

    \vspace{0.5em}
    \begin{subfigure}{0.30\textwidth}
        \centering
        \includegraphics[width=\linewidth,height=0.76\textheight,keepaspectratio]{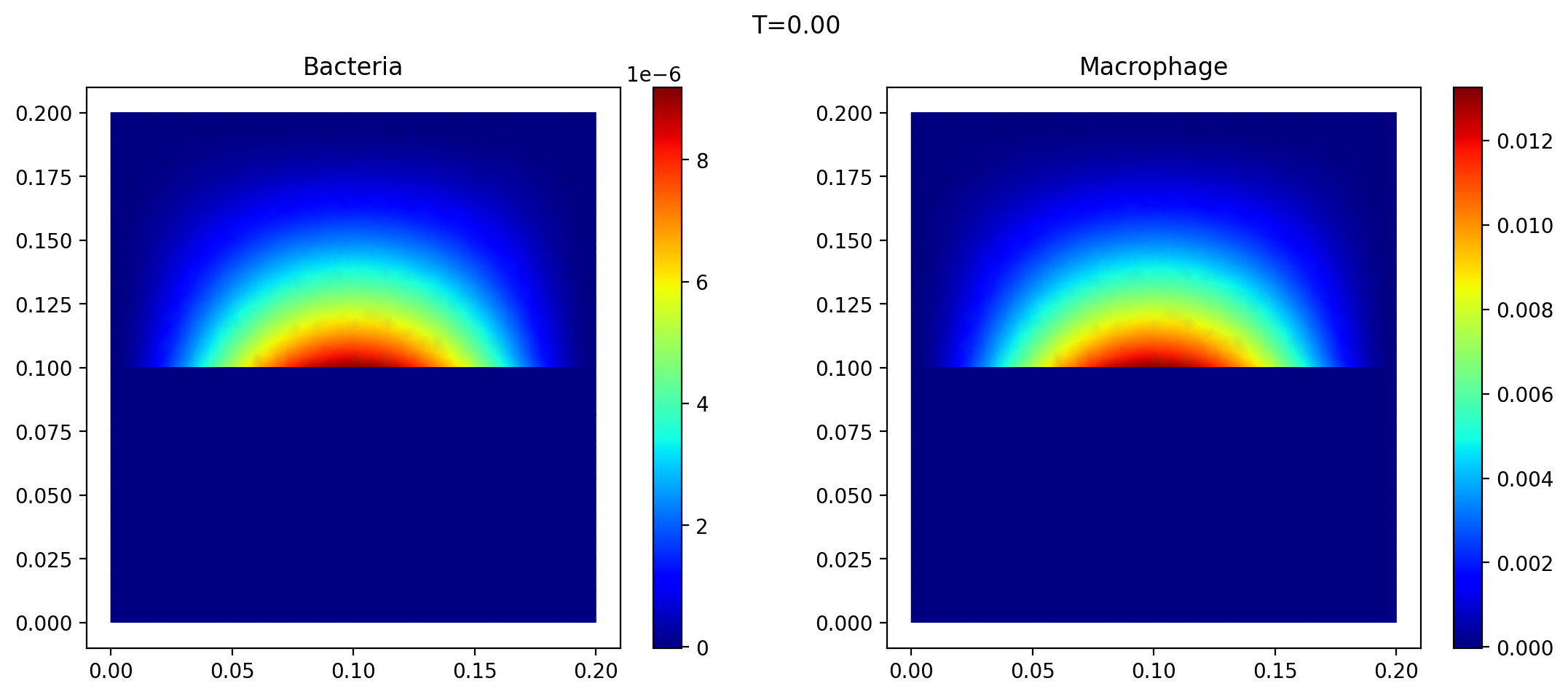}
        \caption{No velocity, $t=0$ day.}
        \label{fig:appendix-set646-no-velocity-t0}
    \end{subfigure}
    \hfill
    \begin{subfigure}{0.30\textwidth}
        \centering
        \includegraphics[width=\linewidth,height=0.76\textheight,keepaspectratio]{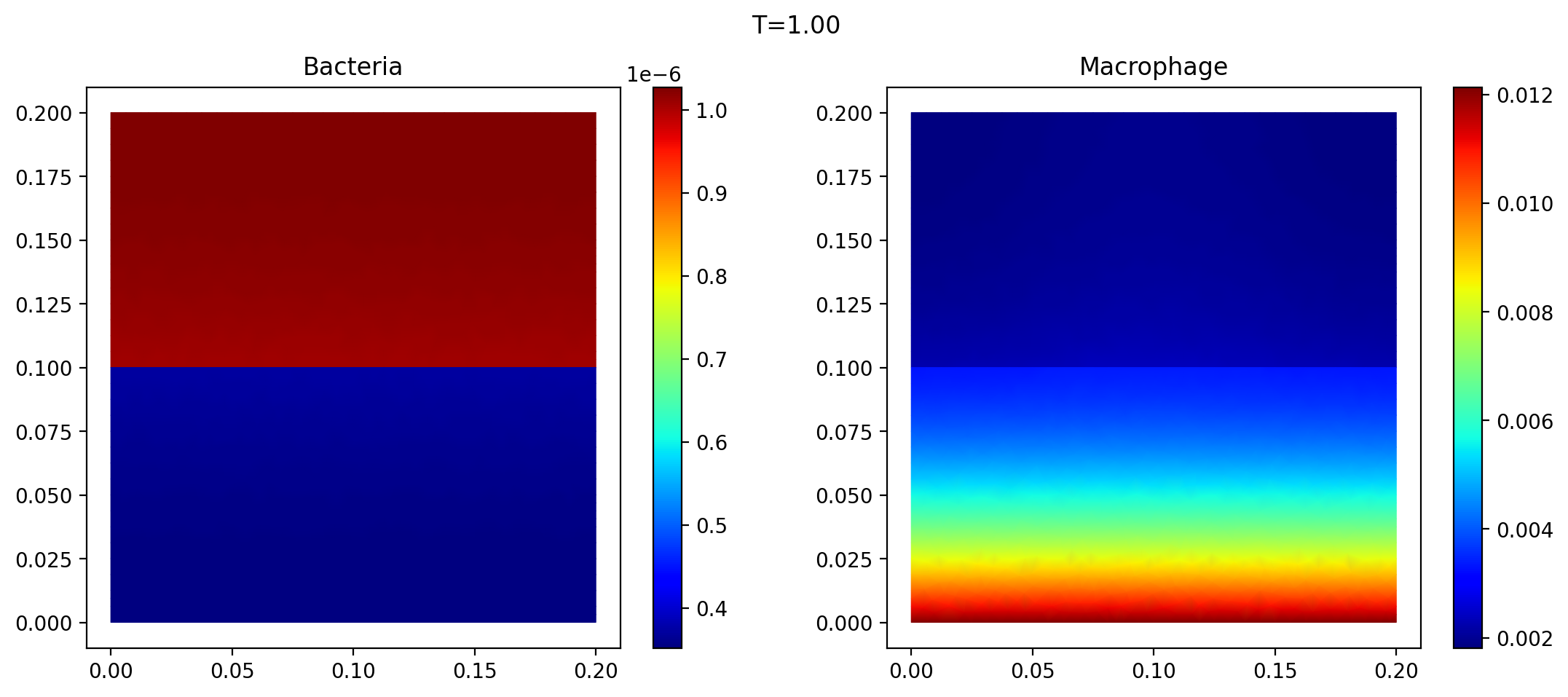}
        \caption{No velocity, $t=1$ day.}
        \label{fig:appendix-set646-no-velocity-t1}
    \end{subfigure}
    \hfill
    \begin{subfigure}{0.30\textwidth}
        \centering
        \includegraphics[width=\linewidth,height=0.76\textheight,keepaspectratio]{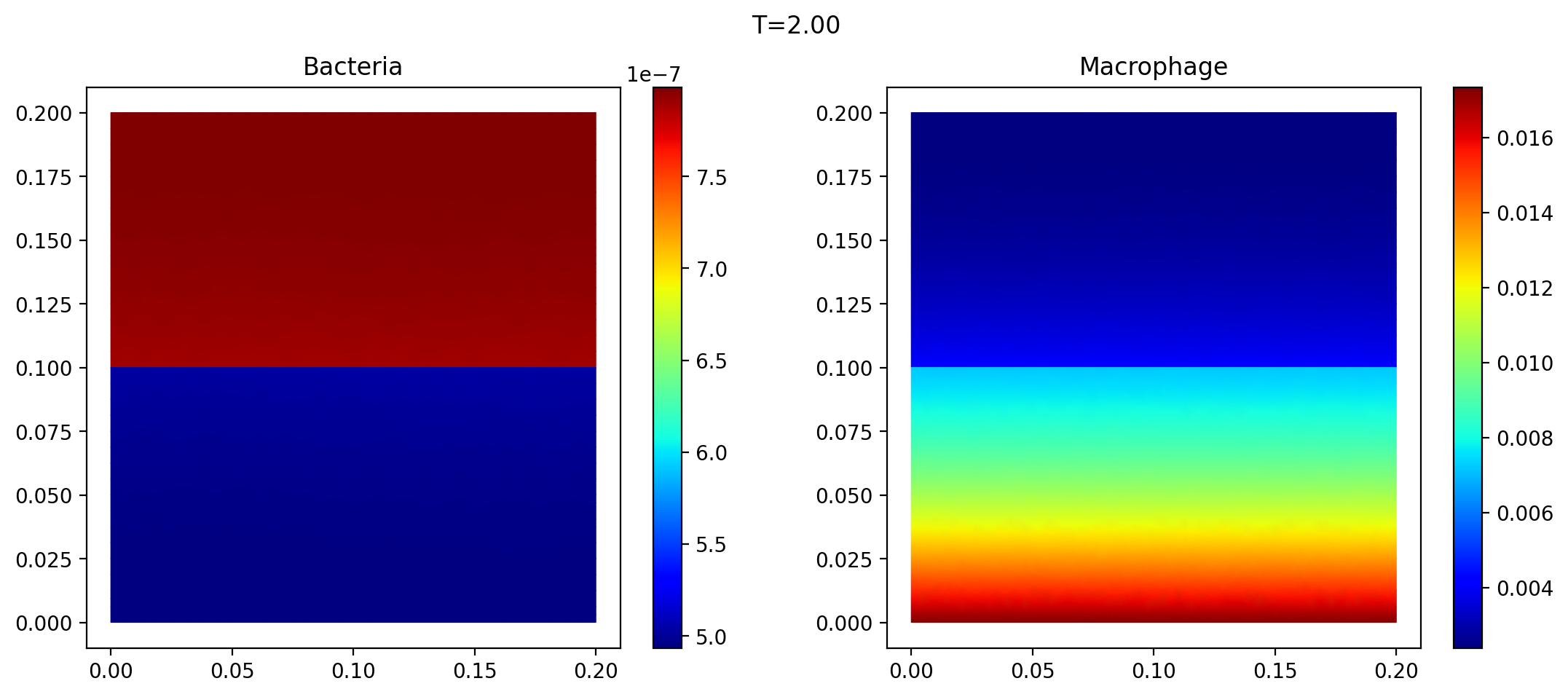}
        \caption{No velocity, $t=2$ day.}
        \label{fig:appendix-set646-no-velocity-t2}
    \end{subfigure}

    \vspace{0.5em}
    \begin{subfigure}{0.30\textwidth}
        \centering
        \includegraphics[width=\linewidth,height=0.76\textheight,keepaspectratio]{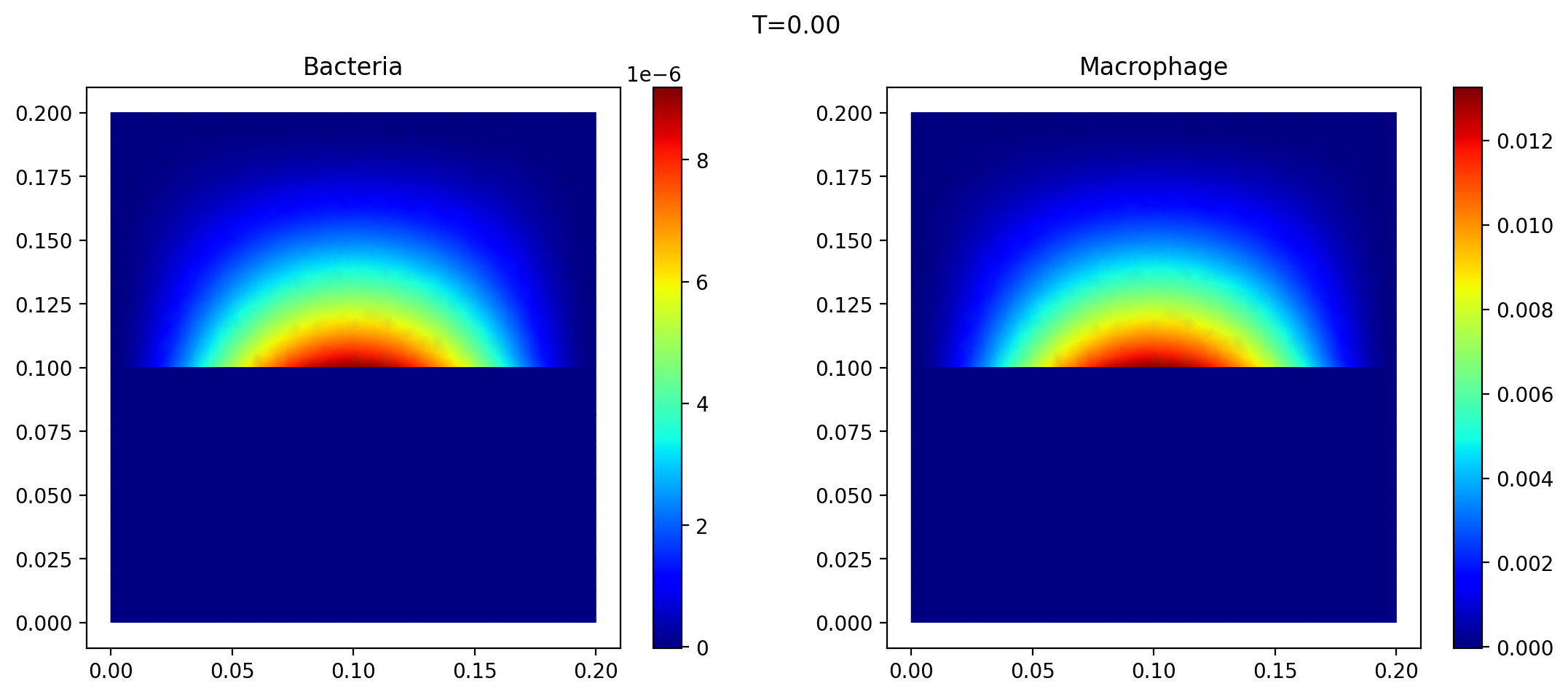}
        \caption{With velocity, $t=0$ day.}
        \label{fig:appendix-set646-velocity-t0}
    \end{subfigure}
    \hfill
    \begin{subfigure}{0.30\textwidth}
        \centering
        \includegraphics[width=\linewidth,height=0.76\textheight,keepaspectratio]{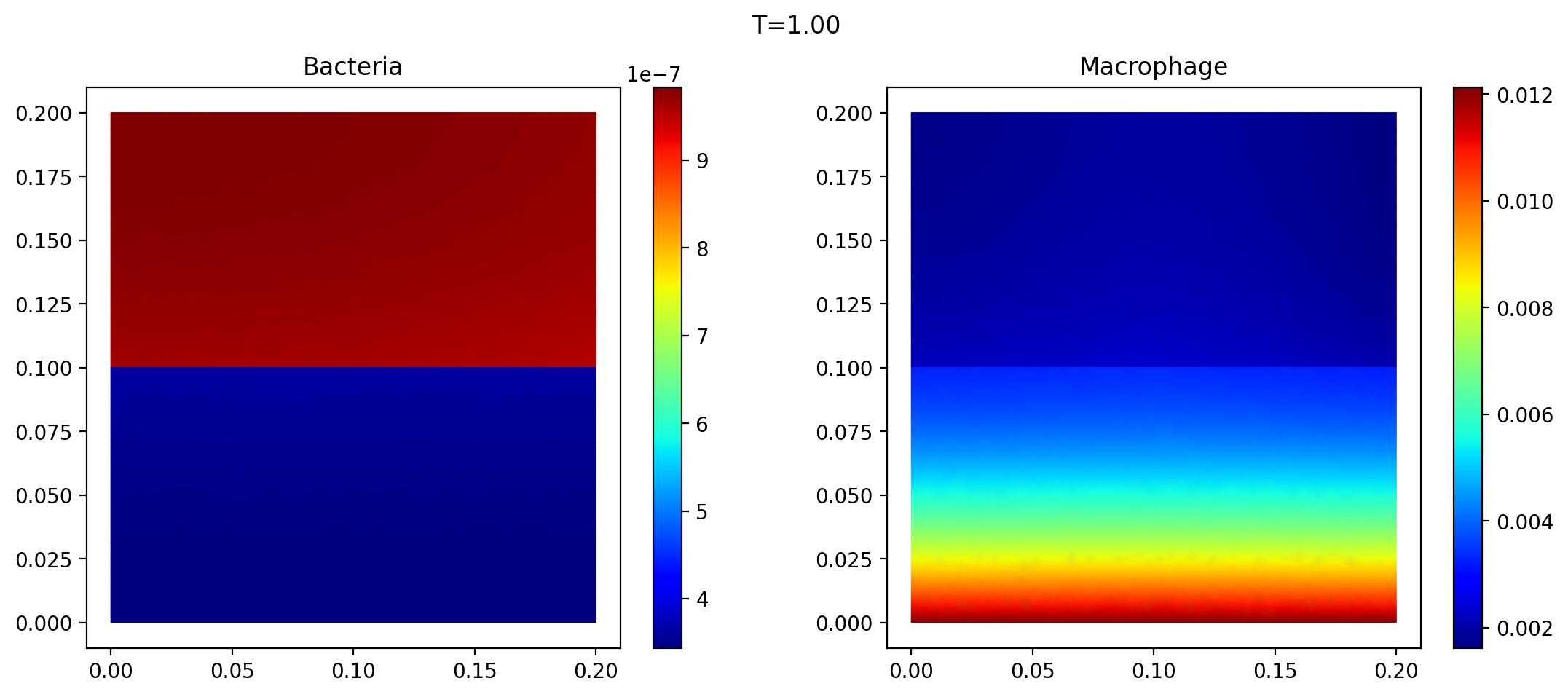}
        \caption{With velocity, $t=1$ day.}
        \label{fig:appendix-set646-velocity-t1}
    \end{subfigure}
    \hfill
    \begin{subfigure}{0.30\textwidth}
        \centering
        \includegraphics[width=\linewidth,height=0.76\textheight,keepaspectratio]{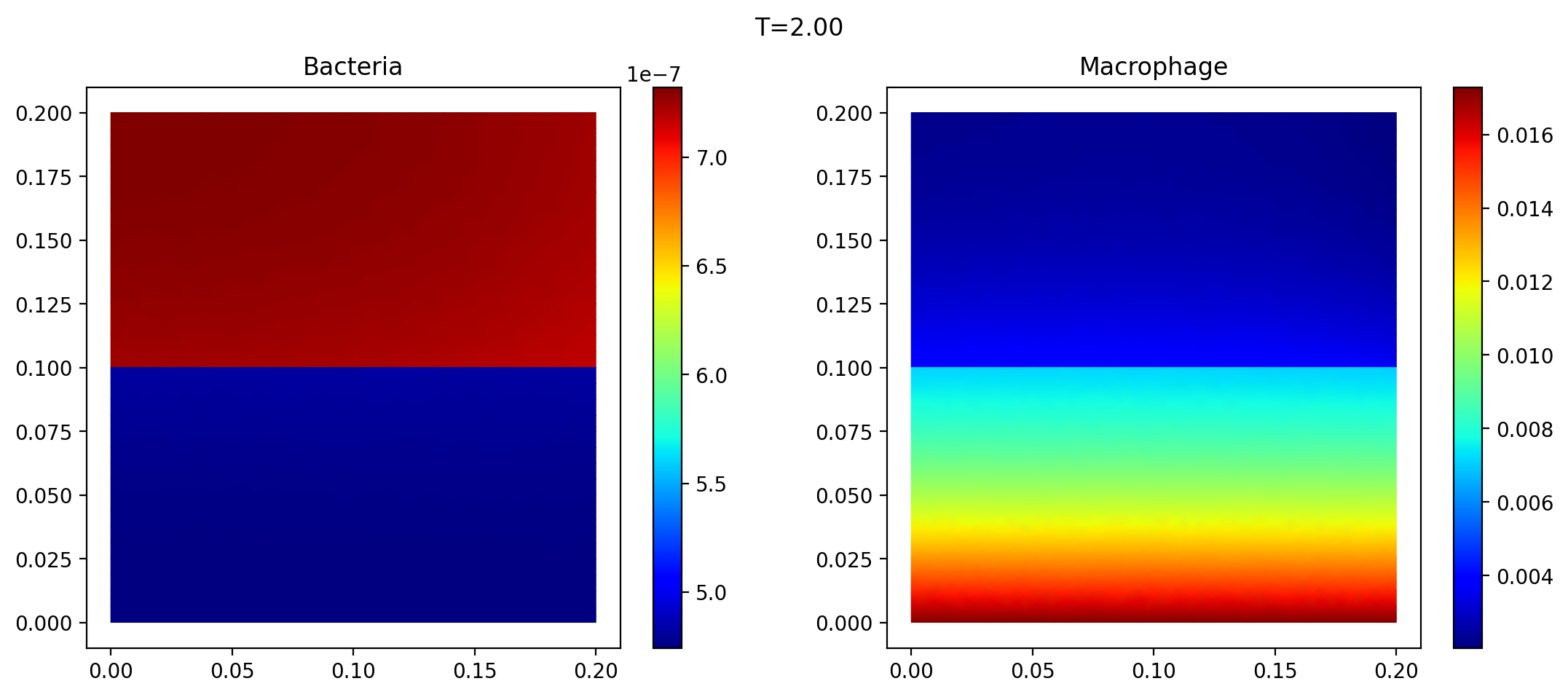}
        \caption{With velocity, $t= 2$ day.}
        \label{fig:appendix-set646-velocity-t2}
    \end{subfigure}

    \caption{
    Evolution of the system for parameter set No.~646, with and without the mucus velocity field.
    (a) Concentration fitting for bacteria and macrophages, with calibration reference data, mucus-region simulation, and tissue-region simulation shown in each plot.
    (b--d) Spatial distributions without mucus velocity field driven at $t=0$, $1$, and $2$ day.
    (e--g) Spatial distributions with the mucus velocity field driven at the same time points.
    Each spatial panel shows bacteria on the left and macrophages on the right.
    }
    \label{fig:appendix-set646-velocity-comparison}
\end{figure}
 \clearpage

\begin{figure}[H]
    \centering
    \begin{subfigure}{0.78\textwidth}
        \centering
        \includegraphics[width=\linewidth,height=0.76\textheight,keepaspectratio]{figs/spatial/set756_calibration_bm.png}
        \caption{Concentration fitting.}
        \label{fig:appendix-set756-calibration}
    \end{subfigure}

    \vspace{0.5em}
    \begin{subfigure}{0.30\textwidth}
        \centering
        \includegraphics[width=\linewidth,height=0.76\textheight,keepaspectratio]{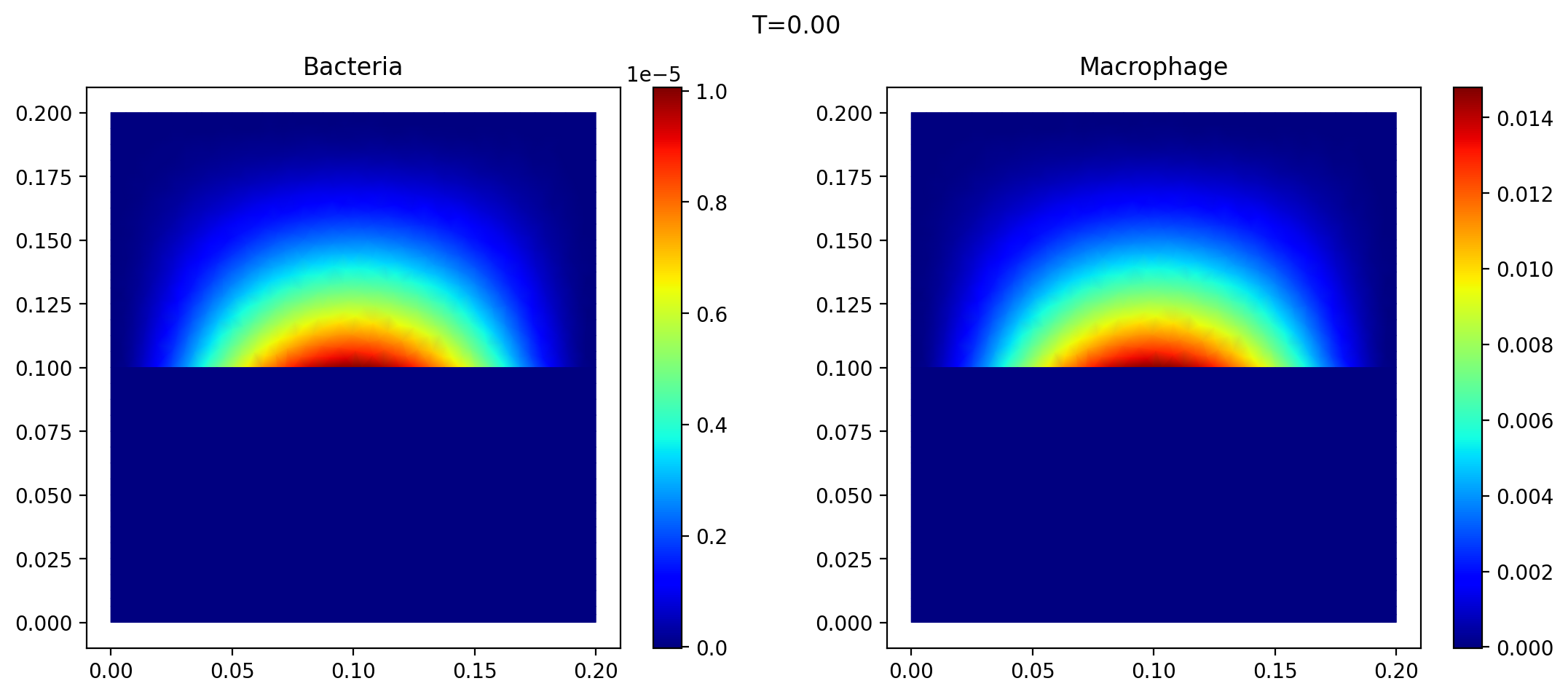}
        \caption{No velocity, $t=0$ day.}
        \label{fig:appendix-set756-no-velocity-t0}
    \end{subfigure}
    \hfill
    \begin{subfigure}{0.30\textwidth}
        \centering
        \includegraphics[width=\linewidth,height=0.76\textheight,keepaspectratio]{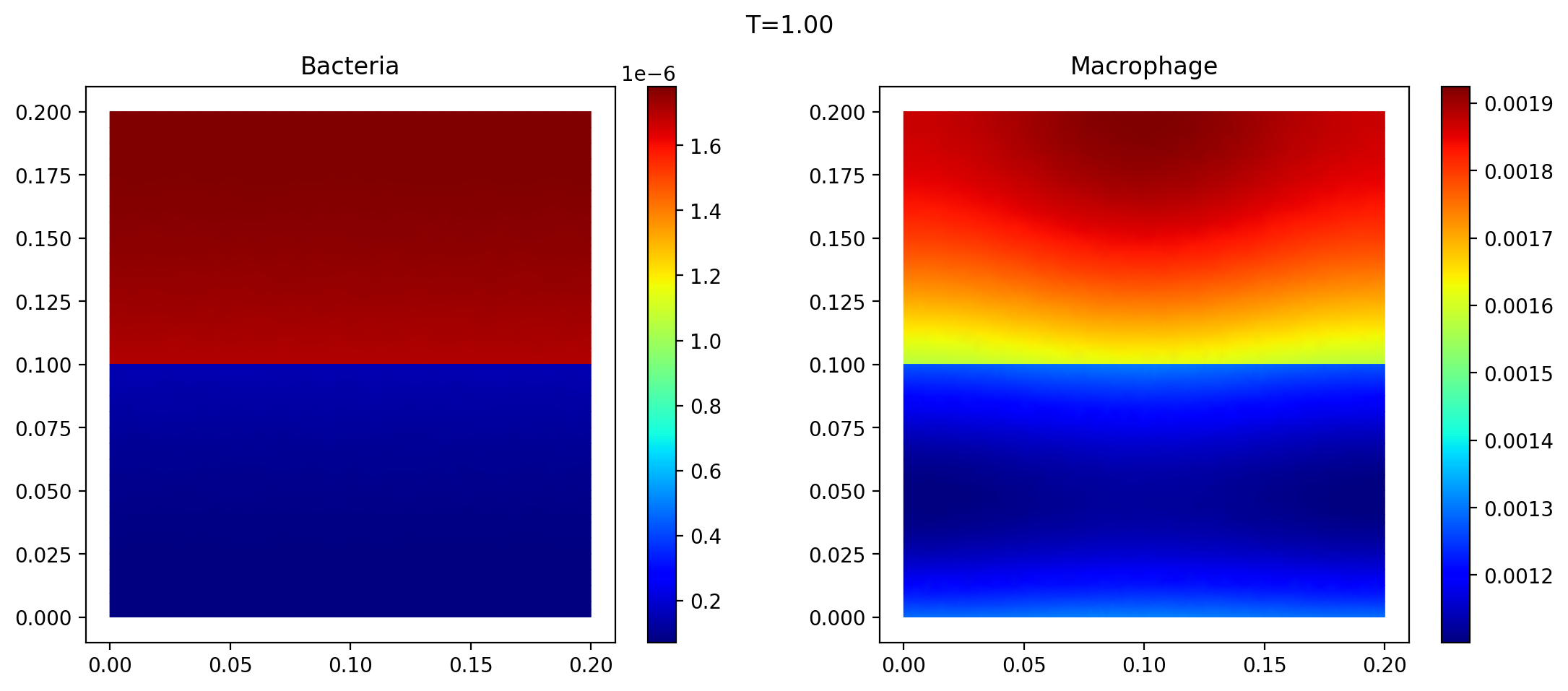}
        \caption{No velocity, $t=1$ day.}
        \label{fig:appendix-set756-no-velocity-t1}
    \end{subfigure}
    \hfill
    \begin{subfigure}{0.30\textwidth}
        \centering
        \includegraphics[width=\linewidth,height=0.76\textheight,keepaspectratio]{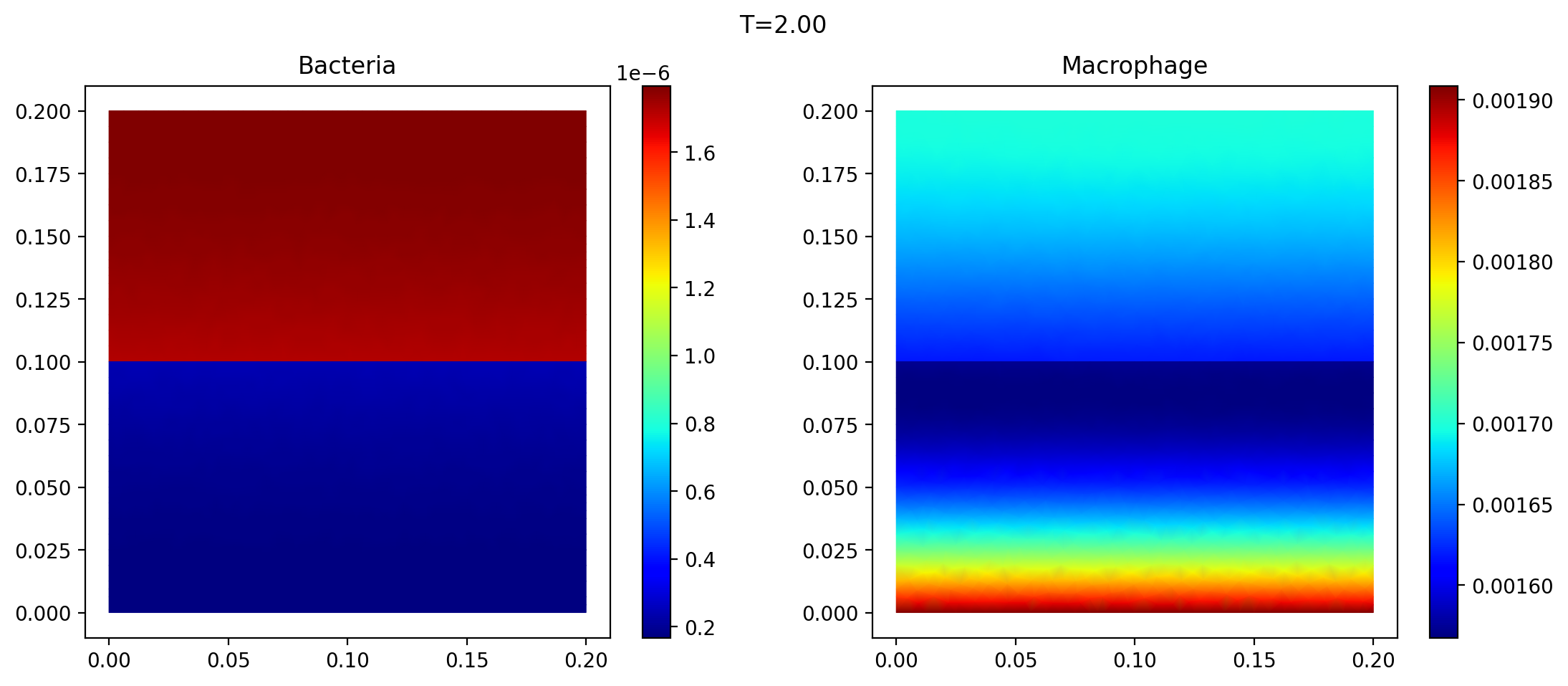}
        \caption{No velocity, $t=2$ day.}
        \label{fig:appendix-set756-no-velocity-t2}
    \end{subfigure}

    \vspace{0.5em}
    \begin{subfigure}{0.30\textwidth}
        \centering
        \includegraphics[width=\linewidth,height=0.76\textheight,keepaspectratio]{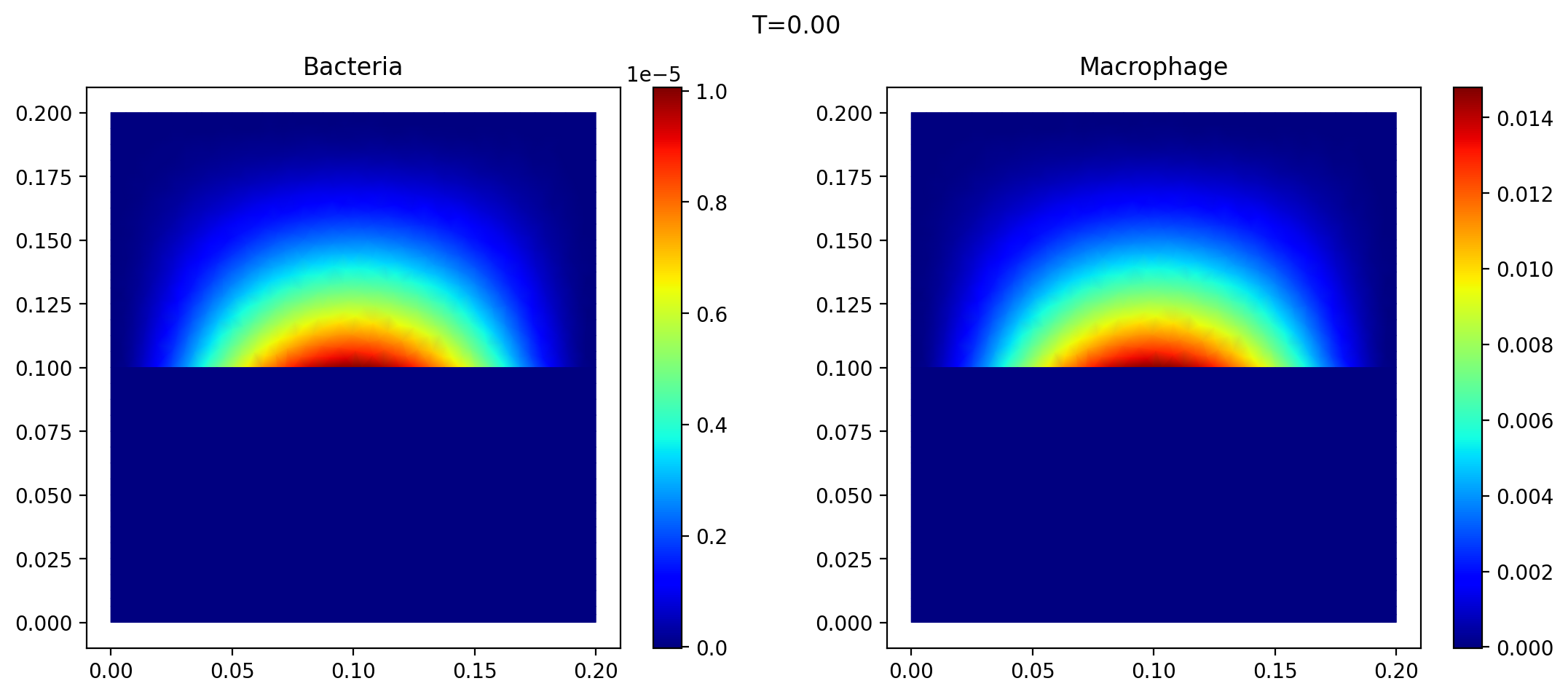}
        \caption{With velocity, $t=0$ day.}
        \label{fig:appendix-set756-velocity-t0}
    \end{subfigure}
    \hfill
    \begin{subfigure}{0.30\textwidth}
        \centering
        \includegraphics[width=\linewidth,height=0.76\textheight,keepaspectratio]{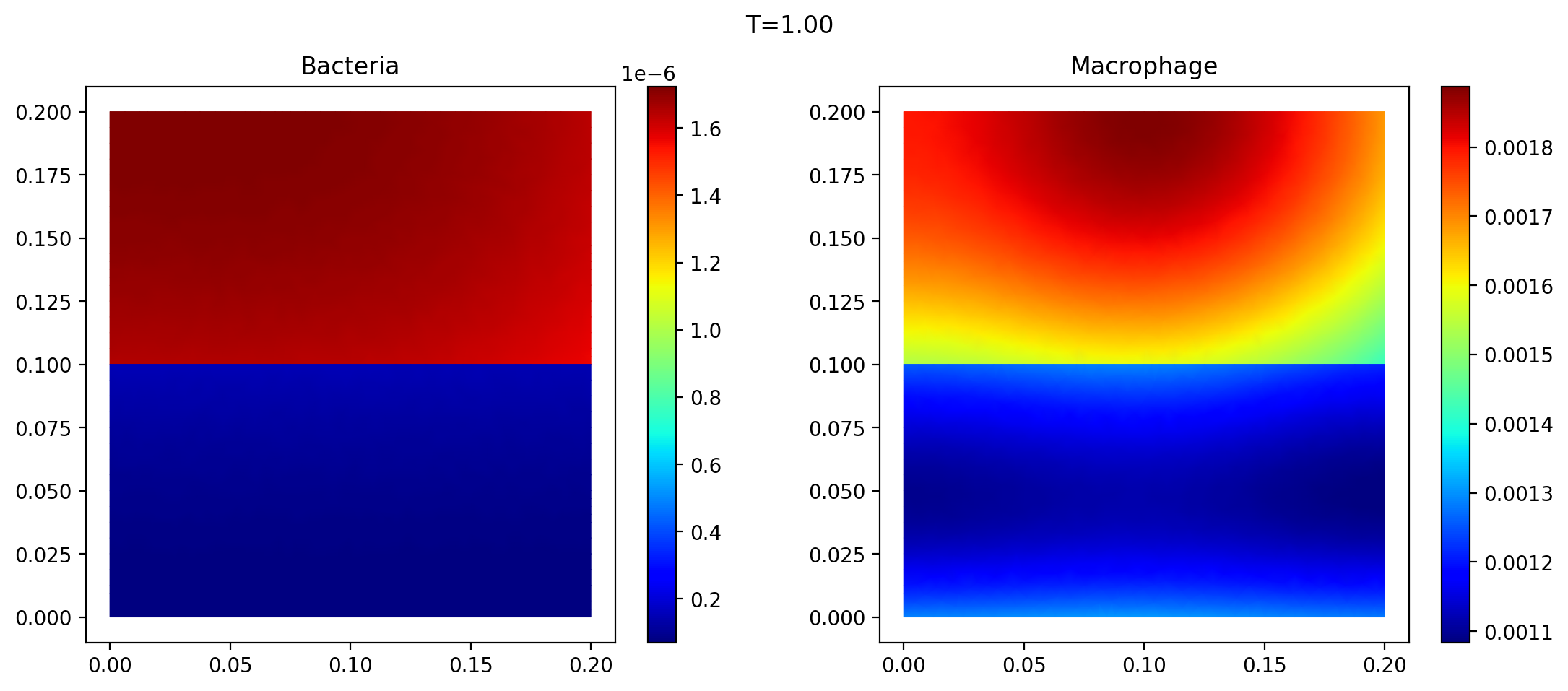}
        \caption{With velocity, $t=1$ day.}
        \label{fig:appendix-set756-velocity-t1}
    \end{subfigure}
    \hfill
    \begin{subfigure}{0.30\textwidth}
        \centering
        \includegraphics[width=\linewidth,height=0.76\textheight,keepaspectratio]{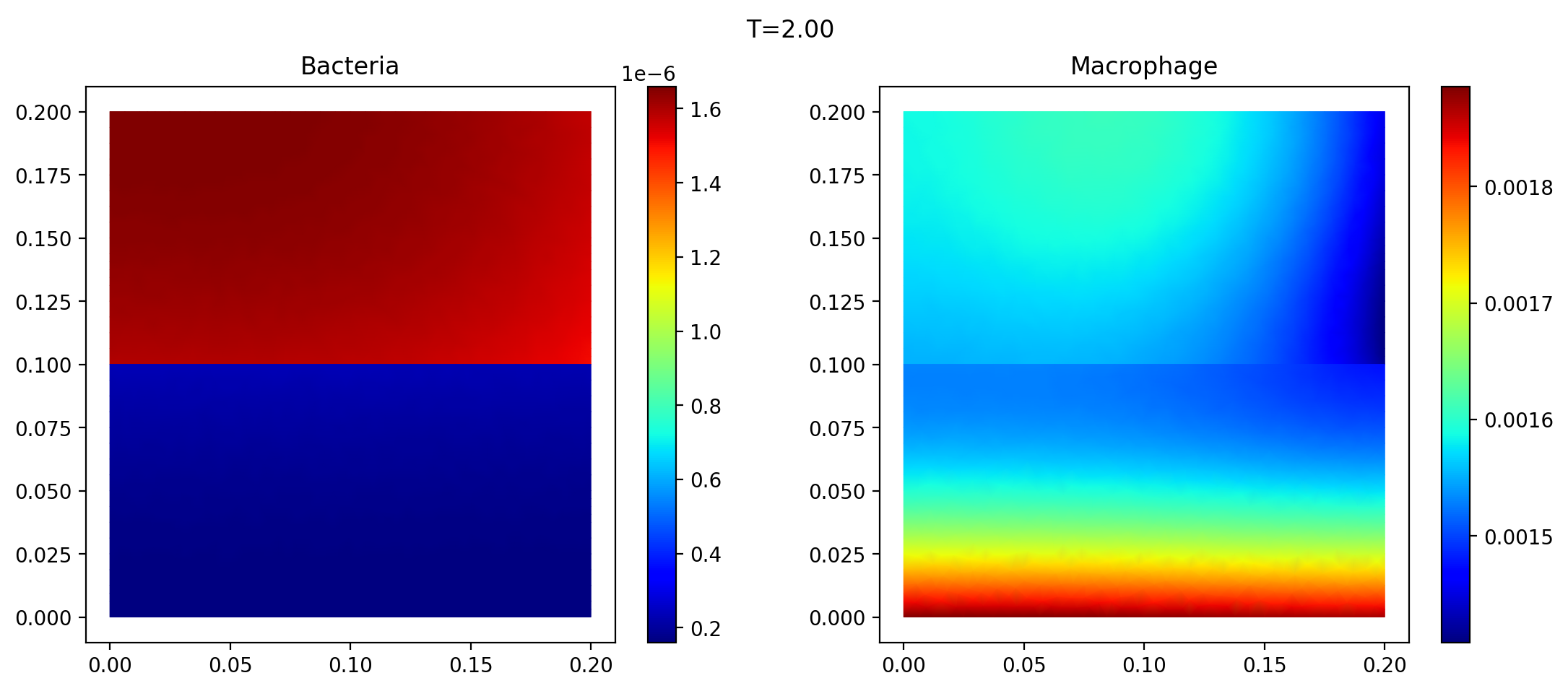}
        \caption{With velocity, $t= 2$ day.}
        \label{fig:appendix-set756-velocity-t2}
    \end{subfigure}

    \caption{
    Evolution of the system for parameter set No.~756, with and without the mucus velocity field.
    (a) Concentration fitting for bacteria and macrophages, with calibration reference data, mucus-region simulation, and tissue-region simulation shown in each plot.
    (b--d) Spatial distributions without mucus velocity field driven at $t=0$, $1$, and $2$ day.
    (e--g) Spatial distributions with the mucus velocity field driven at the same time points.
    Each spatial panel shows bacteria on the left and macrophages on the right.
    }
    \label{fig:appendix-set756-velocity-comparison}
\end{figure}
 \clearpage

\begin{figure}[H]
    \centering
    \begin{subfigure}{0.78\textwidth}
        \centering
        \includegraphics[width=\linewidth,height=0.76\textheight,keepaspectratio]{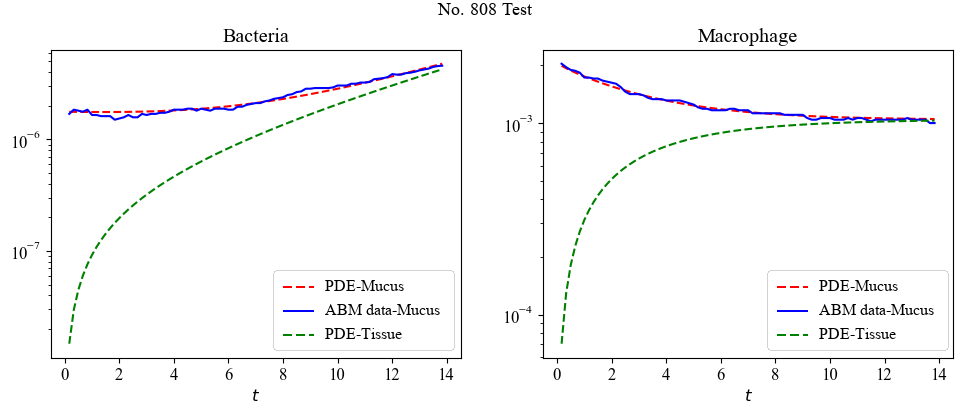}
        \caption{Concentration fitting.}
        \label{fig:appendix-set808-calibration}
    \end{subfigure}

    \vspace{0.5em}
    \begin{subfigure}{0.30\textwidth}
        \centering
        \includegraphics[width=\linewidth,height=0.76\textheight,keepaspectratio]{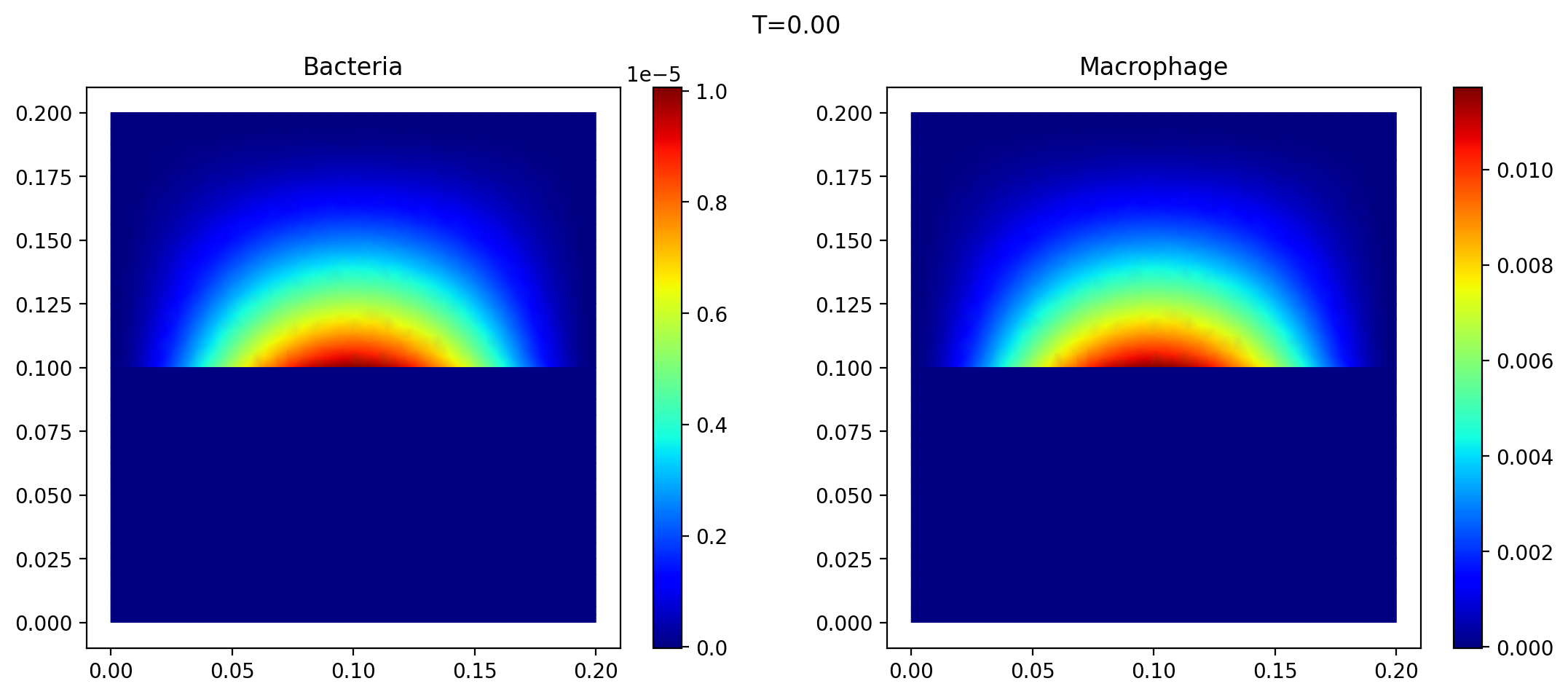}
        \caption{No velocity, $t=0$ day.}
        \label{fig:appendix-set808-no-velocity-t0}
    \end{subfigure}
    \hfill
    \begin{subfigure}{0.30\textwidth}
        \centering
        \includegraphics[width=\linewidth,height=0.76\textheight,keepaspectratio]{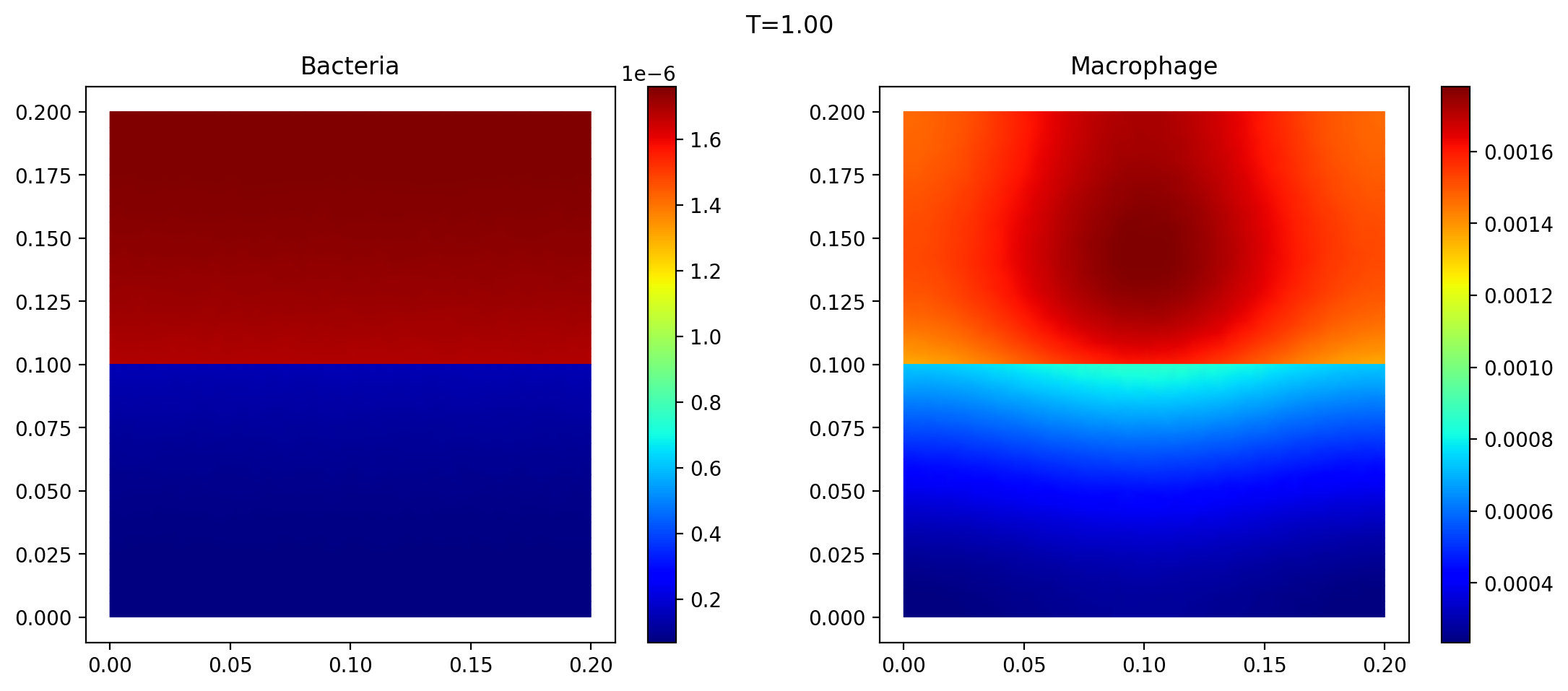}
        \caption{No velocity, $t=1$ day.}
        \label{fig:appendix-set808-no-velocity-t1}
    \end{subfigure}
    \hfill
    \begin{subfigure}{0.30\textwidth}
        \centering
        \includegraphics[width=\linewidth,height=0.76\textheight,keepaspectratio]{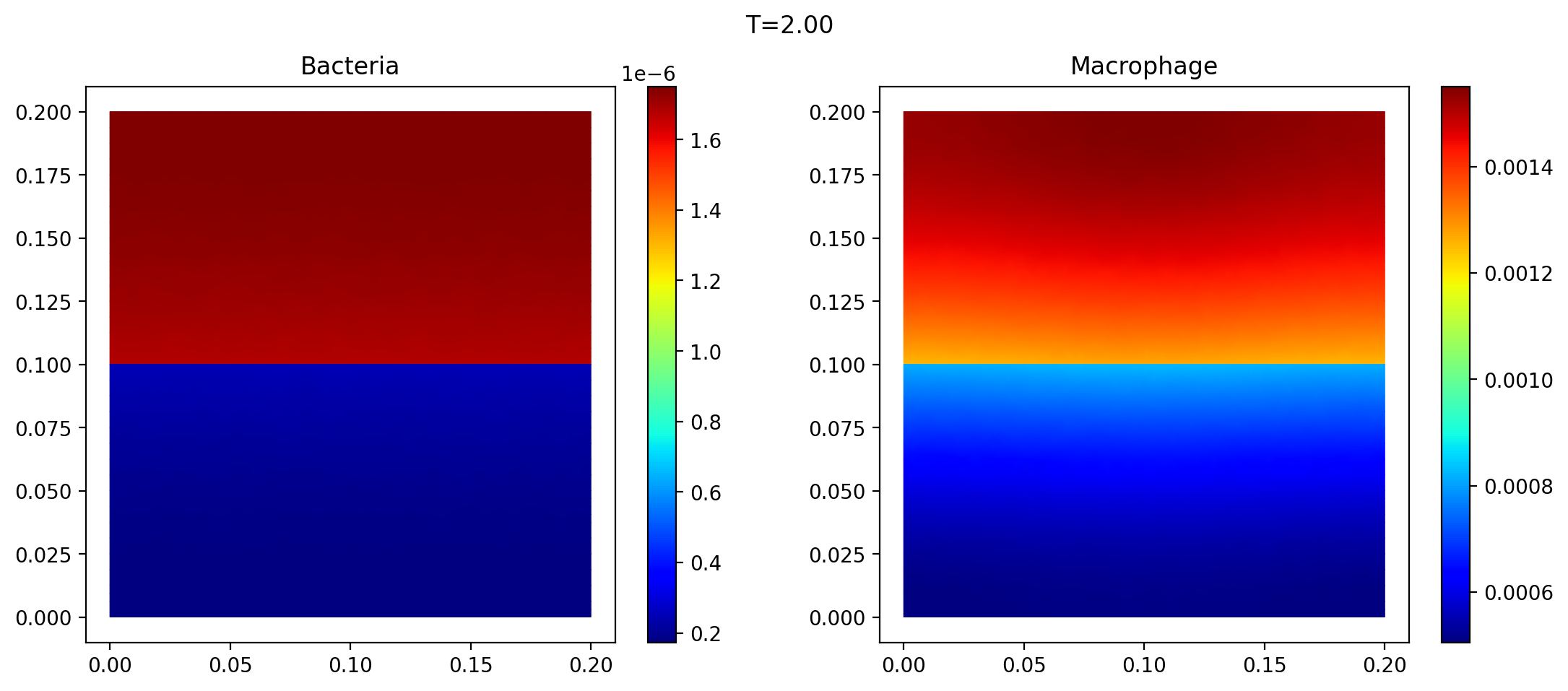}
        \caption{No velocity, $t=2$ day.}
        \label{fig:appendix-set808-no-velocity-t2}
    \end{subfigure}

    \vspace{0.5em}
    \begin{subfigure}{0.30\textwidth}
        \centering
        \includegraphics[width=\linewidth,height=0.76\textheight,keepaspectratio]{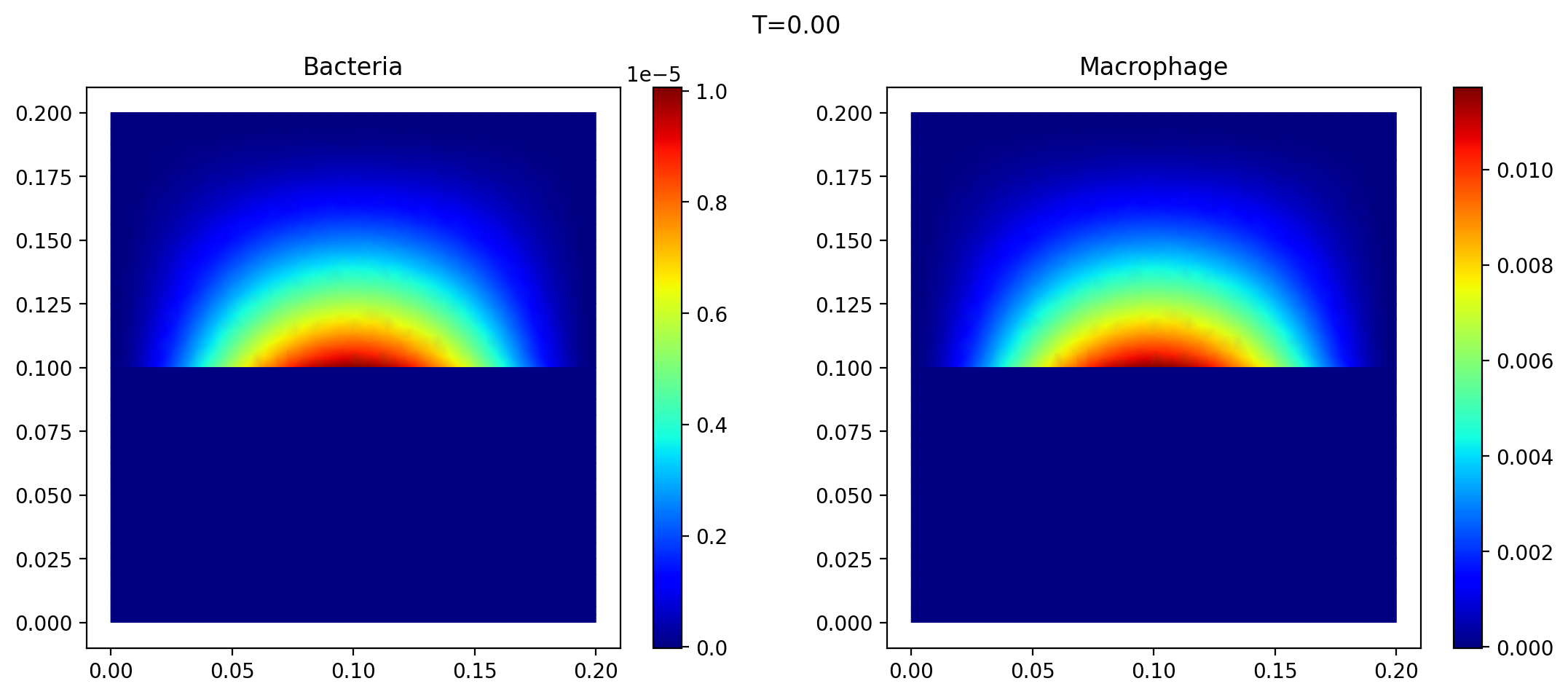}
        \caption{With velocity, $t=0$ day.}
        \label{fig:appendix-set808-velocity-t0}
    \end{subfigure}
    \hfill
    \begin{subfigure}{0.30\textwidth}
        \centering
        \includegraphics[width=\linewidth,height=0.76\textheight,keepaspectratio]{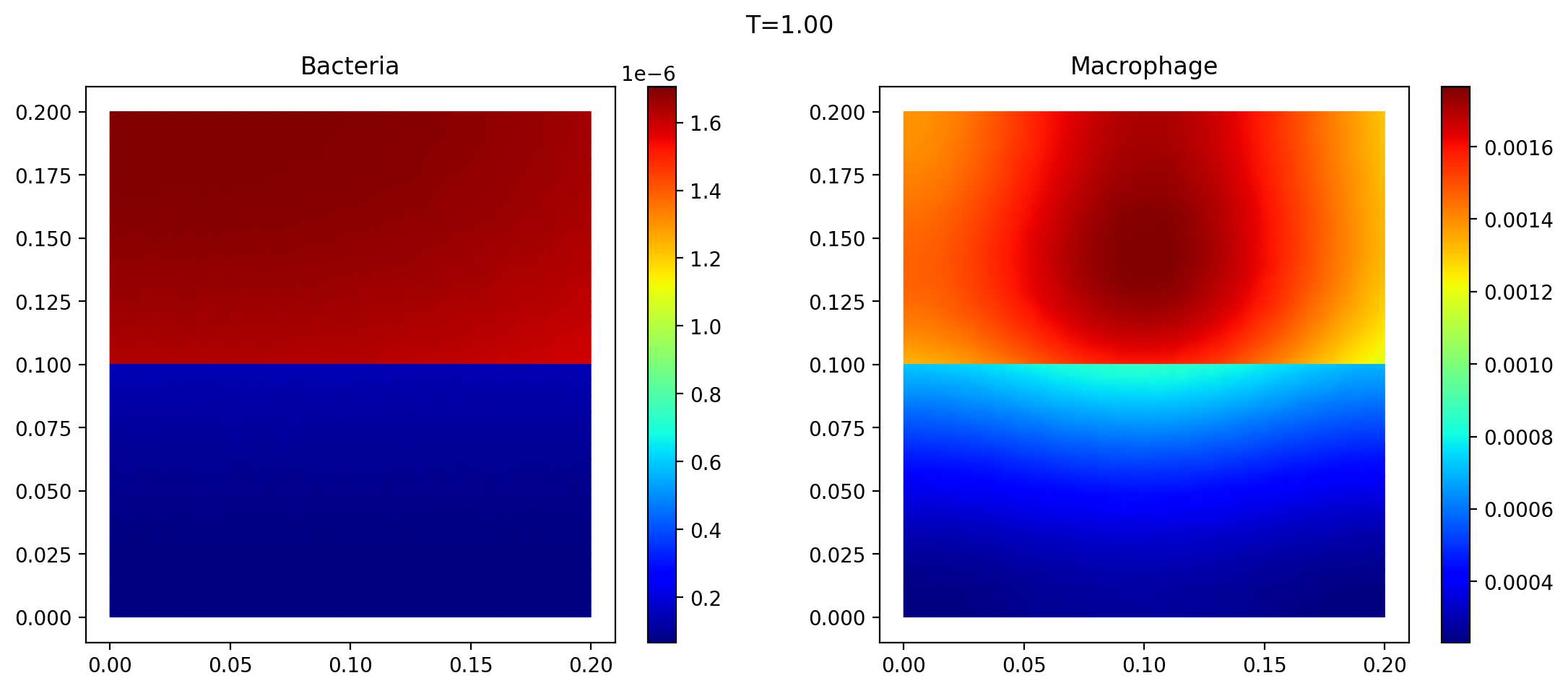}
        \caption{With velocity, $t=1$ day.}
        \label{fig:appendix-set808-velocity-t1}
    \end{subfigure}
    \hfill
    \begin{subfigure}{0.30\textwidth}
        \centering
        \includegraphics[width=\linewidth,height=0.76\textheight,keepaspectratio]{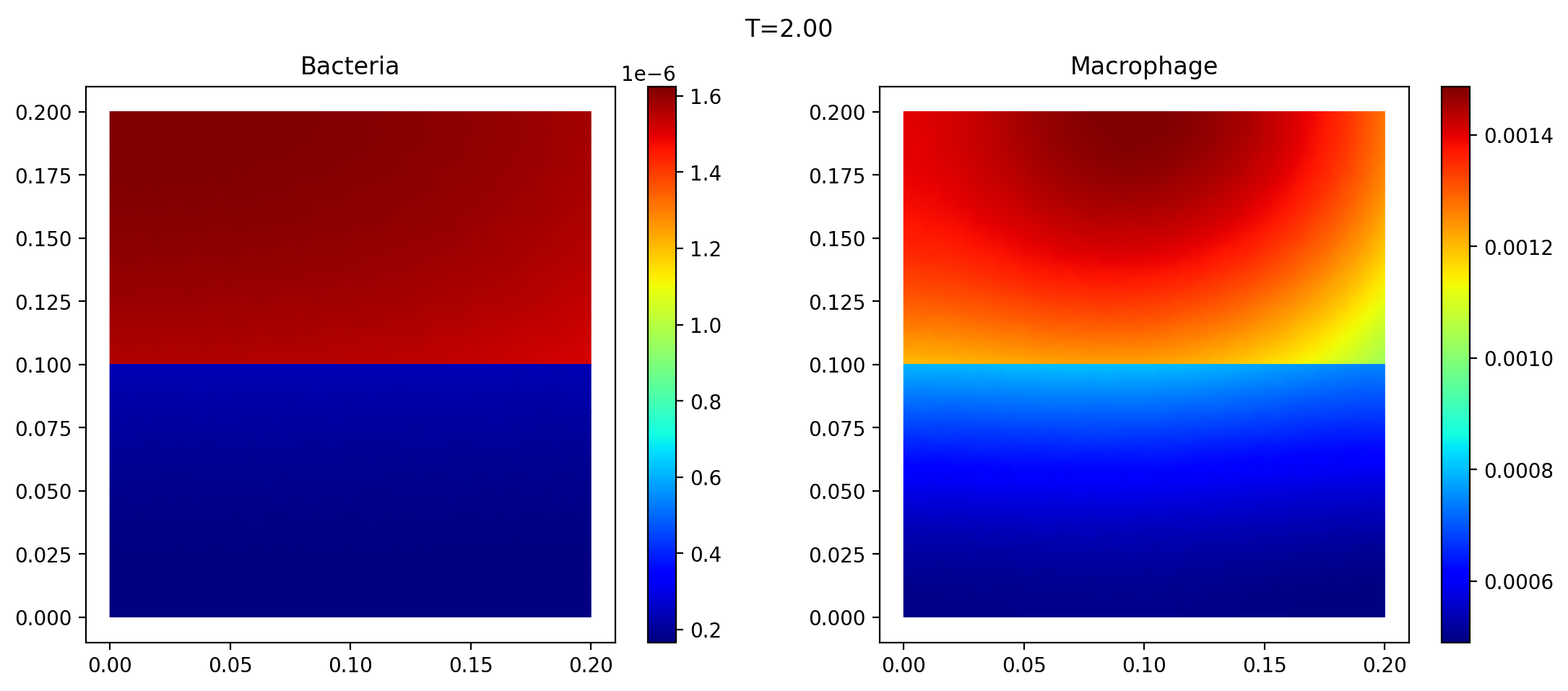}
        \caption{With velocity, $t= 2$ day.}
        \label{fig:appendix-set808-velocity-t2}
    \end{subfigure}

    \caption{
    Evolution of the system for parameter set No.~808, with and without the mucus velocity field.
    (a) Concentration fitting for bacteria and macrophages, with calibration reference data, mucus-region simulation, and tissue-region simulation shown in each plot.
    (b--d) Spatial distributions without mucus velocity field driven at $t=0$, $1$, and $2$ day.
    (e--g) Spatial distributions with the mucus velocity field driven at the same time points.
    Each spatial panel shows bacteria on the left and macrophages on the right.
    }
    \label{fig:appendix-set808-velocity-comparison}
\end{figure}
 \clearpage

\begin{figure}[H]
    \centering
    \begin{subfigure}{0.78\textwidth}
        \centering
        \includegraphics[width=\linewidth,height=0.76\textheight,keepaspectratio]{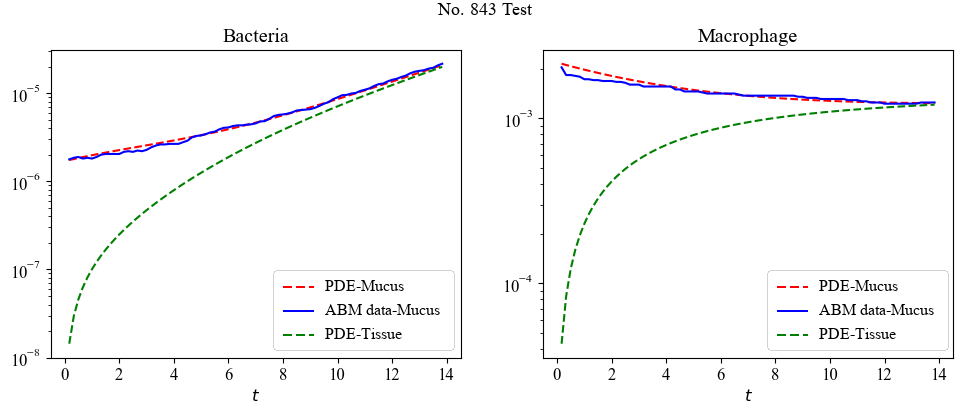}
        \caption{Concentration fitting.}
        \label{fig:appendix-set843-calibration}
    \end{subfigure}

    \vspace{0.5em}
    \begin{subfigure}{0.30\textwidth}
        \centering
        \includegraphics[width=\linewidth,height=0.76\textheight,keepaspectratio]{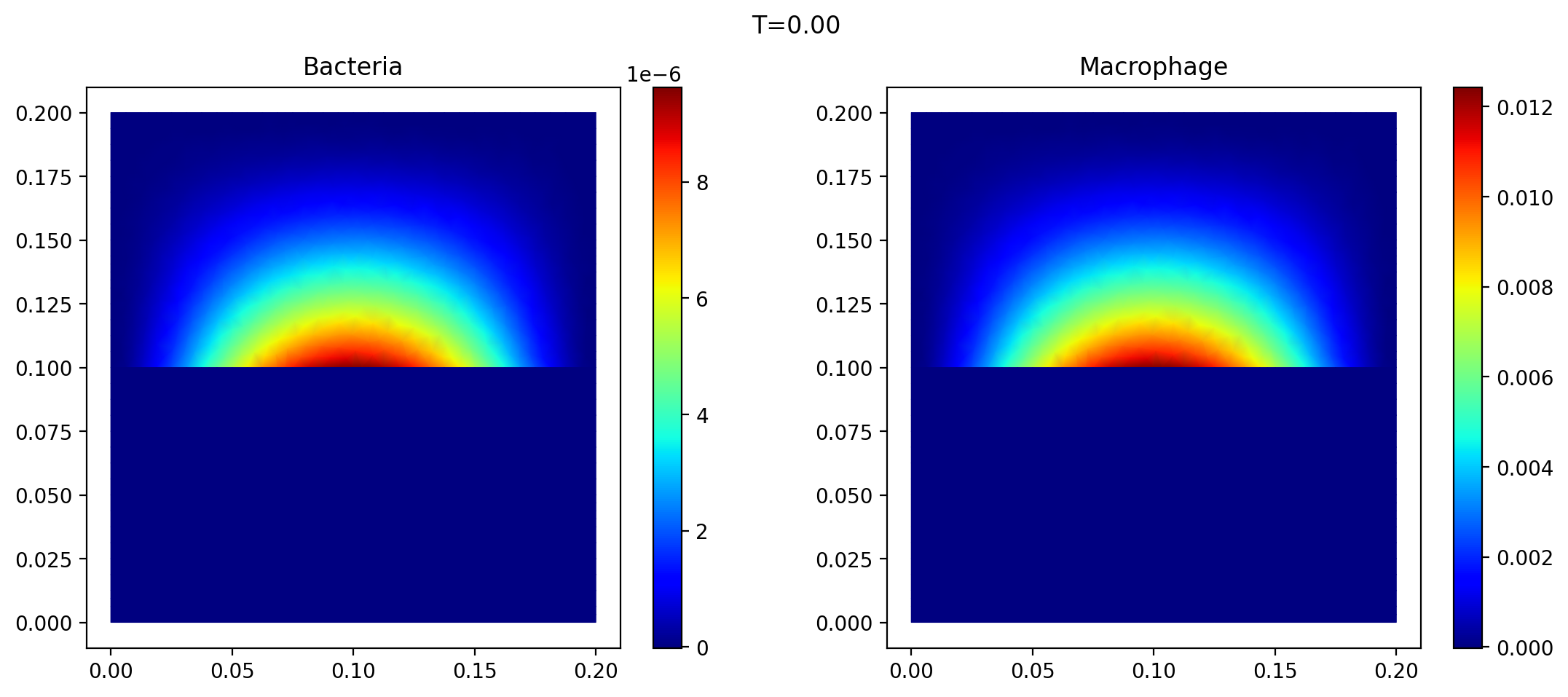}
        \caption{No velocity, $t=0$ day.}
        \label{fig:appendix-set843-no-velocity-t0}
    \end{subfigure}
    \hfill
    \begin{subfigure}{0.30\textwidth}
        \centering
        \includegraphics[width=\linewidth,height=0.76\textheight,keepaspectratio]{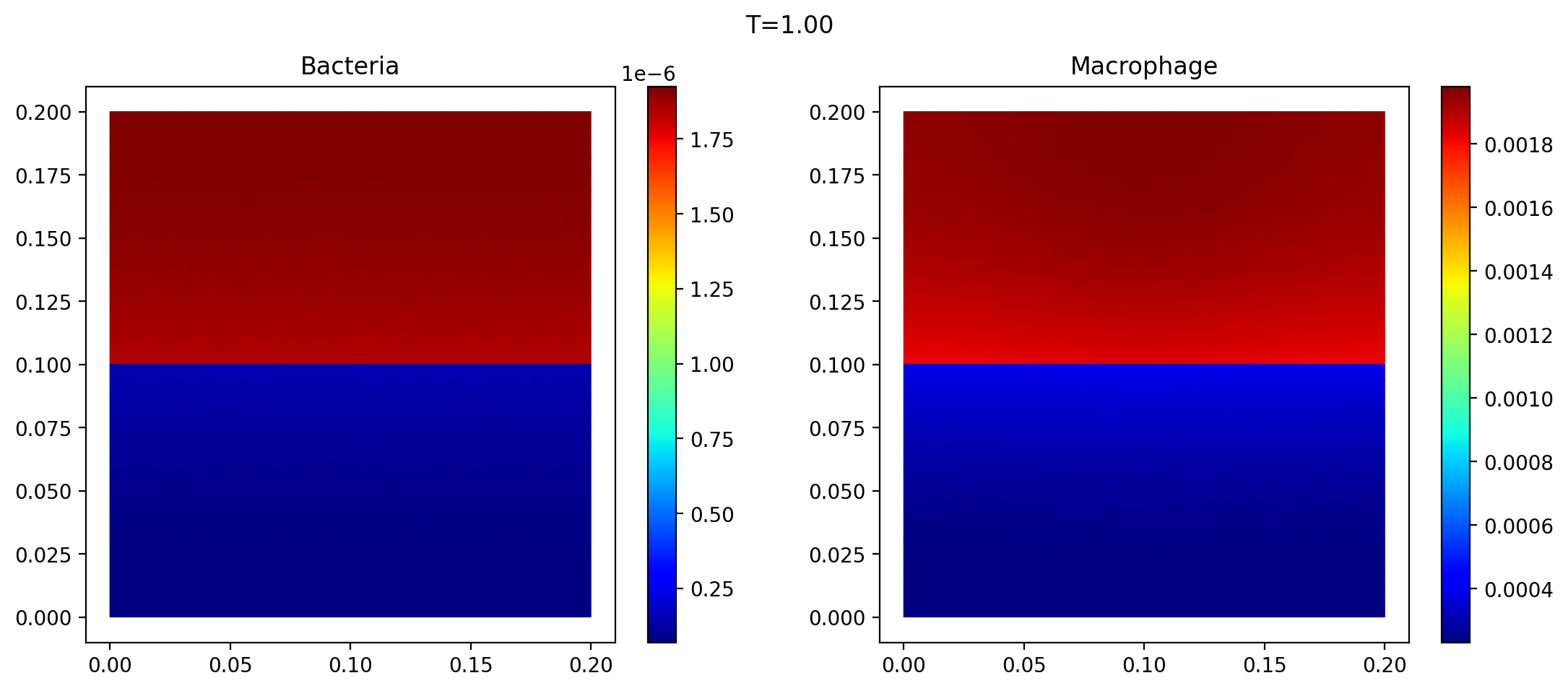}
        \caption{No velocity, $t=1$ day.}
        \label{fig:appendix-set843-no-velocity-t1}
    \end{subfigure}
    \hfill
    \begin{subfigure}{0.30\textwidth}
        \centering
        \includegraphics[width=\linewidth,height=0.76\textheight,keepaspectratio]{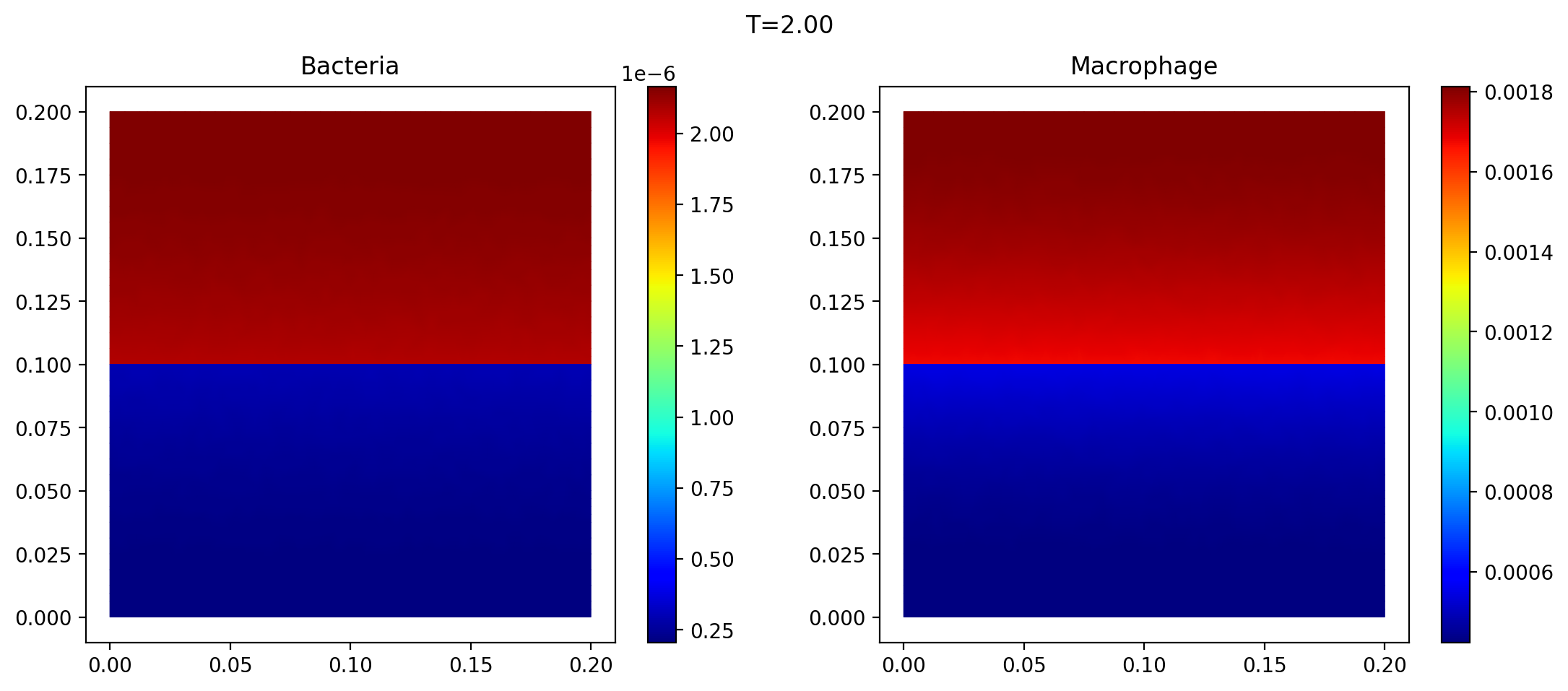}
        \caption{No velocity, $t=2$ day.}
        \label{fig:appendix-set843-no-velocity-t2}
    \end{subfigure}

    \vspace{0.5em}
    \begin{subfigure}{0.30\textwidth}
        \centering
        \includegraphics[width=\linewidth,height=0.76\textheight,keepaspectratio]{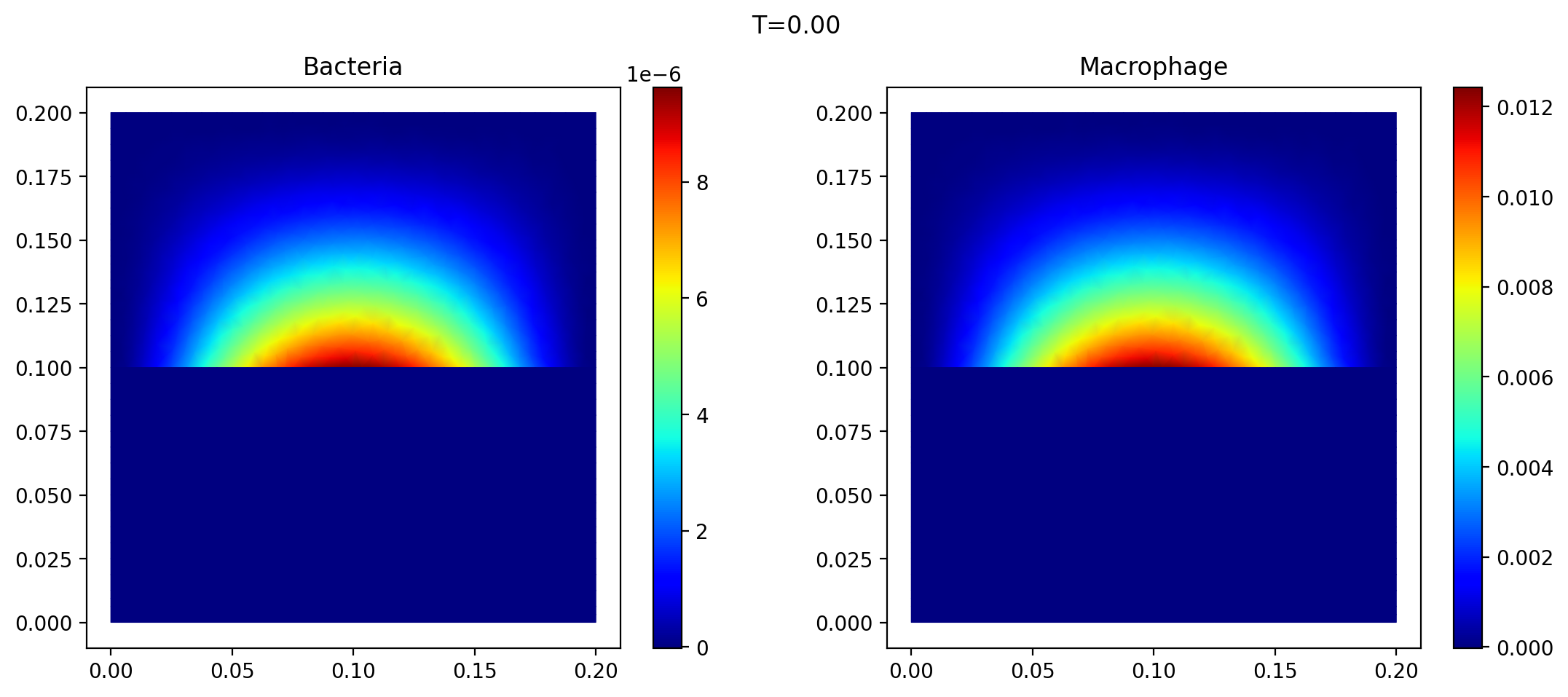}
        \caption{With velocity, $t=0$ day.}
        \label{fig:appendix-set843-velocity-t0}
    \end{subfigure}
    \hfill
    \begin{subfigure}{0.30\textwidth}
        \centering
        \includegraphics[width=\linewidth,height=0.76\textheight,keepaspectratio]{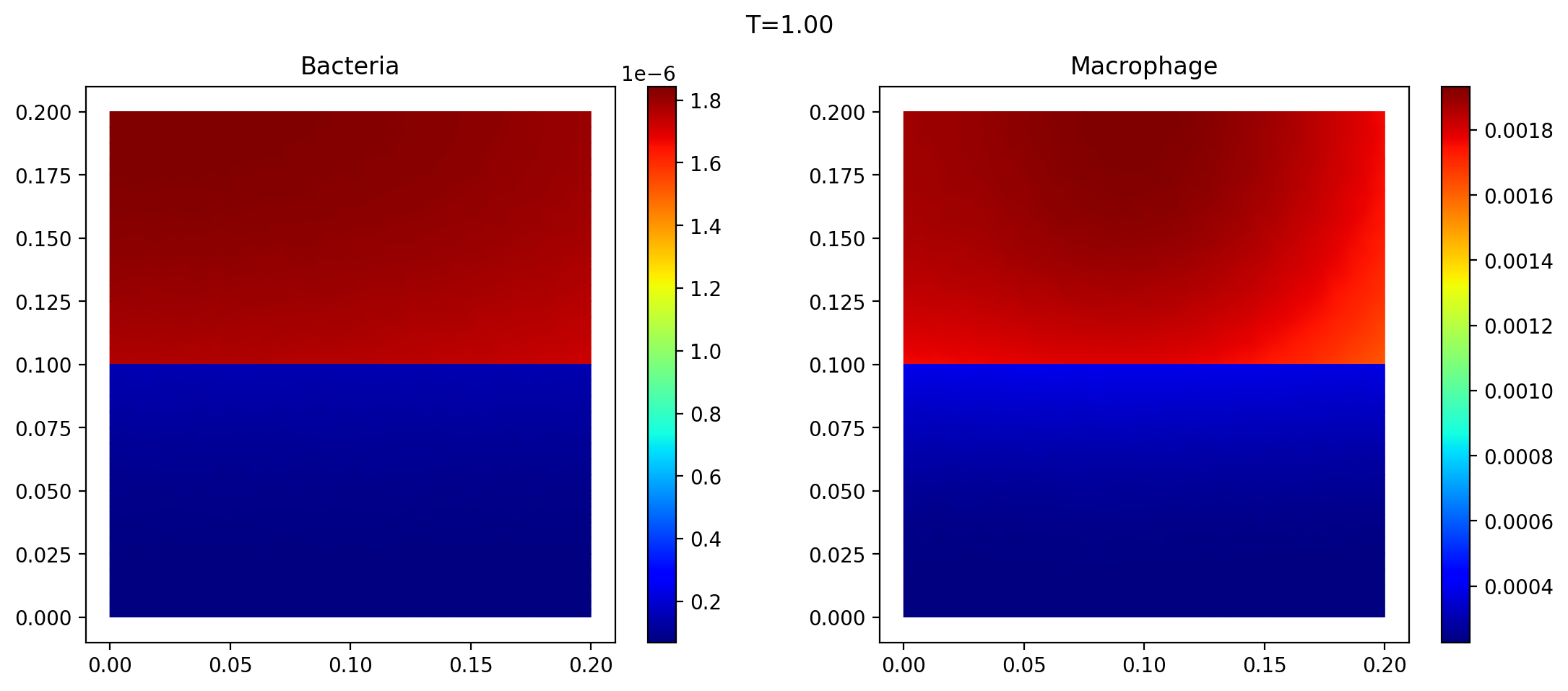}
        \caption{With velocity, $t=1$ day.}
        \label{fig:appendix-set843-velocity-t1}
    \end{subfigure}
    \hfill
    \begin{subfigure}{0.30\textwidth}
        \centering
        \includegraphics[width=\linewidth,height=0.76\textheight,keepaspectratio]{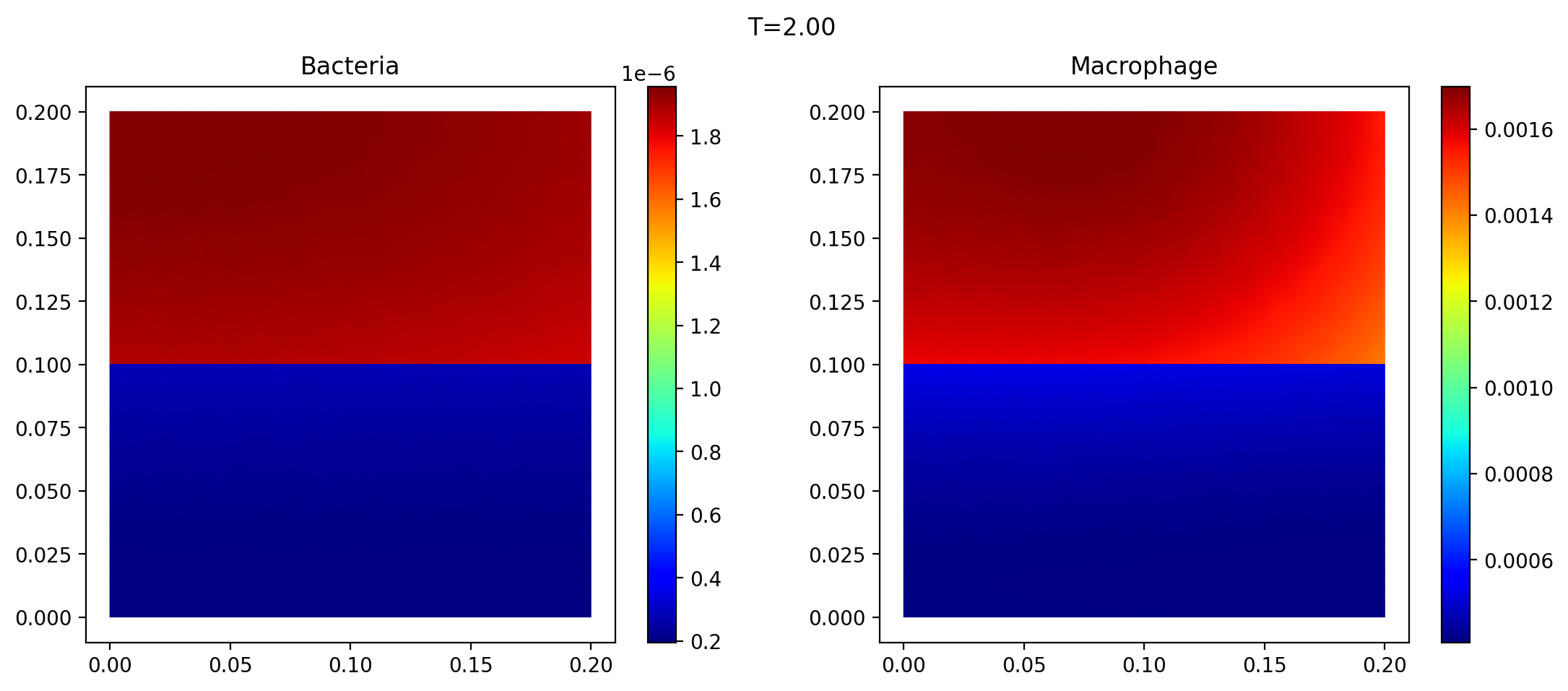}
        \caption{With velocity, $t= 2$ day.}
        \label{fig:appendix-set843-velocity-t2}
    \end{subfigure}

    \caption{
    Evolution of the system for parameter set No.~843, with and without the mucus velocity field.
    (a) Concentration fitting for bacteria and macrophages, with calibration reference data, mucus-region simulation, and tissue-region simulation shown in each plot.
    (b--d) Spatial distributions without mucus velocity field driven at $t=0$, $1$, and $2$ day.
    (e--g) Spatial distributions with the mucus velocity field driven at the same time points.
    Each spatial panel shows bacteria on the left and macrophages on the right.
    }
    \label{fig:appendix-set843-velocity-comparison}
\end{figure}
\clearpage

\section{Full Sensitivity Analysis Results}
\begin{figure}[H]
    \centering
    \includegraphics[width=0.7\linewidth,height=0.76\textheight,keepaspectratio]{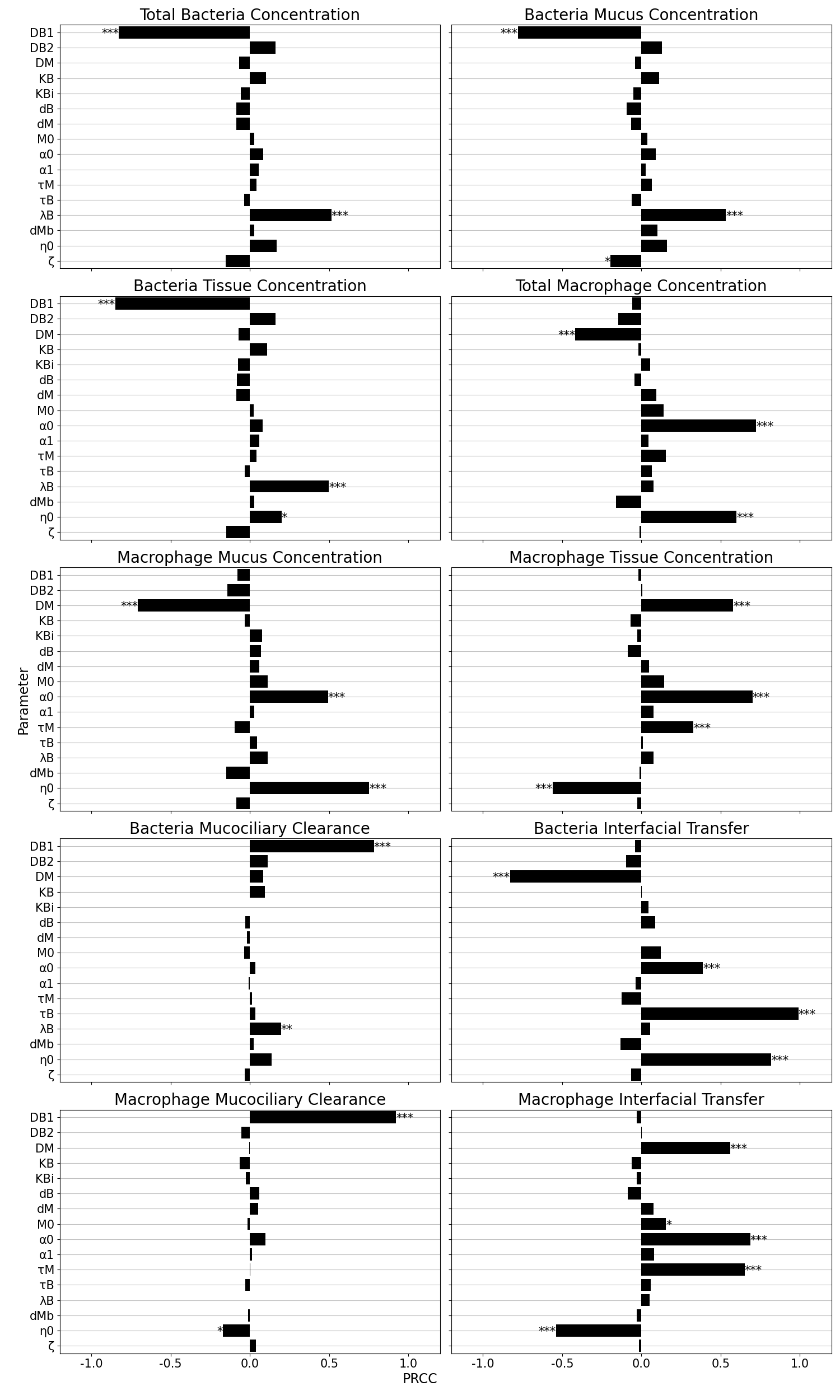}
    \caption{Partial rank correlation coefficients (PRCCs) between model parameters
    and all simulation outputs.}
    \label{fig:supp_full_PRCC}
\end{figure}
\clearpage

\section{Function heatmap of antibiotic bacteria killing rate and viscosity \texorpdfstring{$\eta$}{eta}}
\begin{figure}[H]
	\centering
	\includegraphics[width=0.48\textwidth,height=0.76\textheight,keepaspectratio]{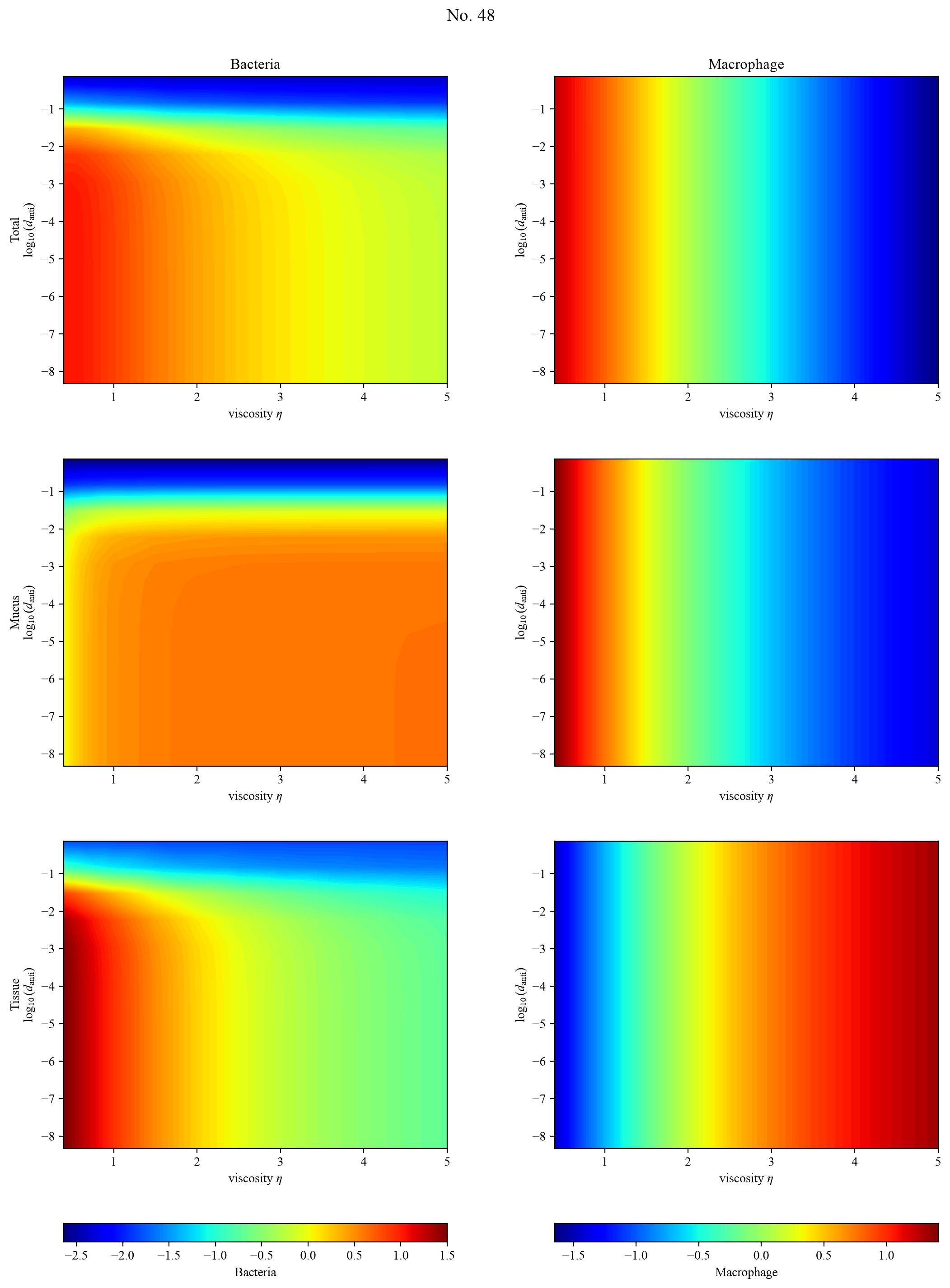}\hfill
	\includegraphics[width=0.48\textwidth,height=0.76\textheight,keepaspectratio]{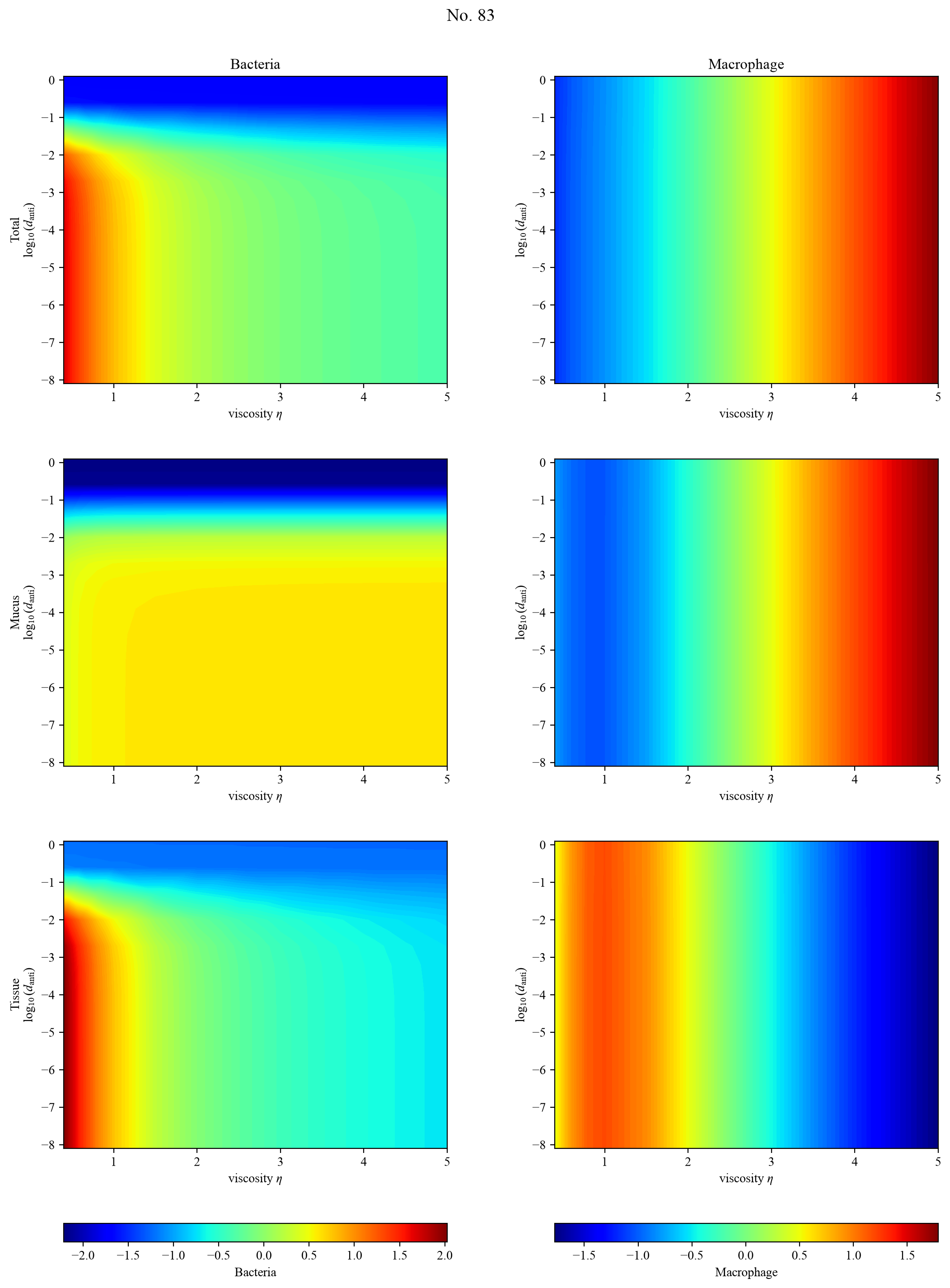}
	\caption{
		Standardized treatment-response heatmaps for parameter sets No.~48 (left) and No.~83 (right). Rows show total, mucus, and tissue responses, and columns show bacteria and macrophages. Mucus and tissue quantities are standardized as $(x-\mu)/\sigma$ using global statistics across all parameter sets; the total response is the sum of the standardized mucus and tissue responses.
	}
	\label{fig:ntm_heatmap_supp_48_83}
\end{figure}

\begin{figure}[H]
	\centering
	\includegraphics[width=0.48\textwidth,height=0.76\textheight,keepaspectratio]{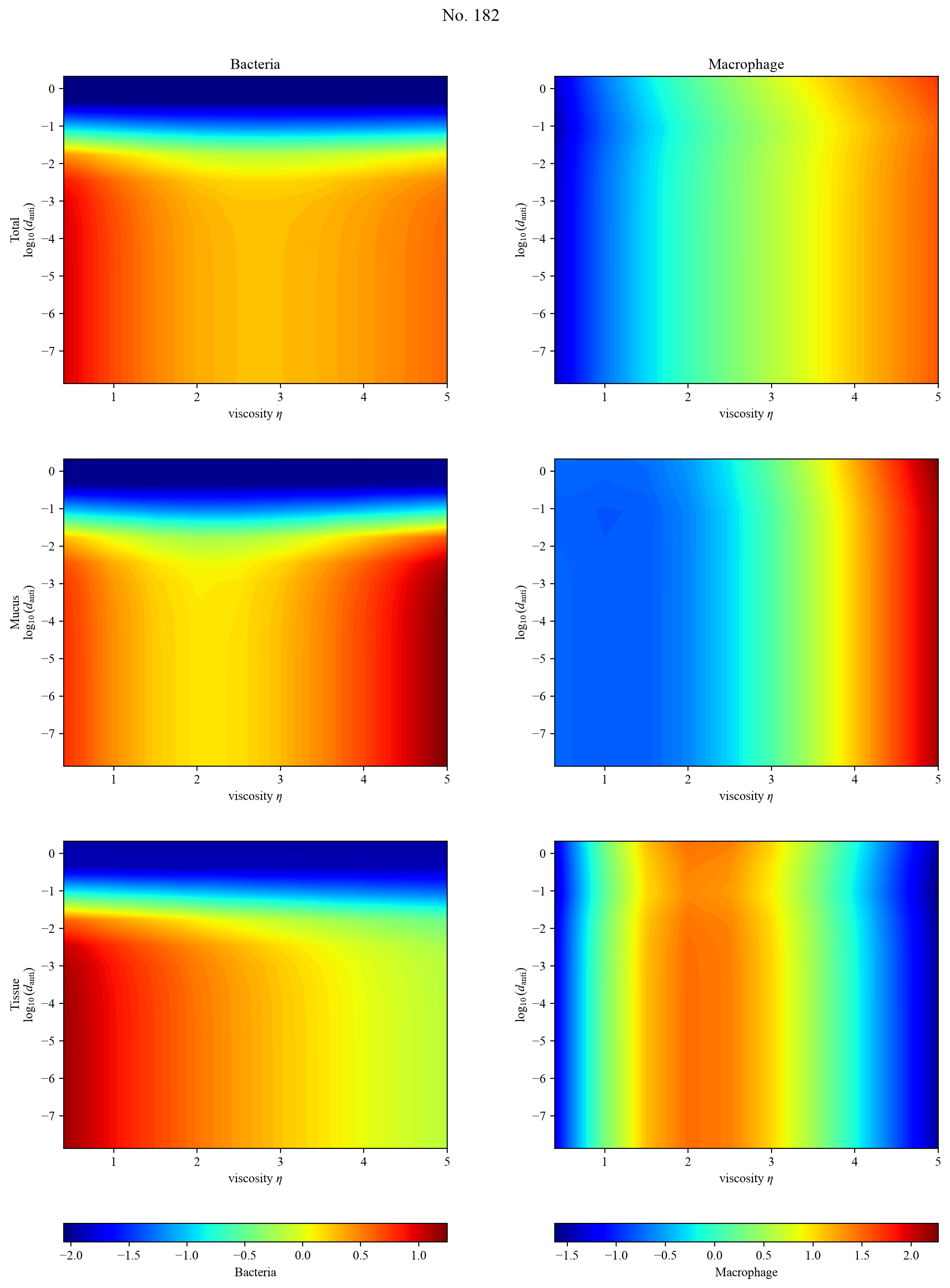}\hfill
	\includegraphics[width=0.48\textwidth,height=0.76\textheight,keepaspectratio]{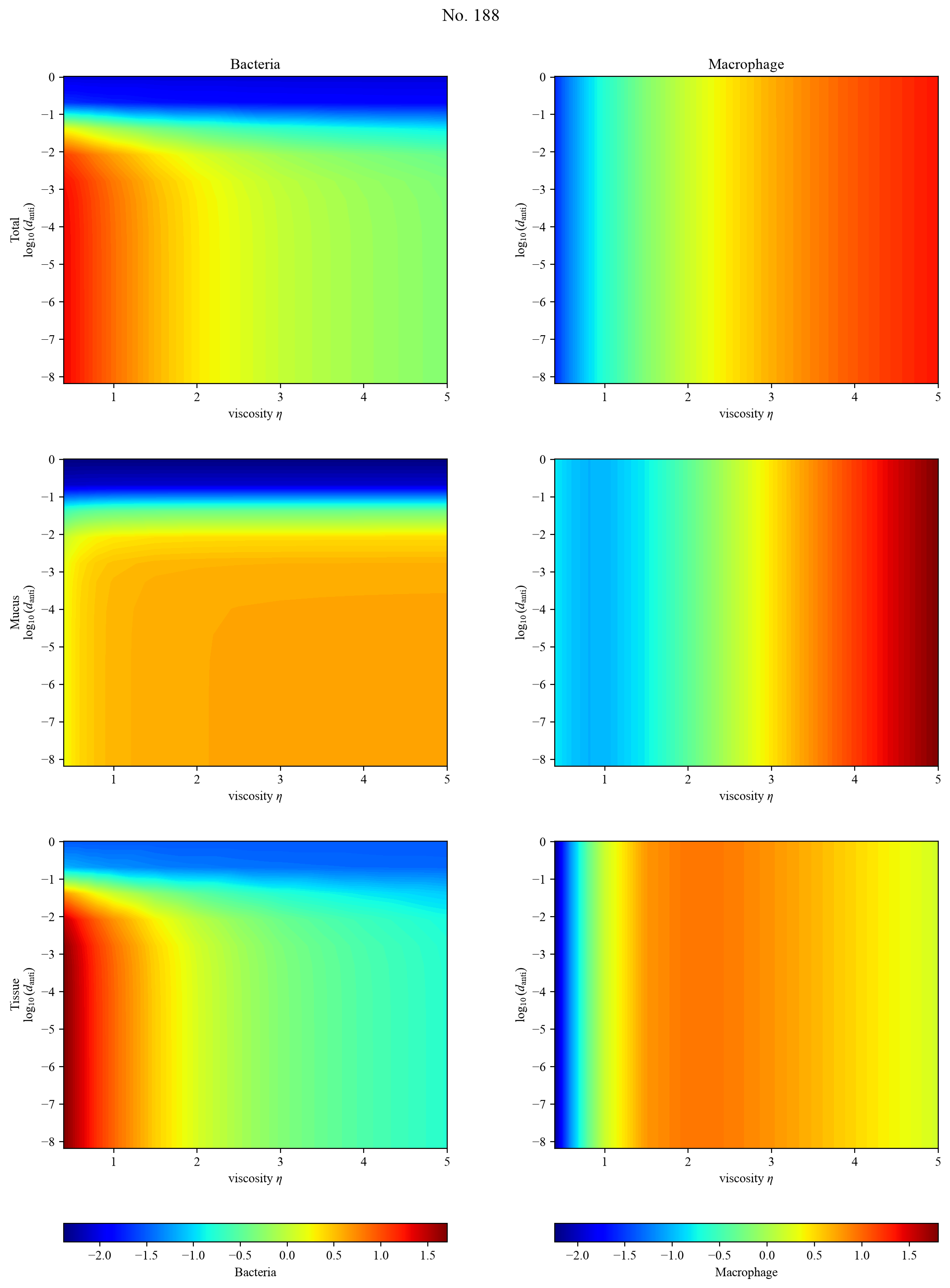}
	\caption{
		Standardized treatment-response heatmaps for parameter sets No.~182 (left) and No.~188 (right).
		Panel layout, normalization, total-response definition, and color-scale conventions are as described in Fig.~\ref{fig:ntm_heatmap_supp_48_83}.
	}
	\label{fig:ntm_heatmap_supp_182_188}
\end{figure}

\begin{figure}[H]
	\centering
	\includegraphics[width=0.48\textwidth,height=0.76\textheight,keepaspectratio]{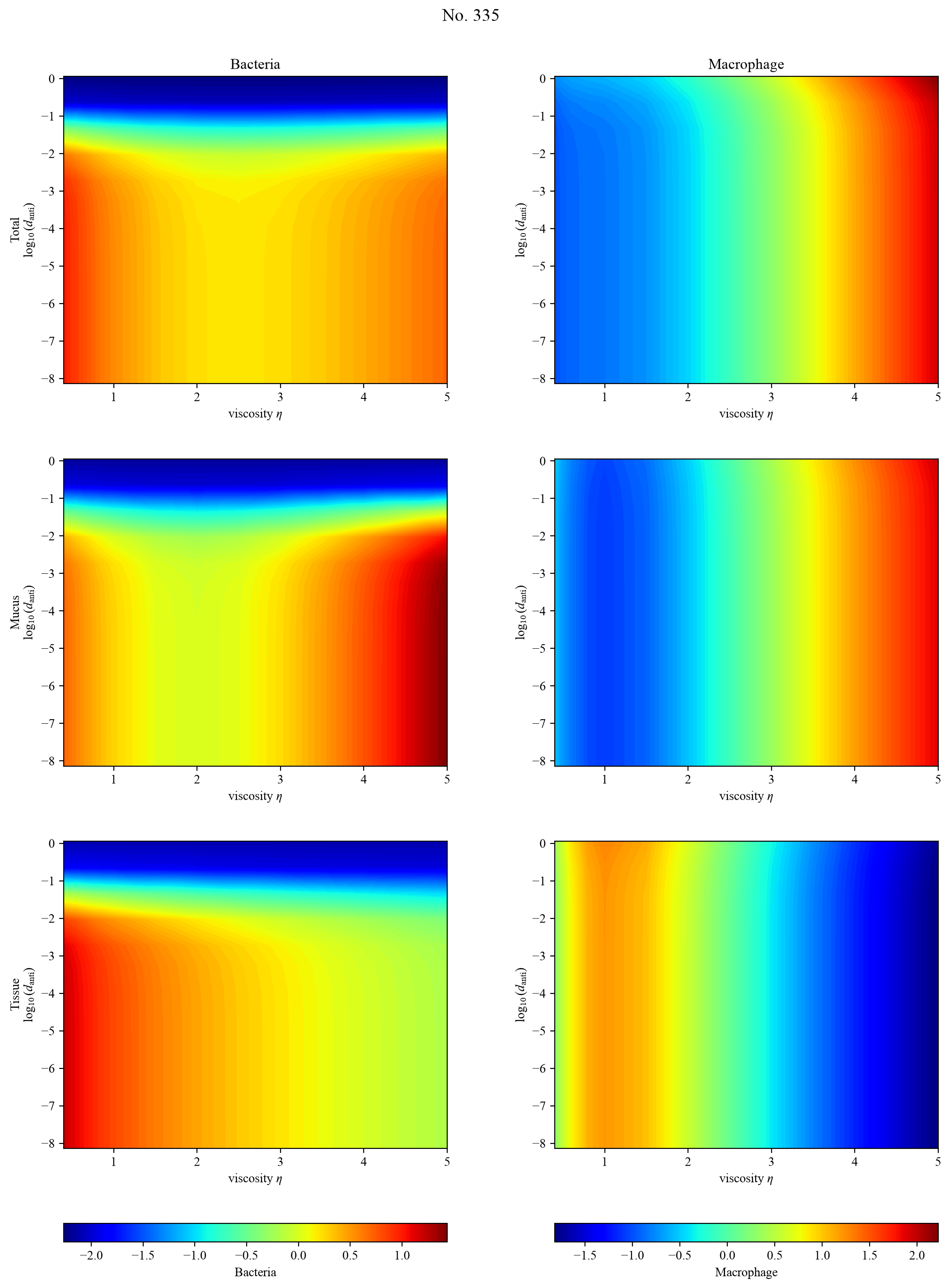}\hfill
	\includegraphics[width=0.48\textwidth,height=0.76\textheight,keepaspectratio]{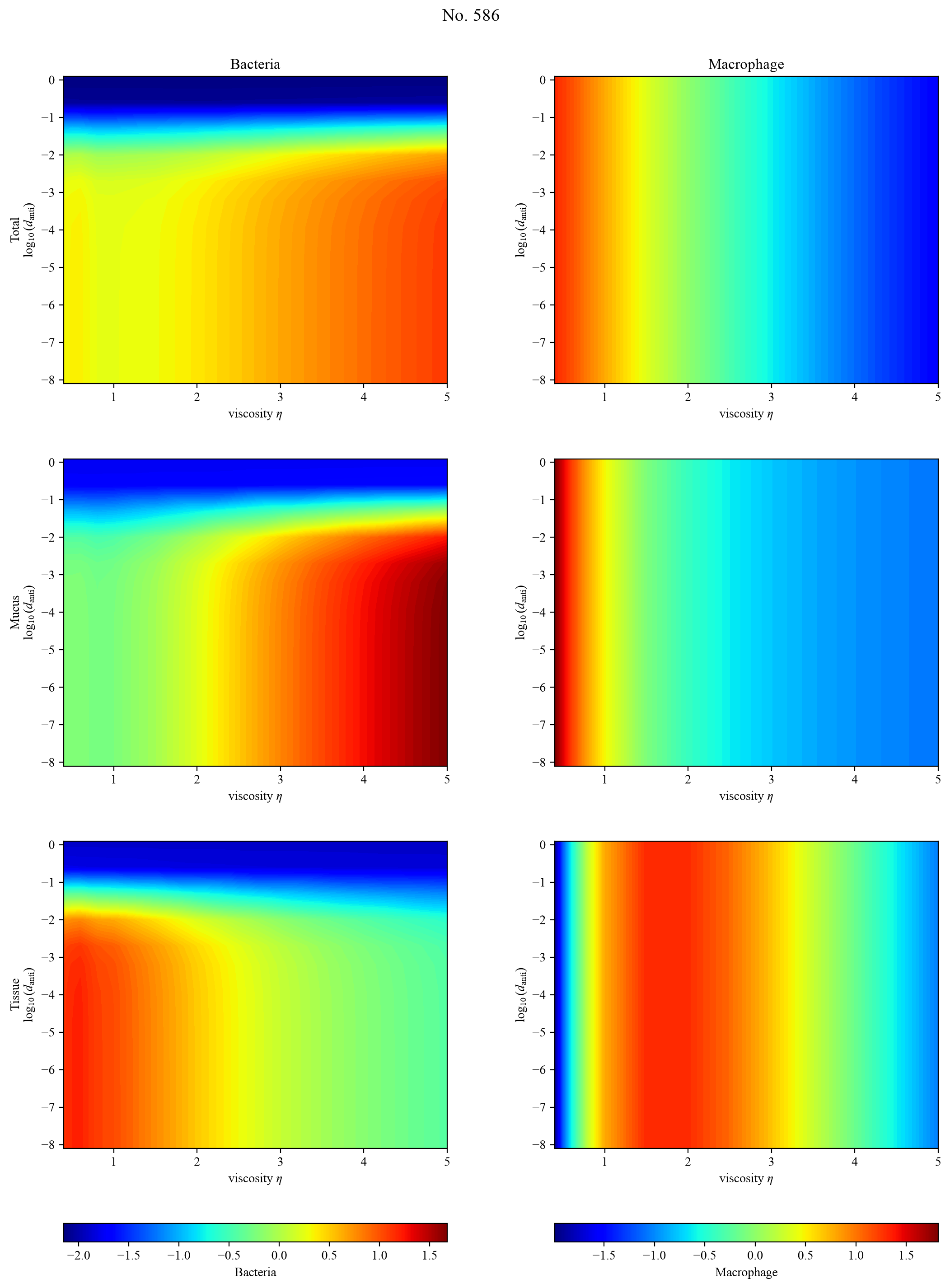}
	\caption{
		Standardized treatment-response heatmaps for parameter sets No.~335 (left) and No.~586 (right).
		Panel layout, normalization, total-response definition, and color-scale conventions are as described in Fig.~\ref{fig:ntm_heatmap_supp_48_83}.
	}
	\label{fig:ntm_heatmap_supp_335_586}
\end{figure}

\begin{figure}[H]
	\centering
	\includegraphics[width=0.48\textwidth,height=0.76\textheight,keepaspectratio]{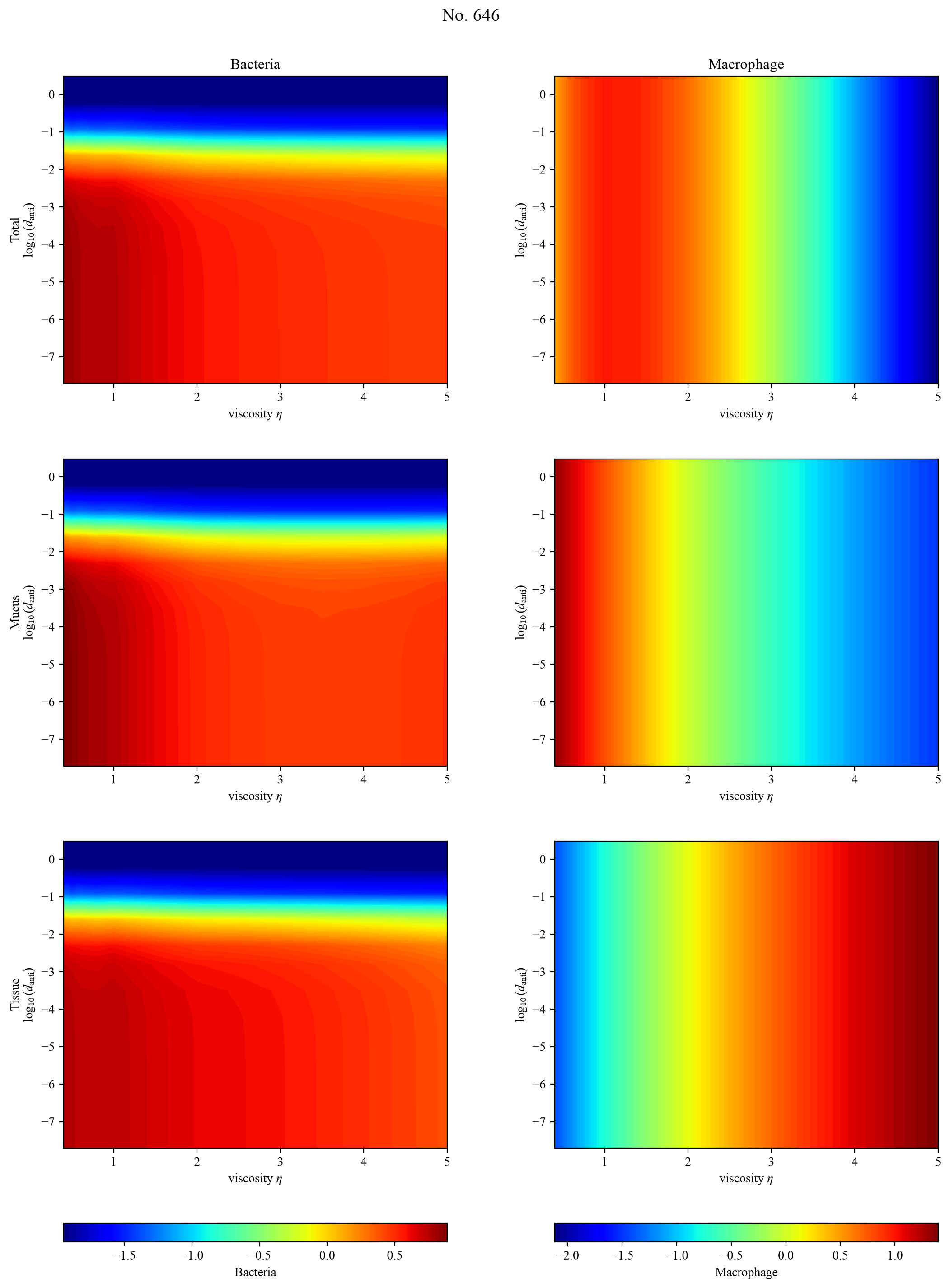}\hfill
	\includegraphics[width=0.48\textwidth,height=0.76\textheight,keepaspectratio]{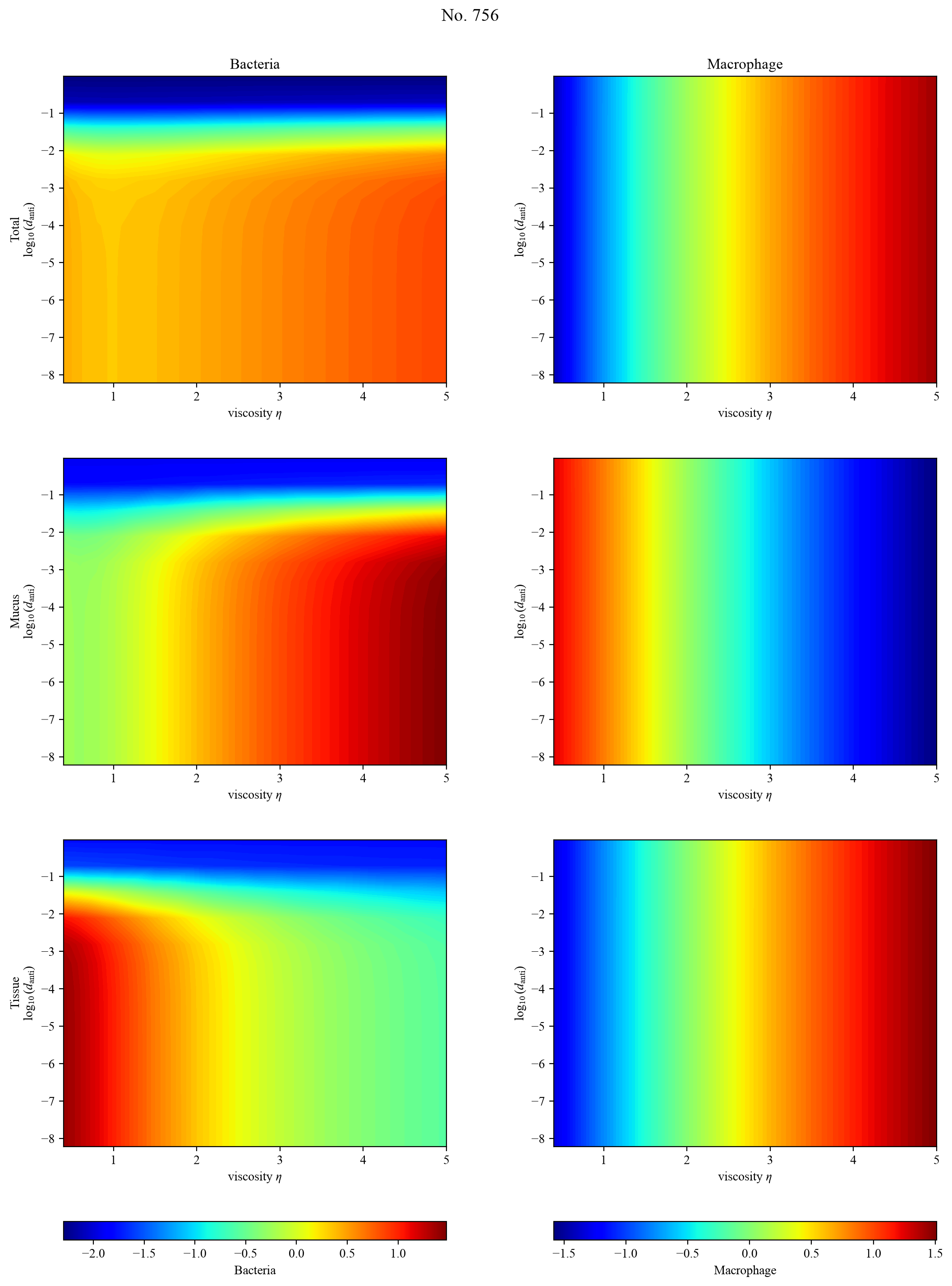}
	\caption{
		Standardized treatment-response heatmaps for parameter sets No.~646 (left) and No.~756 (right).
		Panel layout, normalization, total-response definition, and color-scale conventions are as described in Fig.~\ref{fig:ntm_heatmap_supp_48_83}.
	}
	\label{fig:ntm_heatmap_supp_646_756}
\end{figure}

\begin{figure}[H]
	\centering
	\includegraphics[width=0.48\textwidth,height=0.76\textheight,keepaspectratio]{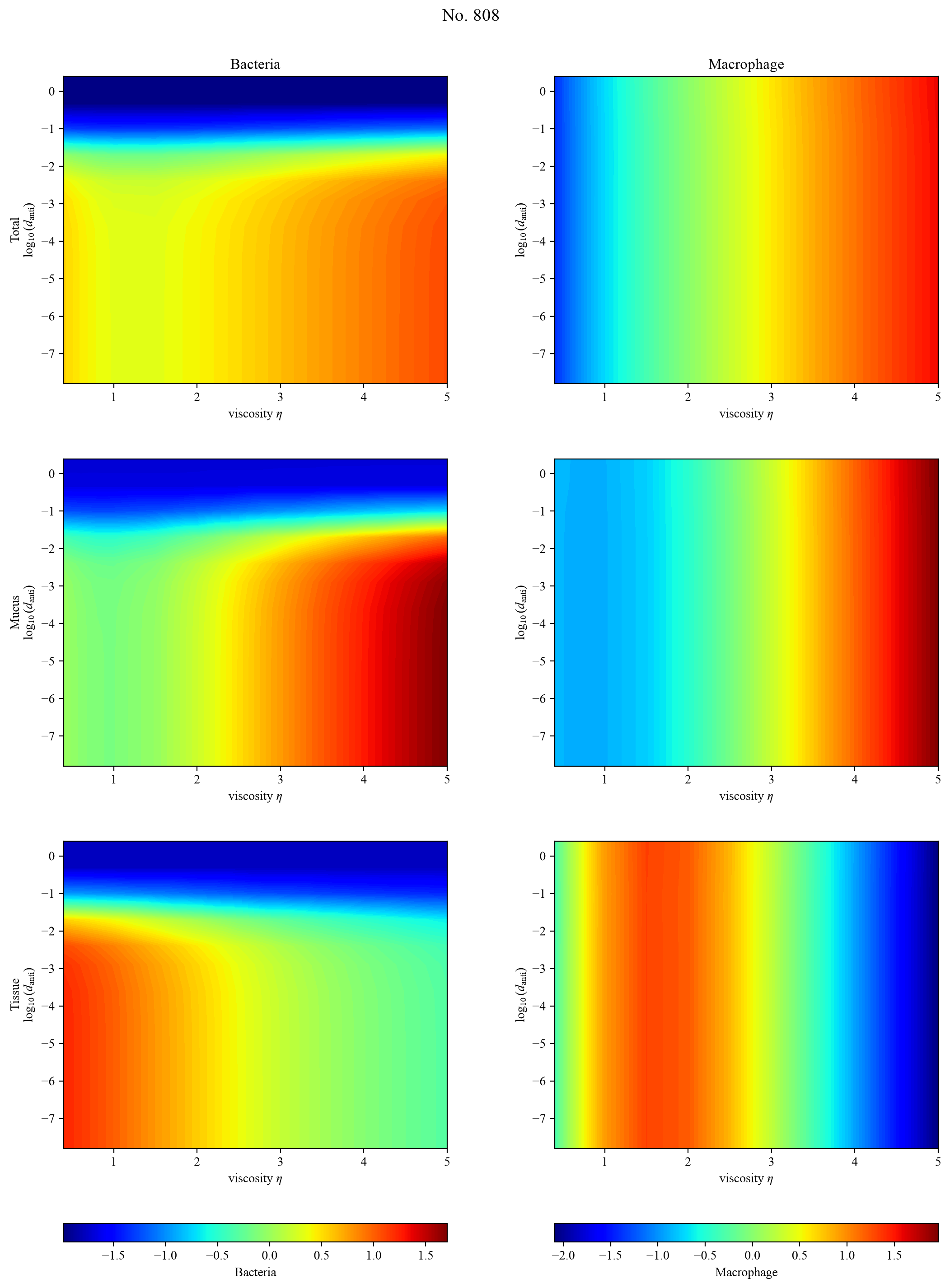}\hfill
	\includegraphics[width=0.48\textwidth,height=0.76\textheight,keepaspectratio]{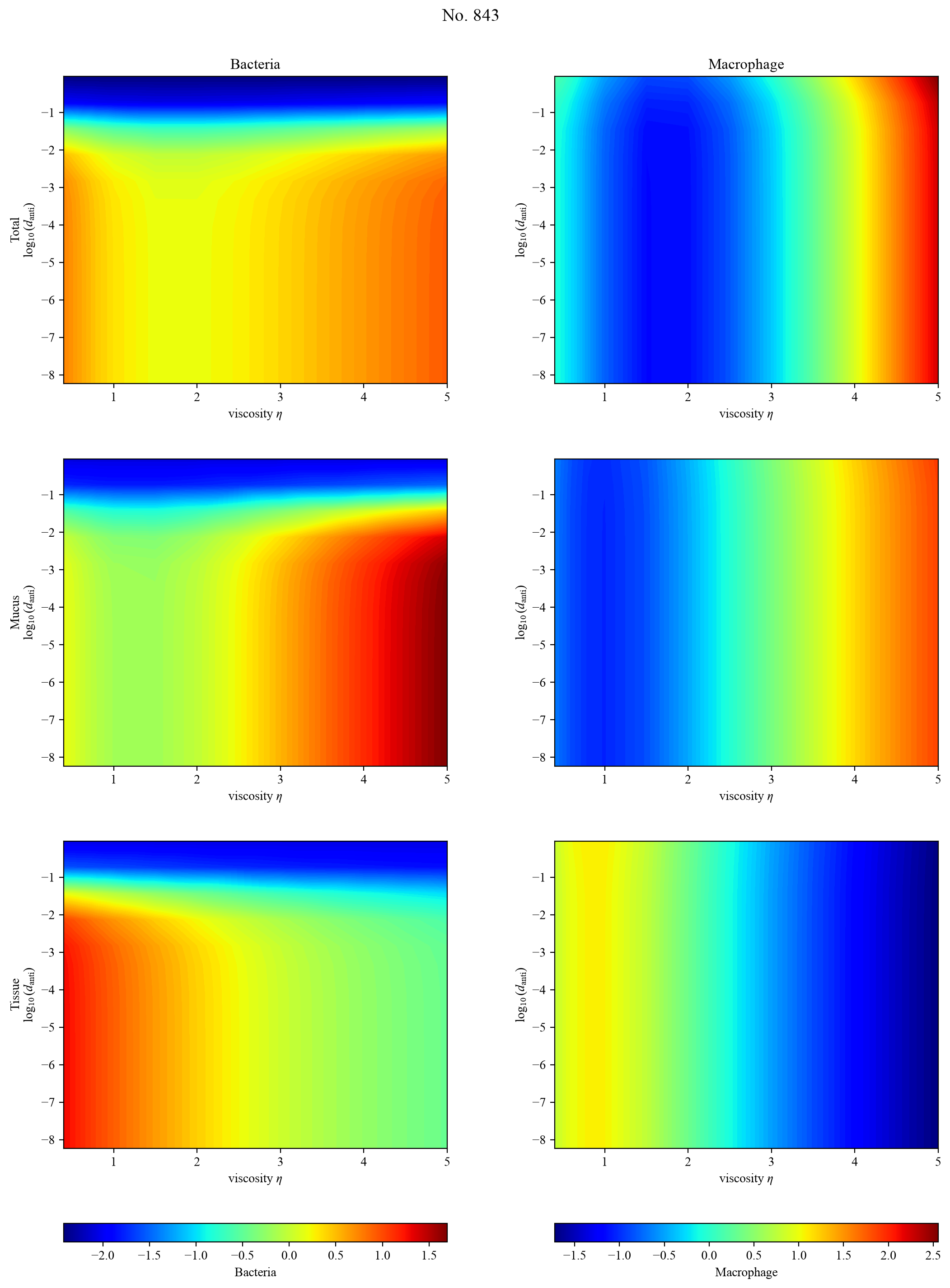}
	\caption{
		Standardized treatment-response heatmaps for parameter sets No.~808 (left) and No.~843 (right).
		Panel layout, normalization, total-response definition, and color-scale conventions are as described in Fig.~\ref{fig:ntm_heatmap_supp_48_83}.
	}
	\label{fig:ntm_heatmap_supp_808_843}
\end{figure}

\clearpage
\bibliographystyle{unsrtnat}
\bibliography{ref}
\end{document}